\newtheorem{thm}{Theorem}[section]
\newtheorem{prop}[thm]{Proposition}
\theoremstyle{definition}
\newtheorem{assum}[thm]{Assumption}
\newtheorem{rhp}[thm]{Riemann-Hilbert Problem}
\def\@biblabel#1{[#1]}
\makeatletter \@addtoreset{equation}{section}
\begin{document}
%\begin{CJK*}{GBK}{song}

\begin{titlepage}
\title{\bf{Long time and Painlev\'{e}-type asymptotics  for the defocusing Hirota
equation with finite density initial data%via the Fokas  method
\footnote{%Project supported by the Fundamental Research Fund for the Central Universities under the Grant no. 2017XKQY101.\protect\\
Corresponding authors.\protect\\
\hspace*{3ex} E-mail addresses: ychen@sei.ecnu.edu.cn (Y. Chen)}
}}
\author{Wei-Qi Peng$^{a}$, Yong Chen$^{a,b,*}$\\
%%%%%%%%%%%%%%%%%%%%%%%%%%%%%%%%%%%%%%%%%%%%%%%%%%%%%%%%%%%%%%%%%%%%%%%%%%%%%%%%%%%%%%%%%
%%%%%              以下两行为作者单位
%%%%%%%%%%%%%%%%%%%%%%%%%%%%%%%%%%%%%%%%%%%%%%%%%%%%%%%%%%%%%%%%%%%%%%%%%%%%%%%%%%%%%%%%%
\small \emph{$^{a}$School of Mathematical Sciences, Shanghai Key Laboratory of PMMP} \\
\small \emph{East China Normal University, Shanghai, 200241, China} \\
\small \emph{$^{b}$College of Mathematics and Systems Science, Shandong University }\\
\small \emph{of Science and Technology, Qingdao, 266590, China} \\
\date{}}
\thispagestyle{empty}
\end{titlepage}
\maketitle

\vspace{-0.5cm}
\begin{center}
\rule{15cm}{1pt}\vspace{0.3cm}

\parbox{15cm}{\small
{\bf Abstract}\\
\hspace{0.5cm}  In this work, we consider the  Cauchy problem for the defocusing Hirota equation with a nonzero background
\begin{align}
\begin{cases}
iq_{t}+\alpha\left[q_{xx}-2\left(\left\vert q\right\vert^{2}-1\right)q\right]+i\beta\left(q_{xxx}-6\left\vert q\right\vert^{2}q_{x}\right)=0,\quad (x,t)\in \mathbb{R}\times(0,+\infty),\\
q(x,0)=q_{0}(x),\qquad \underset{x\rightarrow\pm\infty 1}{\lim} q_{0}(x)=\pm 1, \qquad q_{0}\mp 1\in H^{4,4}(\mathbb{R}).
\end{cases}
\nonumber
\end{align}
 According to the Riemann-Hilbert problem representation of the Cauchy problem and the $\bar{\partial}$ generalization of the nonlinear steepest descent method, we find  different  long time asymptotics types for the defocusing Hirota
equation in oscillating region and transition region,  respectively.
For the oscillating region $\xi<-8$, four phase points appear on the jump contour $\mathbb{R}$, which arrives at  an asymptotic expansion,given by
\begin{align}
q(x,t)=-1+t^{-1/2}h+O(t^{-3/4}).\nonumber
\end{align}
It consists of three terms. The first term $-1$ is leading term representing a nonzero background, the second term $t^{-1/2}h$ originates from the continuous spectrum
and the third term $O(t^{-3/4})$ is the error term  due to pure $\bar{\partial}$-RH problem.
For the transition region $\vert\xi+8\vert t^{2/3}<C$, three phase points raise on the jump contour $\mathbb{R}$. Painlev\'{e} asymptotics expansion is obtained
\begin{align}
q(x,t)=-1-(\frac{15}{4}t)^{-1/3}\varrho+O(t^{-1/2}),\nonumber
\end{align}
in which the leading term is a solution to the Painlev\'{e} II equation, the last term is
a residual error being from pure $\bar{\partial}$-RH problem and parabolic cylinder model.
}

\vspace{0.5cm}
\parbox{15cm}{\small{

\vspace{0.3cm} \emph{Key words:} Defocusing Hirota equation; $\bar{\partial}$ steepest descent method; Long-time asymptotics; Painlev\'{e} asymptotic.\\

\emph{PACS numbers:}  02.30.Ik, 05.45.Yv, 04.20.Jb. } }
\end{center}
\vspace{0.3cm} \rule{15cm}{1pt} \vspace{0.2cm}

\section{Introduction}
The classis nonlinear Schr\"{o}dger (NLS) equation has a key action in describing the propagation of a picosecond optical pulse. However, the more complex nonlinear wave phenomena often require the description of the high order NLS equation. The Hirota equation \cite{chen-CNSNS34}, as an important higher-order NLS equations, contains higher-order dispersive and nonlinear effects such as third-order dispersion, self-frequency shift, and self-steepening arising from the
stimulated Raman scattering. Thus, it can be regarded as a more accurate model to study ultra-short optical pulse propagation \cite{chen-CNSNS6,chen-CNSNS31,chen-CNSNS32,chen-CNSNS33}.
In this paper, we investigate the long time and Painlev\'{e}-type asymptotics behavior for the Cauchy problem
of the defocusing Hirota equation
under nonzero boundary conditions(NZBCs)
\begin{align}\label{1}
\begin{cases}
iq_{t}+\alpha\left[q_{xx}-2\left(\left\vert q\right\vert^{2}-1\right)q\right]+i\beta\left(q_{xxx}-6\left\vert q\right\vert^{2}q_{x}\right)=0,\quad (x,t)\in \mathbb{R}\times(0,+\infty),\\
q(x,0)=q_{0}(x),\qquad \underset{x\rightarrow\pm\infty 1}{\lim} q_{0}(x)=\pm 1, \qquad q_{0}\mp 1\in H^{4,4}(\mathbb{R}),
\end{cases}
\end{align}
where the weighted Sobolev space $H^{4,4}(\mathbb{R})$ is defined as
\begin{align}
&H^{4,4}(\mathbb{R})=L^{2,4}(\mathbb{R})\cap H^{4}(\mathbb{R}),\notag\\
&L^{2,4}(\mathbb{R})=\left\{\left(1+\left\vert \cdot\right\vert^{2}\right)^{2}f\in
L^{2}(\mathbb{R})\right\},\notag\\
&H^{4}(\mathbb{R})=\left\{f\in L^{2}(\mathbb{R})|\partial^{j}f\in L^{2}(\mathbb{R}), j=0,\cdots,4\right\}.\nonumber
\end{align}
Real constants $\alpha$ and $\beta$ represent the second-order and third-order dispersions, respectively. When $\alpha=1, \beta=0$, the defocusing Hirota equation with NZBCs can reduce to the defocusing NLS equation with NZBCs. While at the case of $\alpha=0, \beta=1$, the defocusing Hirota equation with NZBCs reduces to the defocusing modified Korteweg-de Vries (mKdV) equation with NZBCs. Recently, the dark soliton solutions of the defocusing Hirota equation have been studied via the binary  Darboux transform \cite{ZhangHQ}.  Soliton solutions for defocusing Hirota equation with NZBCs were obtained using the inverse scattering transforms \cite{chen-CNSNS} and $\bar{\partial}$-dressing method \cite{Huang-ND}. $N$-double poles solutions for nonlocal Hirota equation with NZBCs were derived using Riemann-Hilbert(RH) method and PINN algorithm \cite{Peng-PD}. Conservation laws of the defocusing Hirota equation under NZBCs was studied in \cite{Xu}. Painlev\'{e}-type asymptotics for the defocusing Hirota equation with zero
boundary conditions(ZBCs) in transition region was discussed by classical nonlinear steepest descent method \cite{Xun}.

For integrable systems, the inverse scattering transform is an important milestone in studying the
Cauchy problem for the nonlinear evolution equations. The inverse scattering transform based on the RH problem is to use the solution of a matrix RH problem to represent the solution of the corresponding equation. The existence of global solutions on the
line was also studied from the perspective of inverse scattering transform based on the representation of a RH problem \cite{Pelinovsky1,Pelinovsky2}. And what's more, undergoes decades of development, RH method is extended to derive long time asymptotics  of the solution, which is an important application. The study of long time asymptotics can first be traced back to the work of Zakharov and Manakov \cite{Guo-JDE20}, they analysed  the long time asymptotics  for the  NLS equation. Stimulated by this work,
the famous nonlinear steepest descent method was raised by
Deift and Zhou  to solve the initial value
problems for the mKdV equation with a oscillatory RH problem \cite{Guo-JDE6}. Subsequently,  a good deal of original
significant results about long time asymptotics of solutions to various integrable systems
were researched, including the defocusing
NLS equation \cite{Peng-17},  the KdV equation \cite{Peng-18}, the sine-Gordon equation \cite{Peng-19}, the Cammasa-Holm equation \cite{Peng-20}, Fokas-Lenells equation \cite{Peng-21}, the Hirota equation \cite{Peng-22}, the Kundu-Eckhaus equation \cite{Peng-23} et al. Nevertheless, the classical nonlinear steepest descent method needs to consider the fine $L_{p}$ estimates of Cauchy
projection operators, to avoid this matter, the $\bar{\partial}$-steepest descent method was advocated  by McLaughlin and Miller \cite{Tian-wang24,Tian-wang25}. Since then, this method has gained wide attention, more and more progress has been made, including defocusing NLS equation \cite{Tian-wang26,Tian-wang27},  focusing NLS equation \cite{Tian-wang29}, derivative NLS equation \cite{Liujiaqi,Liujiaqi1}, KdV equation \cite{Tian-wang28}, short pulse equation \cite{Tian-wang30}, focusing Fokas-Lenells equation \cite{Tian-wang31}, modified Camassa-Holm equation \cite{Tian-wang32}, WKI eqaution \cite{Tian-wang33} et al.

In recent years,  $\bar{\partial}$-steepest descent method also has extended to study long time asymptotics for the integrable
systems with NZBCs, such as defocusing NLS equation \cite{Wang-CMP}, Novikov equation \cite{Yang-CNS},  defocusing mKdV equation \cite{Xu-Fan}, WKI equation \cite{Tian-wang34}, modified Camassa-Holm equation \cite{Tian-wang35} et al.
For the Painlev\'{e}-type asymptotics in the field of integrable system, there are also a lot of work having been done. For example,
the Painlev\'{e}-type asymptotics for the KdV equation were firstly discussed by Segur and Ablowitz \cite{Wang-KDV5}. After that, the asymptotic relationship for the mKdV equation between different regions was given by Deift and Zhou  \cite{Guo-JDE6}. Applying the nonlinear steepest descent technique, Monvel has developed  the Painlev\'{e}-type asymptotics for the Camassa-Holm
equation \cite{Wang-KDV14}. Worth mentioning, Charlier and Lenells established
the Airy and Painlev\'{e} asymptotics for the mKdV equation \cite{Wang-KDV15}. Also, Painlev\'{e}-type asymptotics of an extended mKdV equation in transition regions has been obtain \cite{Liunan}. Soon afterwards, the Painlev\'{e}-type
asymptotics were generalized to the whole mKdV hierarchy \cite{Wang-KDV16}. Very recently, using the $\bar{\partial}$-steepest descent method,  the
Painlev\'{e}-type asymptotics for defocusing NLS equation under NZBCs were creatively
raised by Wang and Fan \cite{Wang-KDV19}, and then they also proposed the Painlev\'{e}-type asymptotics of defocusing mKdV equation with NZBCs \cite{Wang-KDV20}.

In this article, we analyse the situation of solitonless on the long time asymptotics for the defocusing Hirota equation with NZBCs. We extend the $\bar{\partial}$-steepest descent method to the defocusing Hirota equation with NZBCs and
derive  different forms of long time asymptotics in two different region. To the best of our knowledge,  the long time asymptotics studies of the defocusing Hirota equation with NZBCs have been
rarely reported via using $\bar{\partial}$-steepest descent method, because the $\bar{\partial}$-steepest descent method
with NZBCs is more complicated than one with ZBCs. Compared to the NLS equation, the defocusing Hirota equation has a more complex phase point distribution, which leads to that the construction of the mixed $\bar{\partial}$-RH problem is more difficult and tricky. Besides, for the  transition region, we need consider three phase points instead of one compared to the NLS equation, and two of these phase points need to be matched by the parabolic cylinder equation. The main results of this paper
is generalized in what follows:\\
\textbf{Main result:}
\begin{thm}\label{thm1}
Suppose that $q(x,t)$ be the solution of the Cauchy problem \eqref{1} with the initial data
$q_{0}$  lying in  $\tanh(x)+H^{4,4}(\mathbb{R})$. Then, at the region $\xi<-8, \xi=O(1)$,  for any constant $T=T(q_{0},\xi)$, the long time asymptotics of the solution $q(x, t)$ is
\begin{align}
q(x,t)=-1+t^{-1/2}h(x,t)+O(t^{-3/4}), \qquad t>T,\nonumber
\end{align}
where
\begin{align}
h(x,t)=\sum_{k=1}^{4}\frac{\sqrt{2\pi}e^{-\frac{\pi}{4}i\epsilon_{k}-\frac{\pi\nu(\xi_{k})}{2}}}{\sqrt{2\epsilon_{k}f''(\xi_{k})}
r_{\xi_{k}}\Gamma(i\epsilon_{k}\nu(\xi_{k}))},\nonumber
\end{align}
with
\begin{align}
r_{\xi_{k}}=r(\xi_{k})\delta_{0}^{2}(\xi_{k})e^{-2itf(\xi_{k})-i\epsilon_{k}\nu(\xi_{k})\ln(2t\epsilon_{k}f''(\xi_{k}))}, \ \nu(\xi_{k})=\frac{1}{2\pi}\ln(1-\left\vert r(\xi_{k})\right\vert^{2}),\ \epsilon_{k}=(-1)^{k+1},\nonumber
\end{align}
and $\Gamma$  is Euler's Gamma function.
\end{thm}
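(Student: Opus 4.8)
The plan is to follow the now-standard $\bar\partial$-steepest descent pipeline for a RH problem with finite-density (nonzero background) data, specialized to the four–phase-point regime $\xi<-8$. First I would record the RH problem for the $2\times2$ matrix $M(z;x,t)$ reconstructed from the scattering data of \eqref{1}, normalized so that $M\to I$ as $z\to\infty$, with jump across $\mathbb{R}$ governed by the reflection coefficient $r(z)$ and the phase $e^{2itf(z)}$, where $f=f(z;\xi)$ is the controlling phase function of the defocusing Hirota flow (a cubic-type modification of the NLS phase because of the $\beta$-terms). I would then analyze the real critical points of $f$: in the region $\xi<-8$ there are exactly four real stationary points $\xi_1>\xi_2>\xi_3>\xi_4$, and I would fix the sign structure of $\mathrm{Re}(if)$ on a neighborhood of $\mathbb{R}$ so as to know in which direction each factor of the jump must be deformed.

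Next I would carry out the sequence of transformations. (i) A scalar conjugation $M^{(1)}=M\,\delta(z)^{-\sigma_3}$, where $\delta$ solves a scalar RH problem with jump $1-|r|^2$ on the bands between consecutive phase points; this is where the factors $\delta_0(\xi_k)$, $\nu(\xi_k)=\frac{1}{2\pi}\ln(1-|r(\xi_k)|^2)$, and the logarithmic phase corrections $e^{-i\epsilon_k\nu(\xi_k)\ln(2t\epsilon_k f''(\xi_k))}$ in the statement are produced, the signs $\epsilon_k=(-1)^{k+1}$ reflecting the alternating orientation of the steepest-descent rays at the four points. (ii) A lens/continuous-extension step: I would replace the pure jump on $\mathbb{R}$ by $\bar\partial$-data supported on sectors emanating from each $\xi_k$, using smooth (not analytic) extensions of $r$ and of the relevant triangular factors, chosen so that $\bar\partial R_j$ is $O(|z-\xi_k|^{-1/2})$-controlled and decays like $e^{-c t |z-\xi_k|^2}$ on the contour. (iii) Split $M^{(2)}=M^{(\mathrm{RHP})}M^{(3)}$ where $M^{(\mathrm{RHP})}$ carries only the non-$\bar\partial$ part and $M^{(3)}$ solves a pure $\bar\partial$-problem whose solution is $I+O(t^{-3/4})$ by the standard Neumann-series/singular-integral estimate — this is where the error term $O(t^{-3/4})$ enters.

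Then I would localize $M^{(\mathrm{RHP})}$ near the four stationary points. Near each $\xi_k$, after rescaling $z=\xi_k+\zeta/\sqrt{2t\epsilon_k f''(\xi_k)}$, the local model is the standard parabolic-cylinder (confluent hypergeometric) model whose solution is expressed through $\Gamma(i\epsilon_k\nu(\xi_k))$ and $e^{\pm\pi\nu(\xi_k)/2}$, $e^{\mp\frac{\pi i}{4}\epsilon_k}$; matching this model to the outer solution $I$ gives, at each point, a contribution to the $(1,2)$-entry of size $t^{-1/2}$ with coefficient exactly the $k$-th summand of $h(x,t)$. Summing the four local contributions (they are mutually at distance $O(1)$ so cross terms are $O(t^{-1})$, absorbed into the error) and tracking the reconstruction formula $q(x,t)=-1+2i\lim_{z\to\infty}(zM)_{12}$ — the $-1$ coming from the background value of $M$ at the symmetric point — yields $q=-1+t^{-1/2}h+O(t^{-3/4})$. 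I expect the main obstacle to be the bookkeeping of signs and orientations at four simultaneous stationary points: one must be careful that the scalar function $\delta$, the sector decompositions, and the local parabolic-cylinder parametrices are glued consistently so that the alternating signs $\epsilon_k$, the branch of $(2t\epsilon_k f''(\xi_k))$, and the argument of $\Gamma$ all come out as written; a secondary technical point is verifying the $\bar\partial$-bounds uniformly in $\xi=O(1)$ bounded away from the transition threshold $\xi=-8$, since $f''(\xi_k)$ degenerates as two of the phase points coalesce there.
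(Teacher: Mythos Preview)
Your pipeline is the same as the paper's, and the identification of the parabolic-cylinder local models, the signs $\epsilon_k=(-1)^{k+1}$, and the $O(t^{-3/4})$ error from the pure $\bar\partial$-problem are all correct in outline. There is, however, a genuine gap in how you treat the nonzero background.

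You write that you would match the local parametrices ``to the outer solution $I$'' and then recover $q$ from ``$q(x,t)=-1+2i\lim_{z\to\infty}(zM)_{12}$, the $-1$ coming from the background value of $M$ at the symmetric point''. Neither of these is right in the finite-density setting. The reconstruction formula here is $q(x,t)=-i\lim_{z\to\infty}(zM(z))_{12}$ with no additive $-1$; the background enters instead through the singular normalization $M(z)=\sigma_2/z+O(1)$ as $z\to 0$. Because of this pole, after the lens opening the outer model is \emph{not} the identity but the explicit rational matrix
\[
M^{\mathrm{out}}(z)=I+\frac{\sigma_2}{z},
\]
which solves the residual RH problem once the jumps on $\Sigma^{(3)}\setminus\mathcal U_\xi$ have decayed. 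It is this $M^{\mathrm{out}}$ that produces the leading $-1$: from $(\sigma_2)_{12}=-i$ one gets $-i\cdot(-i)=-1$. Correspondingly, the small-norm problem for the error $E(z)$ has jump $M^{\mathrm{out}}J^{(3)}(M^{\mathrm{out}})^{-1}$ on $\Sigma^{(3)}\setminus\mathcal U_\xi$ and $M^{\mathrm{out}}M^{(\xi_k)}(M^{\mathrm{out}})^{-1}$ on $\partial\mathcal U_{\xi_k}$, and the final assembly reads $M=M^{(4)}E\,M^{\mathrm{out}}(R^{(2)})^{-1}\delta^{-\sigma_3}$, from which $q=-1-i(E_1)_{12}+O(t^{-3/4})$. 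If you match against $I$ instead of $M^{\mathrm{out}}$ you will not see where the $-1$ comes from and you will also mis-specify the conjugation in $J^E$.

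A second omission: the paper allows discrete spectrum (simple zeros of $s_{11}$ on the upper unit semicircle), and before any $\bar\partial$ extension these residue conditions must be converted to exponentially decaying jumps on small circles via an interpolation matrix $G(z)$; only then does the RH problem reduce to the purely oscillatory one on $\mathbb{R}$ that your steps (i)--(iii) address.
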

\begin{thm}\label{thm2}
Suppose the reflection
coefficient $\{r(z), R(z)\}$ and the discrete spectrum $\{z_{j}\}_{j=1}^{n}$ being both associated with the
initial data $q_{0}$ lying in weighted Sobolev space $\tanh(x)+H^{4,4}(\mathbb{R})$. Then the Painlev\'{e} asymptotics of the
solution to the Cauchy problem \eqref{1} in transition region $\vert\xi+8\vert t^{2/3}<C$ is
\begin{align}
q(x,t)=-1-(\frac{15}{4}t)^{-\frac{1}{3}}\varrho+O(t^{-\frac{1}{2}}),\nonumber
\end{align}
where
\begin{align}
\varrho=\frac{i}{2}\left(\int_{s}^{\infty}u^{2}(\zeta)d\zeta+e^{i\varphi_{0}}u(s)\right),\nonumber
\end{align}
with
\begin{align}
s=\frac{4}{15}(8+\xi)(\frac{15}{4}t)^{2/3},\qquad  \varphi_{0}=arg R(-1),\nonumber
\end{align}
and $u(s)$ is a solution of the following Painlev\'{e} II equation
\begin{align}
u_{ss}=2u^{3}+su,\nonumber
\end{align}
whose asymptotics as $s\rightarrow\infty$ is
\begin{align}
u(s)\sim -\vert r(-1)\vert Ai(s),\nonumber
\end{align}
where $Ai(s)$ is the classical Airy function.
\end{thm}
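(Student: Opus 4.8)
The plan is to run a $\bar{\partial}$-nonlinear steepest descent analysis of the Riemann--Hilbert problem for $M(x,t;z)$ associated with \eqref{1}, specialized to the transition regime $|\xi+8|\,t^{2/3}<C$. I would first study the phase function $f(z;\xi)$ controlling the exponential $e^{2itf}$ in the jump: in the oscillating region $\xi<-8$ it carries four real stationary points, and as $\xi\uparrow-8$ two of them collide near the fixed spectral value $z=-1$ while the remaining two stay separated by $O(1)$. Writing $z_{0}=z_{0}(\xi)$ for the colliding pair, one has $f'(z_{0})=O(\xi+8)$ and $f''(z_{0})\to0$, $f'''(z_{0})\neq0$ at $\xi=-8$; inserting the local scaling $z+1=\lambda(\tfrac{15}{4}t)^{-1/3}$ and Taylor expanding $2tf$ produces the model phase $\tfrac{4}{3}\lambda^{3}+s\lambda$ together with the self-similar variable $s=\tfrac{4}{15}(8+\xi)(\tfrac{15}{4}t)^{2/3}$, the numerical factor $\tfrac{15}{4}$ being fixed by $f'''(-1)$ and the $\xi$-derivative of $f'$ at $z=-1$. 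Any discrete spectrum is removed beforehand by a standard conjugation and, in the solitonless situation considered, does not affect the leading behaviour in this velocity window.

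Next I would carry out the usual chain of transformations. Conjugate $M\mapsto M\,\delta^{-\sigma_{3}}$, where $\delta(z)$ solves the scalar RH problem carrying the jump $1-|r(z)|^{2}$ on the appropriate union of intervals cut out by the stationary points; here extra care is needed since the colliding pair lies within $O(t^{-1/3})$ of a band endpoint, so $\delta$ acquires a mild singularity there that must be folded into the local model (the phase of $\delta$ at $z=-1$ is what eventually carries $\varphi_{0}=\arg R(-1)$ into the off-diagonal entries). Then open lenses: factor the jump into upper/lower triangular matrices whose off-diagonal entries are $\bar{\partial}$-continuations of $r$ (and of the combination built from $r$ and $R$), the $H^{4,4}(\mathbb{R})$ regularity guaranteeing that $\bar{\partial}\mathcal{R}$ decays fast enough. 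This turns the contour problem into a mixed $\bar{\partial}$-RH problem whose pure-$\bar{\partial}$ component is controlled by the solid Cauchy operator bound; near a degenerate stationary point this bound is weaker than in the oscillating case, yielding a factor $\mathbb{I}+O(t^{-1/2})$ rather than $O(t^{-3/4})$ --- one of the two sources of the stated error.

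The essential step is the construction of the local parametrices. Around each separated stationary point I would glue in the parabolic-cylinder parametrix; each contributes at order $t^{-1/2}$ and is absorbed into $O(t^{-1/2})$ (the ``parabolic cylinder model'' part of the residual error). Around the colliding pair near $z=-1$ I would instead glue in a Painlev\'{e}-II parametrix: under the rescaling above the local RH problem converges, uniformly for $|s|\le C'$, to the model RH problem with phase $\tfrac{4}{3}\lambda^{3}+s\lambda$ whose $\lambda^{-1}$-coefficient encodes the solution $u(s)$ of $u_{ss}=2u^{3}+su$; its Stokes multipliers are pinned by $r(-1)$ and $R(-1)$, which singles out the Ablowitz--Segur solution with $u(s)\sim-|r(-1)|\,\mathrm{Ai}(s)$ as $s\to\infty$, while the $\delta$-twist supplies the $e^{i\varphi_{0}}$ factor. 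Gluing all local parametrices to the outer (background) model along their boundary circles leaves a small-norm error RH problem, solvable by Neumann series, with total error $O(t^{-1/2})$.

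Finally I would reconstruct $q(x,t)$ from the reconstruction formula for $q$ (the coefficient of $z^{-1}$ in the large-$z$ expansion of $M$): the outer model contributes the background value $-1$; the Painlev\'{e}-II parametrix contributes at order $(\tfrac{15}{4}t)^{-1/3}$, its $\lambda^{-1}$-coefficient assembling into $-(\tfrac{15}{4}t)^{-1/3}\varrho$ with $\varrho=\tfrac{i}{2}\big(\int_{s}^{\infty}u^{2}(\zeta)\,d\zeta+e^{i\varphi_{0}}u(s)\big)$ --- the $\int_{s}^{\infty}u^{2}$ term coming from the diagonal part of that coefficient (the Painlev\'{e}-II Hamiltonian) and the $e^{i\varphi_{0}}u(s)$ term from its leading off-diagonal entry twisted by $\delta$ and the background eigenvectors; everything else collapses into $O(t^{-1/2})$. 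I expect the main obstacle to be exactly this local Painlev\'{e} parametrix and its matching: one must verify that, after rescaling, the deformed contour and the jump genuinely converge to the exact Painlev\'{e}-II model uniformly on $|s|\le C'$, control the singularity of $\delta$ as the colliding stationary points approach the band endpoint, and check that the parabolic-cylinder contributions at the separated stationary points really decouple at order $t^{-1/2}$ instead of corrupting the $t^{-1/3}$ leading correction --- all more delicate than for NLS because the third-order dispersion $\beta q_{xxx}$ puts four stationary points on the line.
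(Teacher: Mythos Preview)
Your proposal is correct and follows essentially the same $\bar{\partial}$-steepest descent strategy as the paper: $\delta$-conjugation, lens opening with $\bar{\partial}$-extensions, parabolic-cylinder parametrices at the two separated stationary points $\xi_{1},\xi_{2}$ (each contributing $O(t^{-1/2})$), a Painlev\'{e}~II parametrix at the colliding pair near $z=-1$ under the scaling $\hat{k}=(\tfrac{15}{4}t)^{1/3}(z+1)$, and a small-norm error problem yielding the $O(t^{-1/2})$ remainder. The only minor technical differences are that the paper removes the pole at $z=0$ by an explicit algebraic transformation $M^{(2)}=(I+\tfrac{\sigma_{2}}{z}M^{(3)}(0)^{-1})M^{(3)}$ rather than through an outer parametrix, and it chooses the $\delta$-jump interval as $\mathbb{R}\setminus(\xi_{2},\xi_{1})$ so that $\delta$ is smooth near $z=-1$ and the mild singularity you anticipate does not arise.
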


\textbf{Organization of the Rest of the Work}:
In section 2, based on the Lax pair of the defocusing Hirota equation under NZBCs, we introduce the
analyticity, symmetries and asymptotic properties for the eigenfunctions and scattering data,
and the basis RH problem $M(z)$ is constructed
for the Cauchy problem of the defocusing Hirota equation.
In section 3, we introduce the matrix function $\delta(z)$ and interpolation function $G(z)$ to generate the new RH problem
$M^{(1)}(z)$ whose jump matrix need to be decomposed into two triangle matrices near the phase points $\xi=\xi_{k},k=1\cdots,4$. Next, through introducing a matrix function $R^{(2)}(z)$, a continuous extension of the jump matrix off the
real axis  is constructed and the mixed $\bar{\partial}$-RH problem was derived.
The mixed $\bar{\partial}$-RH problem can be decomposed into two parts including
a model RH problem with $\bar{\partial}R^{(2)}=0$ i.e., $M^{RHP}$  and a pure $\bar{\partial}$-RH problem with $\bar{\partial}R^{(2)}\neq 0$
i.e., $M^{(4)}$. Of which the model RH problem $M^{RHP}$ can be solved near
the phase point $\xi_{k}$ by matching parabolic cylinder model
problem. Besides, the error function $E(z)$ with a small-norm RH problem is
produced. Then, the error estimation is given via analysing
the pure $\bar{\partial}$-RH problem $M^{(4)}$.
Finally, we obtain the long time asymptotic  behavior of the defocusing Hirota equation under NZBCs in the oscillating region $\xi<-8$. In section 4, using the analogous $\bar{\partial}$ technique as section 3, through deriving a
solvable RH model that matches with the Painlev\'{e} model, we obtain the Painlev\'{e}-type
asymptotics in the  transition region $\vert\xi+8\vert t^{2/3}<C$.

\section{Spectral analysis and basic RH problem}
In this section, we briefly review the direct scattering component, see \cite{chen-CNSNS} for details.
\subsection{Spectral analysis on Lax pair}
The defocusing Hirota equation \eqref{1} satisfies the following Lax pair
\begin{align}\label{2}
\Phi_{x}=X\Phi=(ik\sigma_{3}+Q)\Phi, \qquad \Phi_{t}=T\Phi=(\alpha T_{NLS}+\beta T_{CMKdV})\Phi,
\end{align}
where
\begin{align}
&T_{NLS}=-2kX+i\sigma_{3}(Q_{x}-Q^{2}+1),\notag\\
&T_{CMKdV}=-2k(T_{NLS}-i\sigma_{3})+[Q_{x},Q]+2Q^{3}-Q_{xx},\nonumber
\end{align}
with $k\in \mathbb{C}$ being a spectral parameter, and
\begin{align}
\sigma_{3}=\left(\begin{array}{cc}
    1  &  0\\
    0  &  -1\\
\end{array}\right),\quad
Q=Q(x,t)=\left(\begin{array}{cc}
    0  &  q\\
    \bar{q}  &  0\\
\end{array}\right),\nonumber
\end{align}
of which the $\overline{D}$ on behalf of the complex conjugate of $D$.
Under the boundary condition of $q$, the Lax pair \eqref{2} turns into
\begin{align}\label{5}
\Phi_{\pm,x}\sim X_{\pm}\Phi_{\pm}, \qquad \Phi_{\pm,t}\sim T_{\pm}\Phi_{\pm},\qquad x\rightarrow \pm\infty,
\end{align}
where
\begin{align}
X_{\pm}=ik\sigma_{3}+Q_{\pm},\quad T_{\pm}=\left[\beta\left(4k^{2}+2\right)-2\alpha k\right]X_{\pm},\nonumber
\end{align}
and
\begin{align}
Q_{\pm}=\left(\begin{array}{cc}
    0  &  \pm 1\\
    \pm 1  &  0\\
\end{array}\right).\nonumber
\end{align}
The eigenvalues of the $X_{\pm}$ are $\pm i\lambda$, which admit
\begin{align}
\lambda^{2}=k^{2}-1.\nonumber
\end{align}
To circumvent the multi-valued nature of  eigenvalue $\lambda$, an uniformization variable is  introduced as follows
\begin{align}
z=k+\lambda,\nonumber
\end{align}
then two single-valued functions can be expressed as
\begin{align}
k=\frac{1}{2}(z+\frac{1}{z}),\quad \lambda=\frac{1}{2}(z-\frac{1}{z}).\nonumber
\end{align}
As as result, the asymptotic spectral problem \eqref{5} satisfies the following Jost solution
\begin{align}
\Phi_{\pm}(z)\sim \begin{cases}
 E_{\pm}e^{i\theta\sigma_{3}},\ k\neq\pm 1, \\
 I+\left[x+(6\beta-2\alpha k)t\right]X_{\pm}, \ k=\pm 1,
 \end{cases}\nonumber
\end{align}
where
\begin{align}
E_{\pm}=\left(\begin{array}{cc}
    1  &  \frac{\pm i}{z}\\
    \frac{\mp i}{z}  &  1\\
\end{array}\right),\ \theta(x,t, z)=\lambda\left\{x+\left[\beta\left(4k^{2}+2\right)-2\alpha k\right]t\right\}.\nonumber
\end{align}
Making a gauge transformation
\begin{align}
\mu_{\pm}(x,t, z)=\Phi_{\pm}(x,t, z)e^{-i\theta\sigma_{3}},\nonumber
\end{align}
then as $x\rightarrow \pm \infty$, one has
\begin{align}
\mu_{\pm}(z)\sim E_{\pm}(z),\ \det(\Phi_{\pm})=\det(\mu_{\pm})=1-\frac{1}{z^{2}}.\nonumber
\end{align}
$\mu_{\pm}(z)$ meet the following Volterra type integral equations
\begin{align}\label{15}
\mu_{\pm}(z)=\begin{cases}
E_{\pm}+E_{\pm}\int_{\pm \infty}^{x}e^{i\lambda(x-y)\hat{\sigma}_{3}}\left[E_{\pm}^{-1}\Delta Q_{\pm}\mu_{\pm}\right]\mathrm{d}y, \ z\neq \pm 1,\\
E_{\pm}+\int_{\pm \infty}^{x}\left[I+(x-y)(Q_{\pm}\pm i\sigma_{3})\right]\Delta Q_{\pm}\mu_{\pm}\mathrm{d}y, \ z= \pm 1.
\end{cases}
\end{align}
where $e^{\hat{\sigma}_{3}}D=e^{\sigma_{3}}De^{-\sigma_{3}}, \Delta Q_{\pm}=Q-Q_{\pm}$. Then, in terms of the definition of $\mu(z)$ and the above integrals \eqref{15},
we can derive the analytical properties of the eigenfunctions $\mu_{\pm}(z)$.
\begin{prop}
 {\rm (Analytic property)}Assuming that $q(x)-q_{0}\in L^{1,n+1}(\mathbb{R})$ and $q'\in W^{1,1}(\mathbb{R})$,
we find that $\mu_{+,1}, \mu_{-,2}$ are analytic in $\mathbb{C}_{+}$ and $\mu_{-,1}, \mu_{+,2}$ are analytic in $\mathbb{C}_{-}$. The
$\mu_{\pm,j}(j=1, 2)$ denote the $j$-th column of $\mu_{\pm}$, and $\mathbb{C}_{\pm}$ denote the upper and lower complex $z$-plane,
respectively.
\end{prop}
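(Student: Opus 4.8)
The plan is to run the classical successive-approximation argument on the Volterra equations \eqref{15} and to keep track of the exponential factor $e^{i\lambda(x-y)\hat{\sigma}_{3}}$ column by column. First I would rewrite \eqref{15} for $\mu_{+}$ (the generic case $z\neq\pm1$) in terms of its two columns. Since $\Delta Q_{\pm}=Q-Q_{\pm}$ is off-diagonal and conjugation by the diagonal matrix $e^{i\lambda(x-y)\sigma_{3}}$ multiplies the $(2,1)$ and $(1,2)$ entries of a $2\times2$ matrix by $e^{\mp 2i\lambda(x-y)}$ while leaving the diagonal unchanged, the integral kernel acting on the first column of $\mu_{+}$ carries only the factor $e^{-2i\lambda(x-y)}$ (in its lower entry) and the kernel acting on the second column carries only $e^{2i\lambda(x-y)}$ (in its upper entry). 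On the integration domain of $\mu_{+}$ one has $y\ge x$, so $|e^{-2i\lambda(x-y)}|=e^{-2(y-x)\operatorname{Im}\lambda}\le 1$ exactly when $\operatorname{Im}\lambda\ge 0$; writing $z=re^{i\phi}$ gives $\operatorname{Im}\lambda=\tfrac12(r+r^{-1})\sin\phi$, so $\operatorname{Im}\lambda$ and $\operatorname{Im}z$ share the same sign, and the kernel for $\mu_{+,1}$ is bounded precisely on $\mathbb{C}_{+}$. The same bookkeeping puts the kernel for $\mu_{+,2}$ on $\mathbb{C}_{-}$, and, since the integral defining $\mu_{-}$ runs over $y\le x$, it puts the kernel for $\mu_{-,1}$ on $\mathbb{C}_{-}$ and the one for $\mu_{-,2}$ on $\mathbb{C}_{+}$ --- which is the claimed pattern.

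Next I would fix an arbitrary compact set $K$ contained in the relevant open half-plane and bounded away from $z=0$ and $z=\pm1$, and form the Neumann series $\mu_{+,1}=\sum_{n\ge 0}\mu_{+,1}^{(n)}$ generated by iterating \eqref{15}. On $K$ the matrices $E_{\pm}(z)$, $E_{\pm}^{-1}(z)=\frac{1}{1-z^{-2}}\,\mathrm{adj}\,E_{\pm}(z)$, and the surviving entries of $e^{i\lambda(x-y)\hat{\sigma}_{3}}$ are all bounded by a constant $C_{K}$, so the nested-integral (Volterra) structure yields $\|\mu_{+,1}^{(n)}(z)\|\le \tfrac{1}{n!}(C_{K}\int_{x}^{\infty}\|\Delta Q_{+}(y)\|\,dy)^{n}\le \tfrac{1}{n!}(C_{K}\|\Delta Q_{+}\|_{L^{1}})^{n}$, where the hypothesis that $q$ minus its $\pm$ background lies in $L^{1,n+1}(\mathbb{R})$ guarantees $\Delta Q_{\pm}\in L^{1}(\mathbb{R})$. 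Hence the series converges absolutely and uniformly on $K$. Each iterate $\mu_{+,1}^{(n)}(z)$ is an iterated integral whose integrand depends holomorphically on $z$ --- because $\lambda(z)$, $E_{\pm}(z)$ and $E_{\pm}^{-1}(z)$ are analytic on $\mathbb{C}_{\pm}\setminus\{0,\pm1\}$ and the exponential is entire --- with a $z$-independent integrable majorant, so differentiation under the integral sign shows each $\mu_{+,1}^{(n)}$ is holomorphic on $K$. By the Weierstrass convergence theorem (alternatively Morera), the uniform limit $\mu_{+,1}$ is holomorphic on $K$; since $K$ was arbitrary and $\{0,\pm1\}\subset\mathbb{R}$, this gives analyticity of $\mu_{+,1}$ on all of $\mathbb{C}_{+}$. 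Repeating the identical estimate for $\mu_{-,2}$ (on $\mathbb{C}_{+}$) and for $\mu_{+,2},\mu_{-,1}$ (on $\mathbb{C}_{-}$) finishes the argument.

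I expect the only delicate point to be the uniform control of the constant $C_{K}$: the factor $E_{\pm}^{-1}(z)$ has a pole at $z=0$ and the Jost normalization degenerates at $z=\pm1$ (where $\lambda=0$ and the second line of \eqref{15} takes over), so one must keep $K$ away from those three real points and check that the bound on the surviving entries of $e^{i\lambda(x-y)\hat{\sigma}_{3}}$ is uniform in $y$ over the half-line and in $z$ over $K$ --- which is exactly what the sign analysis of $\operatorname{Im}\lambda$ provides. I would also note that the stronger hypotheses $q-q_{0}\in L^{1,n+1}$ and $q'\in W^{1,1}(\mathbb{R})$ are not actually needed for analyticity itself (the bare $L^{1}$ bound on $\Delta Q_{\pm}$ suffices); they enter later to produce the decay of $\mu_{\pm}(z)-E_{\pm}(z)$ as $z\to\infty$ and the regularity of the scattering data, and would be invoked here only if one wanted those refinements simultaneously.
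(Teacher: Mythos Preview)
Your proposal is correct and follows the standard Neumann-series argument. The paper itself does not prove this proposition: Section~2 opens by saying the direct-scattering material is reviewed from \cite{chen-CNSNS}, and the analyticity statement is simply recorded without proof. Your column-by-column sign analysis of the exponential, the Volterra bound $\tfrac{1}{n!}(C_{K}\|\Delta Q_{\pm}\|_{L^{1}})^{n}$, and the Weierstrass/Morera conclusion are exactly what a detailed reference supplies.

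One small imprecision worth tightening: the matrix $E_{\pm}^{-1}\Delta Q_{\pm}\mu_{\pm}$ being conjugated is \emph{not} off-diagonal (because $E_{\pm}^{-1}$ is not diagonal), so the reason only $e^{-2i\lambda(x-y)}$ survives in the first-column equation is not that the $(2,1)$ entry alone is present. Rather, taking the first column of $[\,\cdot\,]e^{-i\lambda(x-y)\sigma_{3}}$ pulls out an overall scalar $e^{-i\lambda(x-y)}$, which combines with the left factor $e^{i\lambda(x-y)\sigma_{3}}$ to give $\mathrm{diag}(1,e^{-2i\lambda(x-y)})$ sandwiched between $E_{\pm}$ and $E_{\pm}^{-1}$. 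The conclusion --- that the only exponential in the first-column kernel is $e^{-2i\lambda(x-y)}$, bounded precisely when $\operatorname{Im}\lambda\ge 0$, i.e.\ $z\in\mathbb{C}_{+}$ --- is unchanged, and the rest of your argument goes through as written.
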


\begin{prop}
 {\rm (Symmetry property)} The eigenfunctions $\mu_{\pm}(x, t; \lambda)$ possess
the following symmetry relation
\begin{align}
\mu_{\pm}(x, t,z)=\sigma_{1}\overline{\mu_{\pm}(x, t, \overline{z})}\sigma_{1}=\frac{i}{z}\mu_{\pm}(x, t, \frac{1}{z})\sigma_{3}Q_{\pm},\nonumber
\end{align}
\end{prop}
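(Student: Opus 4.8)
\emph{Proof sketch.} The plan is to obtain both identities from the uniqueness of solutions of the Volterra integral equations \eqref{15}: equivalently, a Jost eigenfunction is the unique solution of the Lax pair \eqref{2} with the prescribed behaviour $\Phi_\pm\sim E_\pm e^{i\theta\sigma_3}$ as $x\to\pm\infty$. For each symmetry I will build a new matrix out of $\Phi_\pm$, check that it solves \eqref{2} and has the correct asymptotics, deduce that it equals $\Phi_\pm$, and finally strip off the gauge factor $e^{-i\theta\sigma_3}$ to pass to $\mu_\pm$.

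For the first relation, consider $\widetilde\Phi(x,t,z):=\sigma_1\overline{\Phi_\pm(x,t,\overline z)}\,\sigma_1$. Since $\alpha,\beta,x,t$ are real, $\overline{k(\overline z)}=k(z)$ and $\overline{\lambda(\overline z)}=\lambda(z)$, while $\sigma_1\sigma_3\sigma_1=-\sigma_3$, $\sigma_1\overline{Q}\sigma_1=Q$ and $\sigma_1 Q_\pm\sigma_1=Q_\pm$. Feeding these into $X=ik\sigma_3+Q$ and into the explicit expressions for $T_{NLS}$ and $T_{CMKdV}$ gives $\sigma_1\overline{X(x,t,\overline z)}\sigma_1=X(x,t,z)$ and $\sigma_1\overline{T(x,t,\overline z)}\sigma_1=T(x,t,z)$, so $\widetilde\Phi$ solves \eqref{2}. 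Because $\overline{\theta(x,t,\overline z)}=\theta(x,t,z)$ and $\sigma_1\overline{E_\pm(\overline z)}\sigma_1=E_\pm(z)$, one also has $\widetilde\Phi\sim E_\pm e^{i\theta\sigma_3}$; hence $\widetilde\Phi=\Phi_\pm$. Using $\sigma_1 e^{i\theta\sigma_3}\sigma_1=e^{-i\theta\sigma_3}$ to carry the relation to $\mu_\pm=\Phi_\pm e^{-i\theta\sigma_3}$ yields $\mu_\pm(z)=\sigma_1\overline{\mu_\pm(\overline z)}\sigma_1$.

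For the second relation the key point is that $X$ and $T$ depend on $z$ only through $k=\tfrac12(z+z^{-1})$, which is invariant under $z\mapsto z^{-1}$; hence $X(x,t,z^{-1})=X(x,t,z)$ and $T(x,t,z^{-1})=T(x,t,z)$, so that $\Psi(x,t,z):=\tfrac{i}{z}\Phi_\pm(x,t,z^{-1})\,\sigma_3 Q_\pm$ again solves \eqref{2} (the scalar $i/z$ and the constant right factor $\sigma_3 Q_\pm$ pass through the Lax operators). For the boundary behaviour I will use $\lambda(z^{-1})=-\lambda(z)$, so $\theta(x,t,z^{-1})=-\theta(x,t,z)$, the anticommutation $Q_\pm\sigma_3=-\sigma_3 Q_\pm$, and the algebraic identity $\tfrac{i}{z}E_\pm(z^{-1})\sigma_3 Q_\pm=E_\pm(z)$, which reduces to $E_\pm(z)=I+\tfrac{i}{z}\sigma_3 Q_\pm$ together with $(\sigma_3 Q_\pm)^2=-I$. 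These give $\Psi\sim E_\pm e^{i\theta\sigma_3}$ as $x\to\pm\infty$, so $\Phi_\pm(z)=\tfrac{i}{z}\Phi_\pm(z^{-1})\sigma_3 Q_\pm$; converting to $\mu_\pm$ through $e^{-i\theta\sigma_3}\sigma_3 Q_\pm=\sigma_3 Q_\pm e^{i\theta\sigma_3}$ produces $\mu_\pm(z)=\tfrac{i}{z}\mu_\pm(z^{-1})\sigma_3 Q_\pm$. The exceptional values $z=\pm1$ follow by continuity, or by running the same argument with the second branch of \eqref{15}.

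I expect the computations to be essentially linear-algebra bookkeeping; the two places needing care are verifying the symmetries of the more elaborate operator $T_{CMKdV}$, and keeping the structure of the second symmetry straight, namely that $z\mapsto z^{-1}$ acts by right multiplication by the constant matrix $\sigma_3 Q_\pm$ (not by conjugation) and is accompanied by the scalar $i/z$ and the sign flip $\theta\mapsto-\theta$. That bookkeeping is where an error is most likely to creep in, but no genuinely new idea is required beyond uniqueness for the Volterra equations.
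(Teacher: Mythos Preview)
Your proposal is correct and follows the standard route: exploit the symmetries of the Lax operators under $z\mapsto\bar z$ and $z\mapsto 1/z$, check that the transformed solution has the right asymptotics, and conclude by uniqueness of the Jost solutions (equivalently, of the Volterra iterates \eqref{15}). The paper itself does not supply a proof of this proposition---section~2 is explicitly a brief review referring to \cite{chen-CNSNS} for details---so there is nothing to compare against, but your argument is exactly what one expects and the algebraic checks you outline (notably $\tfrac{i}{z}E_\pm(z^{-1})\sigma_3 Q_\pm=E_\pm(z)$ and the anticommutation $\sigma_3 Q_\pm\,\sigma_3=-\sigma_3\,\sigma_3 Q_\pm$) are the right ones.
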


\begin{prop}
 {\rm (Asymptotic property)} The eigenfunctions $\mu_{\pm}(x, t, z)$ possess the following asymptotic behavior
\begin{align}
\mu_{\pm}(x, t, z)=\mathbb{I}+O(\frac{1}{z}), \ z\rightarrow\infty,\quad \mu_{\pm}(x, t, z)=\frac{i}{z}\sigma_{3}Q_{\pm}+O(1), \ z\rightarrow 0.\nonumber
\end{align}
\end{prop}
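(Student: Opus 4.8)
The plan is to establish the expansion at $z\to\infty$ directly from the Volterra integral equations \eqref{15}, and then to read off the expansion at $z\to0$ from the second symmetry identity of the preceding proposition. For the behaviour as $z\to\infty$ one has $\lambda=\tfrac12(z-z^{-1})\to\infty$, $k=\lambda+z^{-1}$, and the endpoint matrices satisfy $E_{\pm}(z)=\mathbb{I}+O(z^{-1})$ uniformly in $(x,t)$. I would write $\mu_{\pm}=E_{\pm}+N_{\pm}$ and insert this into \eqref{15} (for $z\ne\pm1$); the correction $N_{\pm}$ then solves a Volterra equation with the same integral operator
\[
\mathcal{K}_{\pm}[f](x)=E_{\pm}\int_{\pm\infty}^{x}e^{i\lambda(x-y)\hat{\sigma}_{3}}\,E_{\pm}^{-1}\,\Delta Q_{\pm}(y)\,f(y)\,\mathrm{d}y ,
\]
so it is enough to prove $N_{\pm}=O(z^{-1})$. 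Under the hypotheses already imposed in the analyticity proposition ($q-q_{0}\in L^{1,n+1}(\mathbb{R})$, $q'\in W^{1,1}(\mathbb{R})$) the standard successive-approximation estimate shows that $(\mathbb{I}-\mathcal{K}_{\pm})^{-1}$ is bounded, uniformly for $z$ in the region of validity, so that $N_{\pm}=(\mathbb{I}-\mathcal{K}_{\pm})^{-1}\mathcal{K}_{\pm}[E_{\pm}]$; it therefore remains only to check that the single term $\mathcal{K}_{\pm}[E_{\pm}]$ is $O(z^{-1})$.

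The source of this gain is oscillation. Since $\Delta Q_{\pm}=Q-Q_{\pm}$ is off-diagonal, $E_{\pm}^{-1}\Delta Q_{\pm}E_{\pm}$ is an off-diagonal $L^{1}$ matrix plus an $O(z^{-1})$ term; conjugation by $e^{i\lambda(x-y)\hat{\sigma}_{3}}$ leaves the (small, non-oscillatory) diagonal part untouched, so that part contributes $O(z^{-1})$ after integration, while it multiplies the off-diagonal entries by $e^{\pm2i\lambda(x-y)}$. A single integration by parts in $y$ on these oscillatory integrals produces the prefactor $\lambda^{-1}=O(z^{-1})$; the boundary term at $y=x$ is $O(z^{-1})$, the one at $y=\pm\infty$ vanishes because $\Delta Q_{\pm}\to0$ there, and the remaining integral is controlled with the help of $q'\in W^{1,1}(\mathbb{R})$. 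This yields $\mathcal{K}_{\pm}[E_{\pm}]=O(z^{-1})$, hence $N_{\pm}=O(z^{-1})$ and $\mu_{\pm}(x,t,z)=\mathbb{I}+O(z^{-1})$ as $z\to\infty$. (Alternatively one may substitute the formal series $\mu_{\pm}=\mathbb{I}+\sum_{n\ge1}z^{-n}\mu_{\pm}^{(n)}$ into the $x$-part of the Lax pair, written as $\mu_{\pm,x}=i\lambda[\sigma_{3},\mu_{\pm}]+z^{-1}i\sigma_{3}\mu_{\pm}+Q\mu_{\pm}$, and determine the coefficients recursively, the tail estimate again being supplied by the integral equation.) This is the step I expect to be the main obstacle: a crude Neumann bound only gives $\mu_{\pm}=E_{\pm}+O(1)$, and it is precisely the oscillatory integration by parts, together with the decay of $\Delta Q_{\pm}$ and the $W^{1,1}$ regularity of $q'$, that upgrades the remainder to $O(z^{-1})$.

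For the behaviour as $z\to0$ I would simply use the symmetry relation $\mu_{\pm}(x,t,z)=\tfrac{i}{z}\,\mu_{\pm}(x,t,1/z)\,\sigma_{3}Q_{\pm}$. Since $1/z\to\infty$ as $z\to0$, the first part gives $\mu_{\pm}(x,t,1/z)=\mathbb{I}+O(z)$, whence
\[
\mu_{\pm}(x,t,z)=\frac{i}{z}\bigl(\mathbb{I}+O(z)\bigr)\sigma_{3}Q_{\pm}=\frac{i}{z}\sigma_{3}Q_{\pm}+O(1),\qquad z\to0 ,
\]
which is the claimed expansion; here everything is routine once the $z\to\infty$ estimate is in hand.
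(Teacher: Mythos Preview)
Your argument is correct and follows the standard route: the $z\to\infty$ expansion is obtained from the Volterra equation \eqref{15} by writing $\mu_{\pm}=E_{\pm}+N_{\pm}$ and gaining the factor $\lambda^{-1}=O(z^{-1})$ on the off-diagonal part of $E_{\pm}^{-1}\Delta Q_{\pm}E_{\pm}$ via one integration by parts (the diagonal part being already $O(z^{-1})$); the $z\to0$ expansion then follows immediately from the symmetry $\mu_{\pm}(z)=\tfrac{i}{z}\mu_{\pm}(1/z)\sigma_{3}Q_{\pm}$. This is exactly the mechanism used in the literature on NZBC problems of this type.

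The paper itself does not supply a proof of this proposition: Section~2 is explicitly a review of the direct scattering component, with the details deferred to the cited reference on the inverse scattering transform for the defocusing Hirota equation. So there is no in-paper proof to compare against; your write-up is more detailed than what the paper provides and matches the standard derivation one finds in the source it cites.
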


For $z\in \mathbb{R}$ both eigenfunctions $\mu_{+}(x, t, z)$ and $\mu_{-}(x, t, z)$ are the fundamental
matrix solutions, there exists a scattering matrix $S(z)$ arriving at
\begin{align}
\mu_{+}(x, t, z)=\mu_{-}(x, t, z)e^{i\theta\hat{\sigma_{3}}}S(z), \quad z\in R\setminus \{0,\pm 1\},\nonumber
\end{align}
where the matrix $S(z)=(s_{ij}(z))_{2\times 2}$.
The functions $s_{11}(z), s_{22}(z)$ can be written as
\begin{align}
s_{11}(z)= \frac{\det(\Phi_{+1},\Phi_{-2})}{1-z^{-2}},\quad  s_{22}(z)= \frac{\det(\Phi_{-1},\Phi_{+2})}{1-z^{-2}},\nonumber
\end{align}
where $\det (D)$ means the determinate of a matrix $D$. It is not hard to find that
$s_{11}(z)$ is analytic in $\mathbb{C}_{+}$, $s_{22}(z)$ is analytic in $\mathbb{C}_{-}$.

\begin{prop}
 {\rm (Analytic property)} Let $z\in \mathbb{R}\setminus \{0,\pm 1\}$,
then, $s_{11}(z)$ are analytic in $\mathbb{C}_{+}$ and has no singularity on the real axis $\mathbb{R}$.
 $s_{22}(z)$ are analytic in $\mathbb{C}_{-}$ and has no singularity on the real axis $\mathbb{R}$.
\end{prop}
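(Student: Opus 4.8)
The plan is to read off analyticity directly from the Wronskian representations of $s_{11},s_{22}$, using the analytic, symmetry and asymptotic properties of the eigenfunctions already established. First I would observe that $\Phi_{+,1}=\mu_{+,1}e^{i\theta}$ and $\Phi_{-,2}=\mu_{-,2}e^{-i\theta}$ are columns of fundamental solutions of the traceless linear systems $\Phi_{x}=X\Phi,\ \Phi_{t}=T\Phi$, so by Abel's formula $\det(\Phi_{+,1},\Phi_{-,2})$ is independent of $(x,t)$; moreover the exponential prefactors cancel, giving
\begin{align}
s_{11}(z)=\frac{\det(\mu_{+,1}(x,t,z),\mu_{-,2}(x,t,z))}{1-z^{-2}},\qquad s_{22}(z)=\frac{\det(\mu_{-,1}(x,t,z),\mu_{+,2}(x,t,z))}{1-z^{-2}}.\nonumber
\end{align}
By the analyticity proposition for the eigenfunctions, $\mu_{+,1}$ and $\mu_{-,2}$ are analytic in $\mathbb{C}_{+}$ and continuous up to $\overline{\mathbb{C}_{+}}\setminus\{0,\pm1\}$ — this is where the hypotheses $q_{0}\mp1\in L^{2,4}(\mathbb{R})$ and $q_{0}'\in W^{1,1}(\mathbb{R})$ enter, through the Volterra equations \eqref{15} — while $1-z^{-2}$ has no zero in the open upper half-plane. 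Hence $s_{11}$ is analytic in $\mathbb{C}_{+}$; the reflected argument in the lower half-plane gives analyticity of $s_{22}$ in $\mathbb{C}_{-}$.

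Next I would rule out singularities on $\mathbb{R}$. For $z\in\mathbb{R}\setminus\{0,\pm1\}$ the numerator is continuous and the denominator nonvanishing, so $s_{11}$ is continuous there. At $z=0$ the asymptotic property gives $\mu_{+,1}(z)\sim\tfrac{i}{z}v_{+}$ and $\mu_{-,2}(z)\sim\tfrac{i}{z}v_{-}$ as $z\to0$, where $v_{+}$ and $v_{-}$ are respectively the first column of $\sigma_{3}Q_{+}$ and the second column of $\sigma_{3}Q_{-}$; since $v_{+},v_{-}$ are linearly independent, $\det(\mu_{+,1},\mu_{-,2})$ has a pole of order exactly two at $z=0$, which cancels the double pole of $1-z^{-2}=-z^{-2}(1-z^{2})$, so $s_{11}$ has a finite limit (equal to $-1$) at $z=0$. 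At $z=\pm1$ I would first use the symmetry relation $\mu_{\pm}(z)=\tfrac{i}{z}\mu_{\pm}(1/z)\sigma_{3}Q_{\pm}$ to write $\mu_{+,1}(z)=-\tfrac{i}{z}\mu_{+,2}(1/z)$ and $\mu_{-,2}(z)=-\tfrac{i}{z}\mu_{-,1}(1/z)$, so that $\det(\mu_{+,1}(z),\mu_{-,2}(z))=\tfrac{1}{z^{2}}\det(\mu_{-,1}(1/z),\mu_{+,2}(1/z))$ and $s_{11}(z)=-s_{22}(1/z)$; since $z\mapsto1/z$ fixes $z=\pm1$, this only reduces the branch-point question for $s_{11}$ to the same question for $s_{22}$. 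To actually conclude non-singularity one must return to the Volterra equations \eqref{15} near $z=\pm1$ and check, using the special linear-in-$x$ normalization of the Jost solutions there and the decay of $q_{0}\mp1$, that the Wronskian in the numerator picks up exactly a simple zero cancelling the simple zero of $1-z^{-2}$; this computation is the one carried out in \cite{chen-CNSNS}. Everything above is mirrored for $s_{22}$ by interchanging the two half-planes.

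The step I expect to be the main obstacle is precisely the analysis at the branch points $z=\pm1$: there the background matrices $X_{\pm}$ are no longer diagonalizable, the normalization matrices $E_{\pm}$ degenerate ($\det E_{\pm}=1-z^{-2}$), and the kernels of \eqref{15} become singular, so one must control a potential $(z^{2}-1)^{-1/2}$-type growth of $\mu_{\pm,j}$ — or at least show it is cancelled in $\det(\mu_{+,1},\mu_{-,2})$ up to a simple zero — and this is the only place where the full weighted regularity $q_{0}\mp1\in H^{4,4}(\mathbb{R})$ is genuinely used. The remaining ingredients (the $(x,t)$-independence of the Wronskians via Abel's formula, the cancellation of the exponential prefactors, and continuity away from $\{0,\pm1\}$) are routine.
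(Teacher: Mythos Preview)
The paper does not actually prove this proposition: Section~2 opens with ``we briefly review the direct scattering component, see \cite{chen-CNSNS} for details,'' and Propositions~2.1--2.6 are stated without proof. So there is no argument in the paper to compare against; your outline is precisely the standard argument one finds in the cited reference, and it is essentially correct.

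A couple of small comments. Your treatment of $z=0$ is fine and matches the paper's own asymptotic statement $s_{11}(z)=-1+O(z)$ as $z\to0$. At $z=\pm1$ you correctly isolate the only nontrivial point --- the degeneracy of $E_{\pm}$ and the need to show the numerator vanishes to exactly first order --- and you appropriately defer to the Volterra analysis in \cite{chen-CNSNS}. One slight overstatement: the cancellation at the branch points does not require the full $H^{4,4}$ regularity; a first-moment condition on $q_{0}\mp1$ (i.e.\ $L^{1,1}$, which is implied by $L^{2,4}$) already suffices to control the linear-in-$x$ growth of the kernel in the second line of \eqref{15} and to extract the simple zero of the Wronskian. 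The stronger Sobolev assumption in the paper is needed later for the $\bar\partial$ estimates, not here.
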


\begin{prop}
 {\rm (Symmetry property)} The scattering data $s_{ij}(z)$ possess
the following symmetry relation
\begin{align}
s_{11}(z)=\overline{s_{22}(\bar{z})}=-s_{22}(\frac{1}{z}), \quad s_{12}(z)=\overline{s_{21}(\bar{z})}=s_{21}(\frac{1}{z}).\nonumber
\end{align}
\end{prop}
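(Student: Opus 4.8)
The plan is to transport the symmetries of the Jost eigenfunctions recorded in the previous proposition through the scattering relation $\Phi_{+}(x,t,z)=\Phi_{-}(x,t,z)S(z)$ on $\mathbb{R}\setminus\{0,\pm1\}$ (equivalently, through the determinant formulas for $s_{11},s_{22}$) and then read off the matrix entries. Since $\Phi_{\pm}=\mu_{\pm}e^{i\theta\sigma_{3}}$, the first task is to convert the relations $\mu_{\pm}(z)=\sigma_{1}\overline{\mu_{\pm}(\bar z)}\sigma_{1}=\tfrac{i}{z}\mu_{\pm}(1/z)\sigma_{3}Q_{\pm}$ into symmetries of $\Phi_{\pm}$. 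To this end I would first record how the phase $\theta=\lambda\{x+[\beta(4k^{2}+2)-2\alpha k]t\}$ behaves under the two involutions: because $k=\tfrac12(z+z^{-1})$ and $\lambda=\tfrac12(z-z^{-1})$ are rational with real coefficients, one has $k(\bar z)=\overline{k(z)}$, $\lambda(\bar z)=\overline{\lambda(z)}$ and $k(1/z)=k(z)$, $\lambda(1/z)=-\lambda(z)$, so that $\alpha,\beta\in\mathbb{R}$ forces $\overline{\theta(x,t,\bar z)}=\theta(x,t,z)$ and $\theta(x,t,1/z)=-\theta(x,t,z)$.

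Next I would combine these with the Pauli identities $\sigma_{1}\sigma_{3}\sigma_{1}=-\sigma_{3}$ (hence $\sigma_{1}e^{i\theta\sigma_{3}}\sigma_{1}=e^{-i\theta\sigma_{3}}$) and the anticommutation of $\sigma_{3}Q_{\pm}$ with $\sigma_{3}$ (hence $e^{i\theta\sigma_{3}}\sigma_{3}Q_{\pm}=\sigma_{3}Q_{\pm}e^{-i\theta\sigma_{3}}$, so that pushing $e^{i\theta\sigma_{3}}$ past $\sigma_{3}Q_{\pm}$ flips the sign of $\theta$). Together with the phase relations above, this upgrades the two eigenfunction symmetries to
\[
\Phi_{\pm}(x,t,z)=\sigma_{1}\overline{\Phi_{\pm}(x,t,\bar z)}\,\sigma_{1}=\frac{i}{z}\,\Phi_{\pm}(x,t,1/z)\,\sigma_{3}Q_{\pm}.
\]
I then insert these into $\Phi_{+}=\Phi_{-}S$. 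Using the conjugation symmetry on both sides and cancelling the invertible factor $\Phi_{-}$ gives $S(z)=\sigma_{1}\overline{S(\bar z)}\,\sigma_{1}$, while the inversion symmetry together with $Q_{\pm}^{2}=\sigma_{3}^{2}=\mathbb{I}$ gives $S(1/z)=\sigma_{3}Q_{-}S(z)Q_{+}\sigma_{3}$ (which equals $-\sigma_{2}S(z)\sigma_{2}$ since $Q_{+}=-Q_{-}=\sigma_{1}$). Reading off the entries of the first identity yields $s_{11}(z)=\overline{s_{22}(\bar z)}$ and $s_{12}(z)=\overline{s_{21}(\bar z)}$, and of the second yields $s_{11}(z)=-s_{22}(1/z)$ and $s_{12}(z)=s_{21}(1/z)$; chaining the two gives exactly the claimed relations. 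One could equally argue column by column from $s_{11}(z)=\det(\Phi_{+,1},\Phi_{-,2})/(1-z^{-2})$ and $s_{22}(z)=\det(\Phi_{-,1},\Phi_{+,2})/(1-z^{-2})$, using multilinearity of the determinant, the identity $\det\Phi_{\pm}=1-z^{-2}$, and its invariance under $z\mapsto\bar z$ and $z\mapsto 1/z$.

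I do not anticipate a genuine obstacle: once the eigenfunction symmetries are available the argument is pure linear algebra. The only care needed is bookkeeping — keeping the left/right order straight when a symmetry acts by right multiplication with a $z$-dependent matrix, tracking the interplay of complex conjugation with the $e^{i\theta\sigma_{3}}$ factor and with the two involutions, and observing that the identities, first obtained on $\mathbb{R}\setminus\{0,\pm1\}$ where $\Phi_{\pm}$ are genuine fundamental matrices with the stated normalization, extend by continuity to the remaining points at which the scattering data are defined.
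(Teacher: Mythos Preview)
Your argument is correct and is exactly the standard derivation: transport the eigenfunction symmetries $\mu_{\pm}(z)=\sigma_{1}\overline{\mu_{\pm}(\bar z)}\sigma_{1}=\tfrac{i}{z}\mu_{\pm}(1/z)\sigma_{3}Q_{\pm}$ through the scattering relation after lifting them to $\Phi_{\pm}$ via the phase relations $\overline{\theta(\bar z)}=\theta(z)$, $\theta(1/z)=-\theta(z)$, then read off the entries of $S(z)=\sigma_{1}\overline{S(\bar z)}\sigma_{1}$ and $S(1/z)=-\sigma_{2}S(z)\sigma_{2}$. The paper itself does not supply a proof of this proposition --- Section~2 merely reviews the direct scattering results and refers to \cite{chen-CNSNS} for details --- so there is nothing to contrast; your sketch is precisely the argument one finds in that reference and fills the gap cleanly.
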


\begin{prop}
 {\rm (Asymptotic property)} The scattering data  $s_{11}(z)$ possess the following asymptotic behavior
\begin{align}
s_{11}(z)=1+O(\frac{1}{z}), \quad z\rightarrow\infty, \qquad s_{11}(z)=-1+O(z), \quad z\rightarrow 0.\nonumber
\end{align}
\end{prop}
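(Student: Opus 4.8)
The plan is to read off both expansions from the determinantal representation $s_{11}(z)=\det(\Phi_{+,1},\Phi_{-,2})/(1-z^{-2})$ together with the analyticity, symmetry and asymptotic properties of the Jost functions established above. First I would reduce everything to the eigenfunctions $\mu_{\pm}$: since $\Phi_{\pm}=\mu_{\pm}e^{i\theta\sigma_{3}}$ with $e^{i\theta\sigma_{3}}$ diagonal, the first column of $\Phi_{+}$ equals $e^{i\theta}\mu_{+,1}$ and the second column of $\Phi_{-}$ equals $e^{-i\theta}\mu_{-,2}$, so the scalar exponentials cancel in the $2\times2$ determinant and
\[
s_{11}(z)=\frac{\det\big(\mu_{+,1}(z),\,\mu_{-,2}(z)\big)}{1-z^{-2}}.
\]
Hence the task is simply to insert the asymptotics of the two columns $\mu_{+,1},\mu_{-,2}$ as $z\to\infty$ and as $z\to0$, which is exactly what the Asymptotic property of the eigenfunctions provides.

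For $z\to\infty$ the Asymptotic property gives $\mu_{+,1}(z)=(1,0)^{T}+O(z^{-1})$ and $\mu_{-,2}(z)=(0,1)^{T}+O(z^{-1})$, so $\det(\mu_{+,1},\mu_{-,2})=1+O(z^{-1})$; since $1-z^{-2}=1+O(z^{-2})$ this yields $s_{11}(z)=1+O(z^{-1})$. (The same conclusion also follows from $s_{11}(z)=\overline{s_{22}(\bar z)}$ and one Volterra iteration, but the determinant route is shorter.)

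For $z\to0$ the delicate point is that numerator and denominator each carry a double pole at $z=0$, and the content of the statement is that these poles cancel, leaving the finite value $-1$. The quickest argument uses the symmetry $s_{11}(z)=-s_{22}(1/z)$ from the Symmetry property: as $z\to0$ we have $1/z\to\infty$, and by the previous step $s_{22}(w)=\overline{s_{11}(\bar w)}=1+O(w^{-1})$, so $s_{11}(z)=-s_{22}(1/z)=-1+O(z)$. Alternatively one checks the cancellation directly: with $\sigma_{3}Q_{+}=\left(\begin{smallmatrix}0&1\\-1&0\end{smallmatrix}\right)$ and $\sigma_{3}Q_{-}=\left(\begin{smallmatrix}0&-1\\1&0\end{smallmatrix}\right)$, the Asymptotic property gives $\mu_{+,1}(z)=\tfrac{i}{z}(0,-1)^{T}+O(1)$ and $\mu_{-,2}(z)=\tfrac{i}{z}(-1,0)^{T}+O(1)$, hence $\det(\mu_{+,1},\mu_{-,2})=z^{-2}+O(z^{-1})$, while $1-z^{-2}=-z^{-2}(1-z^{2})$, so that $s_{11}(z)=-(1+O(z))/(1-z^{2})=-1+O(z)$.

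The only substantive ingredient is therefore the Asymptotic property of $\mu_{\pm}$, and within it the fact that the correction term at $z=0$ is a true $O(1)$ (no logarithmic correction), which is what upgrades the bare limit to the sharp remainder $O(z)$; this comes from iterating the Volterra equations \eqref{15} under the decay hypotheses on $q_{0}$, and is consistent with the symmetry $\mu_{\pm}(z)=\tfrac{i}{z}\mu_{\pm}(1/z)\sigma_{3}Q_{\pm}$, which transports the tame regime $z\to\infty$ onto $z\to0$. I do not anticipate any genuine obstacle beyond the bookkeeping of the pole cancellation at the origin; once that proposition is in hand the present statement is essentially a one-line determinant computation.
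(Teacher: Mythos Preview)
Your argument is correct. The paper itself does not supply a proof of this proposition: it appears in the review section on spectral analysis, where the authors explicitly state that they ``briefly review the direct scattering component, see \cite{chen-CNSNS} for details,'' and the proposition is simply recorded without justification. Your approach---combining the determinantal representation of $s_{11}$ with the already-stated asymptotic behaviour of $\mu_{\pm}$ at $z\to\infty$ and $z\to0$, and alternatively invoking the symmetry $s_{11}(z)=-s_{22}(1/z)$ for the $z\to0$ expansion---is the standard route and exactly what one would expect the referenced source to contain. The pole-cancellation bookkeeping at the origin is handled cleanly, and the use of the $z\leftrightarrow 1/z$ symmetry to transport the large-$z$ behaviour to small $z$ is the most efficient way to get the sharp $O(z)$ remainder.
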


\begin{assum}\label{Ass1}
In this paper, we assume that the initial value $q_{0}(x)$
is selected such that $s_{11}(z)$ has finite $N$ simple zeros $z_{1},z_{2},\cdots, z_{N}$ on $\mathbb{C}_{+}\cap \{z: \vert z\vert=1\}$, and they distribute in the unit circle. Zeros of $s_{22}(z)$ is corresponding in $\mathbb{C}_{-}$.
\end{assum}

The symmetries of $s_{ij}(z)$ declare that $s_{11}(z_{n})=0\Leftrightarrow \overline{s_{22}(\overline{z_{n}})}=0$, therefore the discrete spectrum can given by
\begin{align}
\Upsilon=\left\{z_{n},\overline{z_{n}}\right\}_{n=1}^{N}.\nonumber
\end{align}
And the distribution of $\Upsilon$ on the $z$-plane is displayed in Fig. 1.\\
\centerline{\begin{tikzpicture}[scale=1.5]
\path [fill=gray] (2.5,0) -- (0.5,0) to
(0.5,2) -- (2.5,2);
\path [fill=gray] (4.5,0) -- (2.5,0) to
(2.5,2) -- (4.5,2);
\draw[-][thick](0.5,0)--(0.75,0);
\draw[-][thick](0.75,0)--(1,0);
\draw[->][thick](1,0)--(2,0);
\draw[-][thick](2,0)--(2.5,0);
\draw[fill] (2.5,0) circle [radius=0.03];
\draw[->][thick](2.5,0)--(3,0);
\draw[-][thick](3,0)--(4,0);
\draw[-][thick](4,0)--(4.5,0)node[above]{$\mbox{Re}z$};
\draw[-][thick](2.5,2)node[right]{$\mbox{Im}z$}--(2.5,0);
\draw[-][thick](2.5,0)--(2.5,-2);
\draw[-][thick](2.5,-1.5)--(2.5,-0.5);
\draw[-][thick](2.5,-2)--(2.5,-1.5);
\draw[-][thick](2.5,1.5)--(2.5,0.5);
\draw[-][thick](2.5,2)--(2.5,1.5);
\draw[fill] (2.5,-0.1) node[right]{$0$};
\draw[fill] (1.5,0) circle [radius=0.03];
\draw[fill] (1.3,0) node[below]{$-1$};
\draw[fill] (3.5,0) circle [radius=0.03];
\draw[fill] (3.6,0) node[below]{$1$};
\draw[fill](3.2,0.7) circle [radius=0.03] node[right]{$z_{n}$};
\draw[fill] (3.2,-0.7) circle [radius=0.03] node[left]{$\overline{z_{n}}$};
\draw[-][thick](3.5,0) arc(0:360:1);
\draw[-][thick](3.5,0) arc(0:30:1);
\draw[-][thick](3.5,0) arc(0:150:1);
\draw[-][thick](3.5,0) arc(0:210:1);
\draw[-][thick](3.5,0) arc(0:330:1);
\end{tikzpicture}}
\noindent {\small \textbf{Figure 1.} (Color online) The discrete spectrums distribute on the unite circle $\{z: \vert z\vert=1\}$ on the $z$-plane,Region $\mathbb{C}_{+}=\left\{z\in \mathbb{C} | \mbox{Im}z> 0\right\}$ (gray region), region $\mathbb{C}_{-}=\left\{z\in \mathbb{C} | \mbox{Im}z< 0\right\}$ (white region).}

\subsection{A RH problem}
In this subsection, based on the Jost solutions and the scattering relation, we construct a RH problem by defining a sectionally meromorphic matrices
\begin{align}
M(x, t, z)=\begin{cases}
M_{+}(x, t, z)=\left(\frac{\mu_{+1}}{s_{11}}, \mu_{-2}\right),\quad z\in \mathbb{C}_{+},\\
M_{-}(x, t, z)=\left(\mu_{-1}, \frac{\mu_{+2}}{s_{22}}\right),\quad z\in \mathbb{C}_{-},
\end{cases}\nonumber
\end{align}
where $M_{\pm}(x, t, z)=\underset{\epsilon\rightarrow 0}{\lim}M_{\pm}(x, t, z\pm i\epsilon), z\in \mathbb{R}$.

For the initial data that admits Assumption \ref{Ass1}, we can obtain the following matrix RH problem.

\begin{rhp}\label{rhp1}
Find an analysis function $M(x, t, z)$ with the
following properties:\\

 $\bullet$ $M(x, t, z)$ is analytical in $\mathbb{C}\setminus (\mathbb{R}\cup \Upsilon)$ and has simple poles in $\Upsilon$;\\

 $\bullet$ $M_{+}(x, t, z)=M_{-}(x, t, z)J(x, t, z), z\in \mathbb{R}$, where
\begin{align}\label{25}
J(x, t, z)=\left(\begin{array}{cc}
    1-\left\vert r(z) \right\vert^{2}  &  -\overline{r(z)}e^{2i\theta(z)}\\
   r(z)e^{-2i\theta(z)}  &  1 \\
\end{array}\right),
\end{align}

 with $r(z)=\frac{s_{21}}{s_{11}}$;

 $\bullet$ Residue condition:
 \begin{align}\label{26}
&\underset{z=z_{n}}{Res}M=\lim_{z\rightarrow z_{n}}M\left(\begin{array}{cc}
    0  &  0\\
    c_{n}e^{-2i\theta(z_{n})}  &  0\\
\end{array}\right),\notag\\
&\underset{z=\overline{z_{n}}}{Res} M=\lim_{z\rightarrow \overline{z_{n}}}M\left(\begin{array}{cc}
    0  &  \overline{c_{n}}e^{2i\theta(\overline{z_{n}})}\\
    0  &  0\\
\end{array}\right),
\end{align}

where $c_{n}=\frac{s_{21}(z_{n})}{s'_{11}(z_{n})}$;

$\bullet$ $M(x, t, z)=\mathbb{I}+O(\frac{1}{z})$ as $z\rightarrow\infty$;\\

$\bullet$ $M(x, t, z)=\frac{\sigma_{2}}{z}+O(1)$ as $z\rightarrow 0$ .
\end{rhp}

\begin{prop}
If $M(x, t, z)$ admits the above conditions, then the solution of RH problem \ref{rhp1}
is unique. Furthermore, in terms of the solution of RH problem
\ref{rhp1}, the solution $q(x, t)$ of the  Cauchy problem \eqref{1} can be
written as
\begin{align}\label{27}
q(x,t)=-\lim_{z\rightarrow \infty}i\left(zM(z)\right)_{12}.
\end{align}
\end{prop}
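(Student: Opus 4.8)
The plan is to prove uniqueness by a Liouville‑type argument organised around the determinant of $M$, and then to read off the reconstruction formula \eqref{27} by matching the large‑$z$ expansion of $M$ against the spatial part of the Lax pair \eqref{2}. As a preliminary step I would pin down $\det M$. From \eqref{25} one computes $\det J(x,t,z)\equiv 1$, so $\det M_{+}=\det M_{-}$ on $\mathbb{R}$ and $\det M$ has no jump across the real axis; the residue relations \eqref{26} are of rank one (nilpotent), hence the simple poles of $M$ at $z_{n},\overline{z_{n}}$ are removable for $\det M$. Since $\det\mu_{\pm}=1-z^{-2}$ and the scattering coefficients cancel in the $2\times 2$ Wronskians defining $M_{\pm}$, one gets directly $\det M(x,t,z)=1-z^{-2}$; in particular $\det M$ is non‑vanishing on $\mathbb{C}\setminus\{0,\pm 1\}$, with zeros precisely at the branch points $z=\pm 1$.

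For uniqueness, let $M_{1},M_{2}$ both satisfy the conditions of RH problem \ref{rhp1} and set $\Theta=M_{1}M_{2}^{-1}$, which is well defined off $\{0,\pm1\}$ and the poles. Across $\mathbb{R}$ the jumps cancel, $M_{1,+}M_{2,+}^{-1}=M_{1,-}JJ^{-1}M_{2,-}^{-1}=M_{1,-}M_{2,-}^{-1}$, so $\Theta$ is analytic on $\mathbb{R}$; at $z_{n},\overline{z_{n}}$ the two solutions carry the identical nilpotent residue structure, so these points are removable for $\Theta$; and near $z=0$ the singular parts $\sigma_{2}/z$ of $M_{1}$ and $M_{2}$ cancel in $M_{1}M_{2}^{-1}$, leaving $\Theta$ bounded. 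Thus $\Theta$ extends analytically to $\mathbb{C}\setminus\{\pm1\}$, with $\det\Theta\equiv 1$ and $\Theta=\mathbb{I}+O(1/z)$ as $z\to\infty$. The main obstacle is exactly the branch points $z=\pm1$: there $\det M_{2}$ vanishes to first order, so $M_{2}^{-1}=\mathrm{adj}(M_{2})/\det M_{2}$ could blow up. I would handle this by using that any admissible $M$ stays bounded near $z=\pm1$, its local structure being governed by $E_{\pm}(z)$, so that $M_{2}^{-1}$ has at worst a simple pole and $\Theta=\mathbb{I}+C_{+}/(z-1)+C_{-}/(z+1)$ for constant matrices $C_{\pm}$; imposing $\det\Theta\equiv 1$ forces $\det C_{\pm}=0$, and comparing the common singular parts of $M_{1}$ and $M_{2}$ at $z=\pm1$ eliminates $C_{\pm}$ entirely. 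With the branch points removed, $\Theta$ is entire and tends to $\mathbb{I}$ at infinity, so $\Theta\equiv\mathbb{I}$ by Liouville's theorem, i.e. $M_{1}\equiv M_{2}$.

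For the reconstruction formula I would write the large‑$z$ expansion $M(x,t,z)=\mathbb{I}+z^{-1}M^{(1)}(x,t)+O(z^{-2})$. Since $M(x,t,z)e^{i\theta\sigma_{3}}$ has columns that are ($x$‑independent) rescalings of the Jost solutions, it solves the first equation of \eqref{2}; combining this with $\theta_{x}=\lambda$ and $k-\lambda=1/z$ yields $M_{x}=i\lambda[\sigma_{3},M]+\frac{i}{z}\sigma_{3}M+QM$. Substituting the expansion, using $\lambda=\frac{1}{2}(z-z^{-1})$, and collecting the coefficient of $z^{0}$ gives $\frac{i}{2}[\sigma_{3},M^{(1)}]+Q=0$, hence $Q=-\frac{i}{2}[\sigma_{3},M^{(1)}]$. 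Reading off the $(1,2)$ entry, $q=Q_{12}=-i\,M^{(1)}_{12}$, and since $M^{(1)}_{12}=\lim_{z\to\infty}\big(zM(x,t,z)\big)_{12}$ this is precisely $q(x,t)=-\lim_{z\to\infty}i\big(zM(z)\big)_{12}$, i.e. \eqref{27}; the $(2,1)$ entry reproduces $\overline{q}$ consistently with the symmetry of $M$, which serves as a consistency check. I expect the determinant/branch‑point bookkeeping in the uniqueness step to be the only genuinely delicate point, the reconstruction being a routine asymptotic matching.
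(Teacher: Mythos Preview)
The paper states this proposition without proof, treating it as a standard fact from the inverse scattering construction (the opening of Section~2 refers the reader to \cite{chen-CNSNS} for the direct scattering details). There is therefore nothing in the paper to compare your argument against; what you have written is the standard Liouville-type uniqueness proof together with the usual large-$z$ matching against the $x$-part of the Lax pair, and both parts are essentially correct.

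Your reconstruction derivation is clean and complete. For uniqueness, the handling of the jump, the discrete spectrum, and the singularity at $z=0$ is fine. The one point that is not quite nailed down---and you rightly flag it---is $z=\pm1$, where $\det M=1-z^{-2}$ vanishes. Knowing only $\det C_{\pm}=0$ and that $M_1,M_2$ share ``common singular parts'' is not by itself enough to kill $C_{\pm}$: one needs the additional hypothesis (implicit in the paper, explicit in e.g.\ the Cuccagna--Jenkins treatment \cite{Tian-wang27}) that $M$ remain bounded at $z=\pm1$, together with the fact that the columns of any admissible $M$ become proportional there in the specific way dictated by $E_{\pm}(\pm1)$. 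With that, the apparent simple poles of $M_1M_2^{-1}$ at $z=\pm1$ are seen to be removable directly, and your Liouville step closes. Since RH problem~\ref{rhp1} as written omits an explicit condition at $z=\pm1$, your instinct that this is where the real work lies is exactly right; adding that boundedness hypothesis and the column-degeneracy observation makes the argument complete.
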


Besides, the trace formulae of $s_{11}(z)$ is
\begin{align}
s_{11}(z)=\mbox{exp}\left[-\frac{1}{2\pi i}\int_{\mathbb{R}}\frac{\log(1-\vert r(\zeta)\vert^{2})}{\zeta-z}\mathrm{d}\zeta\right] \prod_{n=1}^{2N}\frac{z-z_{n}}{z-\overline{z_{n}}}.\nonumber
\end{align}

\section{Asymptotics in oscillating region}
In this section, based on the associated matrix RH problem \ref{rhp1}, we study the long time
asymptotics of solution $q(x,t)$ for the defocusing Hirota equation \eqref{1} in oscillating region $\xi<-8$.

\subsection{Distribution of Saddle Points and Signature Table}
It is easily to find that the long time asymptotic behavior of RH problem \ref{rhp1} is influenced by the growth
and decay of the exponential function
\begin{align}
e^{\pm 2ift},\quad f=\frac{1}{2}(z-\frac{1}{z})\left[\frac{x}{t}+\beta\left((z+\frac{1}{z})^{2}+2\right)-\alpha (z+\frac{1}{z})\right].\nonumber
\end{align}
Therefore, in order to ensure the exponential
decaying property, it is necessary to analyse the real part of $2itf$,  given by
\begin{align}
Re(2if)=Imz\left[\vert z\vert^{2}-4(Rez)^{2}+2Rez-\xi-3-4(Rez)^{2} \vert z\vert^{-6}+(2Rez+1)\vert z\vert^{-4}-(\xi+3)\vert z\vert^{-2}\right],\nonumber
\end{align}
where $\xi=\frac{x}{t}$. The stationary phase points can be given out by solving the following equation
\begin{align}\label{30}
f'(z)=\frac{1}{2}z^{-1}\left[3\beta\eta^{3}-2\alpha\eta^{2}-(6\beta-\xi)\eta+4\alpha\right]=0,
\end{align}
where $\eta=z+z^{-1}$. The solutions of equation \eqref{30} are
\begin{align}
&\eta_{1}=\frac{2\alpha}{9\beta}+\sqrt[3]{-\frac{q}{2}+\sqrt{(\frac{q}{2})^{2}+(\frac{p}{3})^{3}}}
+\sqrt[3]{-\frac{q}{2}-\sqrt{(\frac{q}{2})^{2}+(\frac{p}{3})^{3}}},\notag\\
&\eta_{2}=\frac{2\alpha}{9\beta}+\omega\sqrt[3]{-\frac{q}{2}+\sqrt{(\frac{q}{2})^{2}+(\frac{p}{3})^{3}}}
+\omega^{2}\sqrt[3]{-\frac{q}{2}-\sqrt{(\frac{q}{2})^{2}+(\frac{p}{3})^{3}}},\notag\\
&\eta_{3}=\frac{2\alpha}{9\beta}+\omega^{2}\sqrt[3]{-\frac{q}{2}+\sqrt{(\frac{q}{2})^{2}+(\frac{p}{3})^{3}}}
+\omega\sqrt[3]{-\frac{q}{2}-\sqrt{(\frac{q}{2})^{2}+(\frac{p}{3})^{3}}},\nonumber
\end{align}
where
\begin{align}
&\omega=\frac{-1+\sqrt{3}i}{2},\quad p=-\frac{4\alpha^{2}}{27\beta^{2}}-\frac{6\beta-\xi}{3\beta},\notag\\
&q=\frac{2\alpha(\beta\xi+12\beta^{2}-\frac{8}{27}\alpha^{2})}{27\beta^{3}}.\nonumber
\end{align}
Without loss of generality, we take $\alpha=\beta=1$ and have the following distributions of phase points:

\begin{prop}
The distributions of phase points:

\textbf{Case 1:} For $\xi<-8$, the four real phase points $\xi_{j}, j=1, 2, 3, 4$ are located on the jump
contour $\Sigma=\mathbb{R}\setminus \{0\}$. Moreover, we have $\xi_{4}< -1 < \xi_{3} < 0 < \xi_{2} < 1 < \xi_{1}$ and
$\xi_{1}=\frac{\eta_{1}+\sqrt{\eta_{1}^{2}-4}}{2},\xi_{2}=\frac{\eta_{1}-\sqrt{\eta_{1}^{2}-4}}{2},\xi_{3}
=\frac{\eta_{2}+\sqrt{\eta_{2}^{2}-4}}{2},\xi_{4}=\frac{\eta_{2}-\sqrt{\eta_{2}^{2}-4}}{2}$;

\textbf{Case 2:} For $\xi=-8$, the three real phase points $\xi_{j}, j=1, 2, 3$ are located on the jump
contour $\Sigma=\mathbb{R}\setminus \{0\}$. Moreover, we have $\xi_{3}=-1 < 0 < \xi_{2} < 1 < \xi_{1}$;

\textbf{Case 3:} For $-8<\xi<-4$, the two real phase points $\xi_{j}, j=1, 2$ are located on the jump
contour $\Sigma=\mathbb{R}\setminus \{0\}$. Moreover, we have $0 < \xi_{2} < 1 < \xi_{1}$;

\textbf{Case 4:} For $\xi=-4$, the one real phase points $\xi_{1}$ are located on the jump
contour $\Sigma=\mathbb{R}\setminus \{0\}$. Moreover, we have $\xi_{1}=1$;

\textbf{Case 5:} For $\xi>-4$, there are no real phase points being located on the jump
contour $\Sigma=\mathbb{R}\setminus \{0\}$.
\end{prop}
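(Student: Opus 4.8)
The plan is to reduce the statement to an elementary analysis of the real roots of a single cubic and then to translate that back into phase points via the Joukowski-type substitution $\eta=z+z^{-1}$. Setting $\alpha=\beta=1$ in \eqref{30}, the phase points on $\Sigma=\mathbb{R}\setminus\{0\}$ are the zeros of $f'(z)=\tfrac12 z^{-1}g(\eta)$ with $g(\eta)=3\eta^{3}-2\eta^{2}+(\xi-6)\eta+4$; since $g(0)=4\neq0$, these are exactly the pairs $(\eta,z)$ with $g(\eta)=0$, $z^{2}-\eta z+1=0$ and $z\in\mathbb{R}\setminus\{0\}$. First I would record the elementary geometry of $z\mapsto\eta=z+z^{-1}$: it maps $\mathbb{R}\setminus\{0\}$ onto $(-\infty,-2]\cup[2,\infty)$, identifying $z$ with $z^{-1}$, with $\pm1\mapsto\pm2$. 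Hence a real root $\eta^{*}$ of $g$ with $\eta^{*}>2$ yields two real phase points, one in $(0,1)$ and one in $(1,\infty)$; a real root $\eta^{*}<-2$ yields one in $(-1,0)$ and one in $(-\infty,-1)$; $\eta^{*}=\pm2$ yields the single point $\pm1$; and real roots in $(-2,2)$ (and all non-real roots) contribute nothing on $\Sigma$. So the whole problem reduces to counting and locating the real roots of $g$ relative to $\pm2$ as $\xi$ varies.

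For that I would solve the cubic for $\xi$: on $\eta\neq0$, $g(\eta)=0$ is equivalent to $\xi=\Xi(\eta):=-3\eta^{2}+2\eta+6-4\eta^{-1}$. A short computation gives $\Xi(2)=-4$, $\Xi(-2)=-8$, $\Xi(\eta)\to-\infty$ as $\eta\to\pm\infty$, and $\Xi'(\eta)=-6\eta+2+4\eta^{-2}$, so that $\Xi'<0$ on $[2,\infty)$ and $\Xi'>0$ on $(-\infty,-2]$. Thus $\Xi$ restricts to a decreasing bijection $[2,\infty)\to(-\infty,-4]$ and to an increasing bijection $(-\infty,-2]\to(-\infty,-8]$. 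Reading this off: $g$ has exactly one root in $[2,\infty)$ iff $\xi\le-4$ (equal to $2$ iff $\xi=-4$), exactly one root in $(-\infty,-2]$ iff $\xi\le-8$ (equal to $-2$ iff $\xi=-8$), and none otherwise. This single monotonicity statement already produces the five cases.

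Assembling the pieces: $\xi>-4$ gives no phase point (Case 5); $\xi=-4$ gives only $\eta=2$, hence $\xi_{1}=1$ (Case 4); $-8<\xi<-4$ gives one root $\eta_{1}>2$, hence $0<\xi_{2}<1<\xi_{1}$ (Case 3); $\xi=-8$ adds the root $\eta=-2$, hence $\xi_{3}=-1<0<\xi_{2}<1<\xi_{1}$ (Case 2); and $\xi<-8$ gives roots $\eta_{1}>2$ and $\eta_{2}<-2$, hence four phase points (Case 1). The sign pattern $\xi_{1}>1>\xi_{2}>0$ and $0>\xi_{3}>-1>\xi_{4}$ in the last two cases follows from the Vieta relations $\xi_{1}\xi_{2}=1$, $\xi_{3}\xi_{4}=1$ together with $\xi_{1}+\xi_{2}=\eta_{1}>2$ and $\xi_{3}+\xi_{4}=\eta_{2}<-2$, which force one root of each reciprocal pair inside and one outside the unit circle. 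I would also note that in Case 1 the inequalities $g(-2)=-2(\xi+8)>0>2(\xi+4)=g(2)$ force a third real root in $(-2,2)$, so $g$ has three distinct real roots there (casus irreducibilis), consistent with the three explicit radical expressions for $\eta_{1},\eta_{2},\eta_{3}$.

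The only genuinely fiddly point I anticipate is matching the three Cardano expressions $\eta_{1},\eta_{2},\eta_{3}$ (with $\omega=e^{2\pi i/3}$ and fixed branches of the square and cube roots) to "the root in $(2,\infty)$", "the root in $(-\infty,-2)$" and "the root in $(-2,2)$", i.e.\ checking that with the branch conventions as written $\eta_{1}$ really is the largest root and $\eta_{2}$ the most negative one. This is bookkeeping with the trigonometric form of the casus-irreducibilis solution rather than a conceptual obstacle; alternatively one can bypass it entirely by \emph{defining} $\eta_{1},\eta_{2}$ to be the roots in $(2,\infty)$ and $(-\infty,-2)$, which is all that the subsequent analysis uses. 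Everything else is elementary real calculus applied to $\Xi$ together with the Vieta/unit-circle remark.
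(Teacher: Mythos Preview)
Your argument is correct and complete. The paper itself states this proposition without proof: it records the Cardano expressions for $\eta_{1},\eta_{2},\eta_{3}$ and then simply lists the five cases, so there is no ``paper's own proof'' to compare against. Your inversion trick---writing $g(\eta)=0$ as $\xi=\Xi(\eta)=-3\eta^{2}+2\eta+6-4\eta^{-1}$ and reading off the preimages of $\xi$ on $[2,\infty)$ and $(-\infty,-2]$ from the monotonicity of $\Xi$ there---is a clean and fully rigorous way to obtain the case split, and it sidesteps any discriminant computation for the cubic. The Joukowski bookkeeping ($\eta>2$ yields a reciprocal pair straddling $1$, $\eta<-2$ a reciprocal pair straddling $-1$, $|\eta|<2$ nothing real) is exactly the right translation layer.

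Your closing caveat is well taken: matching the specific Cardano labels $\eta_{1},\eta_{2},\eta_{3}$ (with the stated cube-root branches and $\omega=e^{2\pi i/3}$) to ``the root $>2$'', ``the root $<-2$'', ``the root in $(-2,2)$'' is a branch-choice verification, not a structural issue. Since the subsequent analysis in the paper only ever uses the \emph{location} of the relevant roots relative to $\pm2$ (and the resulting ordering $\xi_{4}<-1<\xi_{3}<0<\xi_{2}<1<\xi_{1}$), your suggestion to simply \emph{define} $\eta_{1}$ and $\eta_{2}$ as the roots in $(2,\infty)$ and $(-\infty,-2)$ respectively is the pragmatic fix and loses nothing.
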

In this paper, we just consider the \textbf{Case 1} and \textbf{Case 2}, which are the oscillating region and transition region, respectively. Their decaying regions of $Re(2if)$ are shown in Fig. 2.\\

{\rotatebox{0}{\includegraphics[width=6.0cm,height=6.0cm,angle=0]{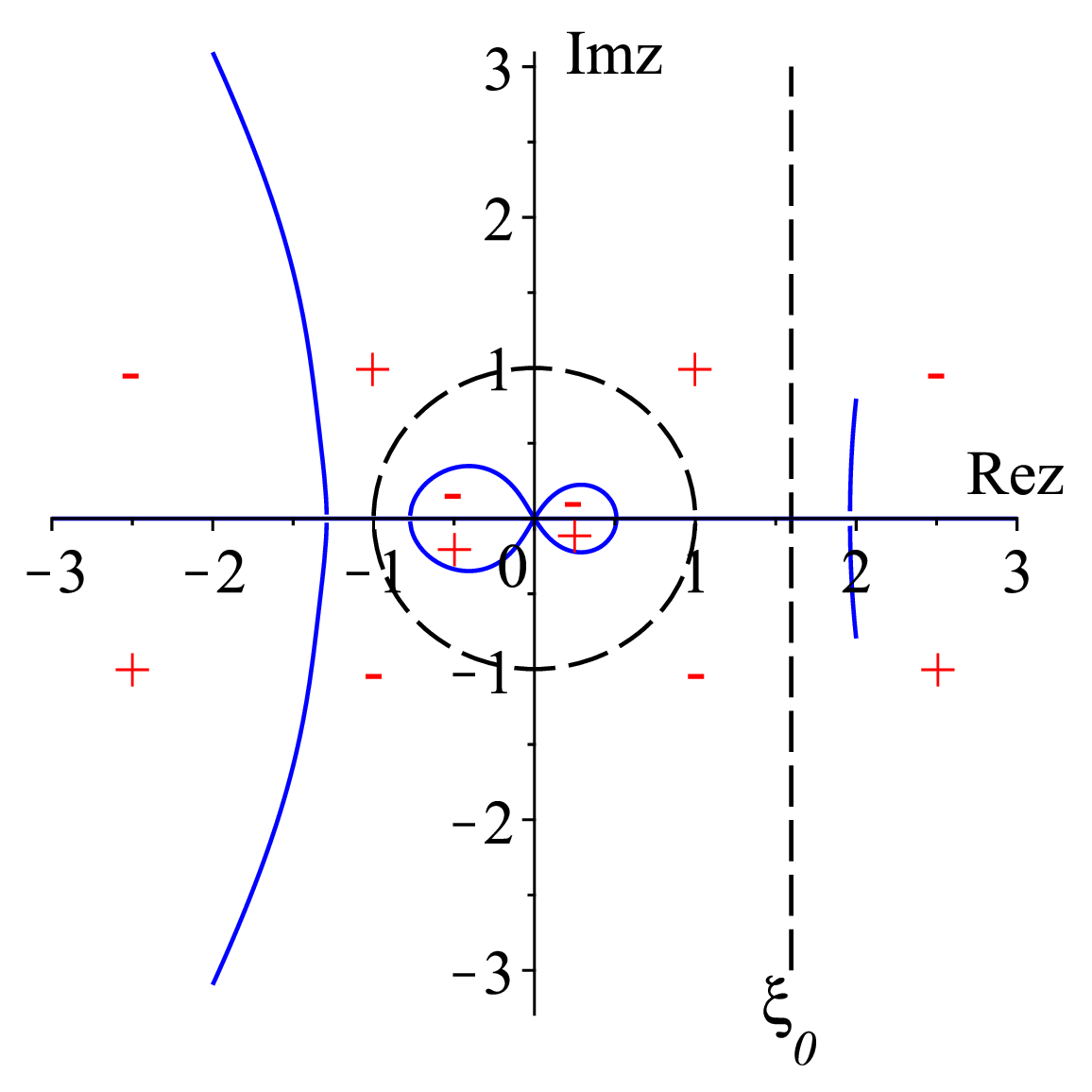}}}
~~
{\rotatebox{0}{\includegraphics[width=6.0cm,height=6.0cm,angle=0]{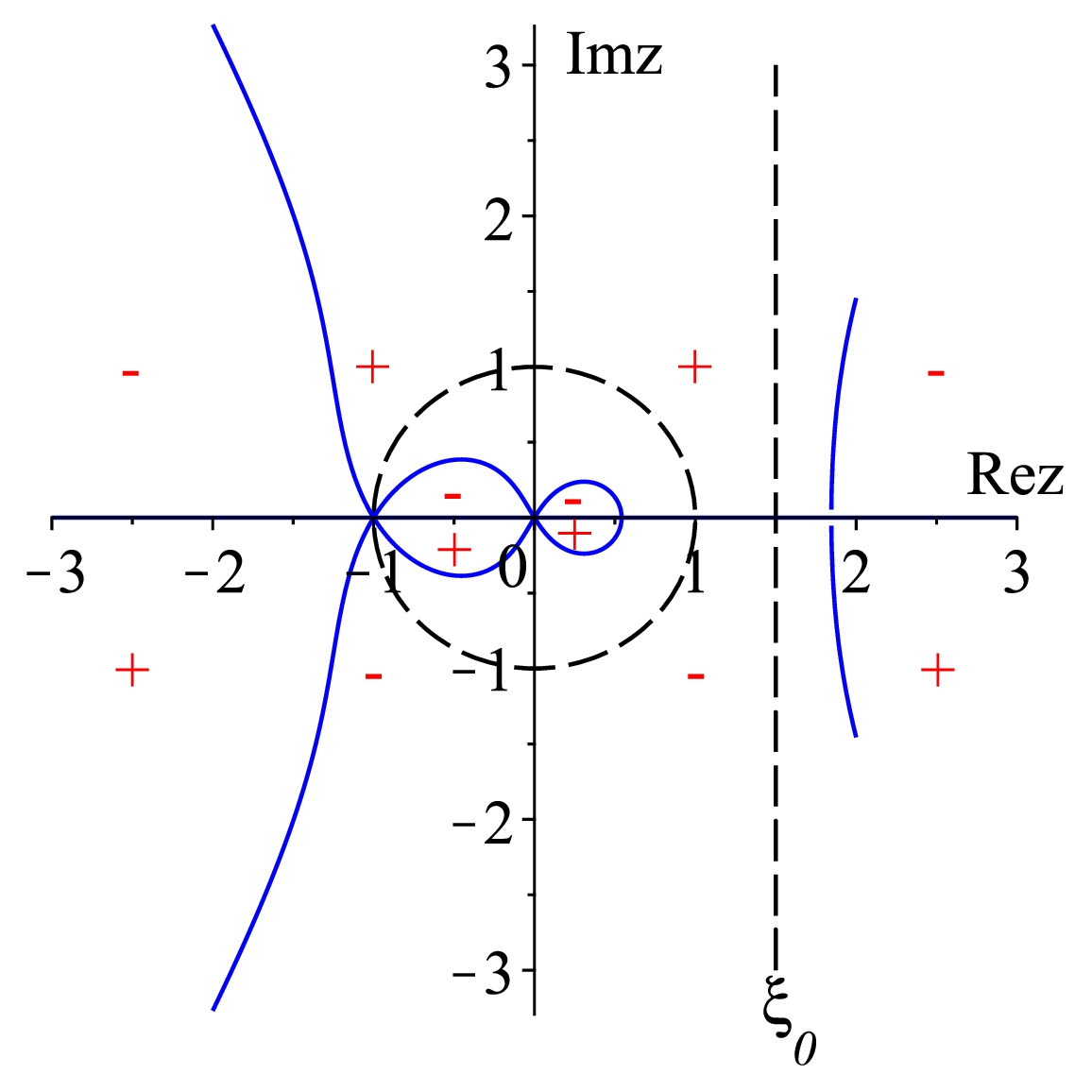}}}\\
$~~~~~~~~~~~~~~~~~~~~~~~(\textbf{a })\xi<-8~~~~
~~~~~~~~~~~~~~~~~~~~~~~~~~~~~~~~~(\textbf{b})\xi=-8$\\
\noindent {\small \textbf{Figure 2.} (Color online) The signature table of $Re(2if(z))$. $``+"$ denotes $Re(2if(z))>0$, $``-"$ denotes $Re(2if(z))<0$.}

\subsection{The first deformation of the basic RH problem}
According to exponential decay Fig. 2, we use following two well known factorizations of the jump matrix $J(z)$ in \eqref{25}:
\begin{align}
J(x, t, z)=
\begin{cases}
\left(\begin{array}{cc}
    1  &  -\overline{r(z)}e^{2itf}\\
    0  &  1\\
\end{array}\right)\left(\begin{array}{cc}
    1  &  0\\
    r(z)e^{-2itf}  &  1\\
\end{array}\right),\ z\in I_{1}, \\
\left(\begin{array}{cc}
    1  &  0\\
    \frac{r(z)e^{-2itf}}{1-\left\vert r(z) \right\vert^{2}}  &  1\\
\end{array}\right)\left(\begin{array}{cc}
    1-\mid r(z) \mid^{2}  &  0\\
    0  &  \frac{1}{1-\mid r(z) \mid^{2}}\\
\end{array}\right)\left(\begin{array}{cc}
    1  &  -\frac{\overline{r(z)}e^{2itf}}{1-\mid r(z) \mid^{2}}\\
    0  &  1\\
\end{array}\right),\ z\in I_{2},
\end{cases}\nonumber
\end{align}
where $I_{1}=(\xi_{4},\xi_{3})\cup(\xi_{2},\xi_{1}), I_{2}=(-\infty,\xi_{4})\cup(\xi_{3},0)\cup(0,\xi_{2})\cup(\xi_{1},+\infty)$. Then  some appropriate sequence of deformations of the
RH problem \ref{rhp1} need to be performed by introducing a scalar function $\delta(z)$:
\begin{align}\label{34}
\delta(z)=\mbox{exp}\left(i\int_{I_{2}}\frac{\nu(\zeta)}{\zeta-z}\mathrm{d}\zeta\right),\
\nu(\zeta)=\frac{1}{2\pi}\ln(1-\left\vert r(\zeta)\right\vert^{2}),
\end{align}
which possesses well properties.

\begin{prop}\label{pjia}
 The function $\delta(z)$ has that:\\

 $\bullet$ $\delta(z)$ is analytical in $\mathbb{C}\setminus I_{2}$;\\

 $\bullet$ For $z\in I_{2}$, the boundary values $\delta(z)$ admits that
\begin{align} \delta_{-}(z)=\delta_{+}(z)(1-\left\vert r(z) \right\vert^{2}),\ z\in I_{2};\nonumber\end{align}

$\bullet$ As $z \rightarrow \infty$, $\delta(z)=1-\frac{i}{z}\int_{I_{2}}\nu(s)\mathrm{d}s+O(z^{-2})$;

$\bullet$ As $z\rightarrow \xi_{k}( k=1,\cdots,4)$ along ray $z=\xi_{k}+e^{i\varphi_{k}}\mathbb{R}_{+}$ with $\left\vert \varphi_{k}\right\vert \leq c < \pi$, one has
\begin{align} \left\vert\delta(z)-\delta_{0}(\xi_{k})(z-\xi_{k})^{\epsilon_{k} i\nu(\xi_{k})}\right\vert\lesssim\left\vert z-\xi_{k}\right\vert^{\frac{1}{2}},\ \epsilon_{k}=(-1)^{k+1},\nonumber \end{align}
where  \begin{align}\delta_{0}(\xi_{k})=\rm{exp}\{i\gamma(\xi_{k},\xi_{k})\},\
&\gamma(z,\xi_{k})=-\epsilon_{k}\nu(\xi_{k})\ln(z-\xi_{k}+1)+\int_{I_{2}}\frac{\nu(\zeta)-\chi(\zeta)\nu(\xi_{k})}{\zeta-z}\mathrm{d}\zeta,\nonumber\end{align}
and $\chi(\zeta)$ is the characteristic function of $\xi_{k}-1,\xi_{k}$.
\end{prop}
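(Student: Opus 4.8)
The plan is to treat the four assertions in turn. The first three are short consequences of the Cauchy--integral representation $\log\delta(z)=i\int_{I_{2}}\frac{\nu(\zeta)}{\zeta-z}\,d\zeta$ and the Sokhotski--Plemelj formula, while the last one carries the real work. First I would record that $1-|r(z)|^{2}=|s_{11}(z)|^{-2}\in(0,1]$ on $I_{2}$ (this uses $|s_{11}|^{2}-|s_{21}|^{2}=1$, i.e.\ the defocusing structure), that it stays bounded away from $0$ on compact subsets of $I_{2}$ and tends to $1$ at $\pm\infty$ with an integrable rate, and that it inherits the Hölder regularity of $r$ guaranteed by $q_{0}\mp1\in H^{4,4}(\mathbb R)$; hence $\nu=\frac1{2\pi}\ln(1-|r|^{2})\in L^{1}(I_{2})$ is a regular density. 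Consequently $\delta$ is analytic and nonvanishing on $\mathbb C\setminus I_{2}$ because the integrand is holomorphic there; for $z\in I_{2}$, $\log\delta_{+}-\log\delta_{-}=i\cdot2\pi i\,\nu(z)=-\ln(1-|r(z)|^{2})$, which is the jump relation $\delta_{-}(z)=\delta_{+}(z)(1-|r(z)|^{2})$; and expanding $\frac1{\zeta-z}=-z^{-1}-\zeta z^{-2}-\cdots$ for large $|z|$, together with the decay of $\nu$ at $\pm\infty$, gives $\log\delta(z)=-\frac iz\int_{I_{2}}\nu(s)\,ds+O(z^{-2})$, hence the stated expansion at infinity.

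For the behaviour at a phase point $\xi_{k}$, I would fix the side on which $I_{2}$ lies locally at $\xi_{k}$ (this determines $\epsilon_{k}=(-1)^{k+1}$ and the branch) and let $\chi$ be the characteristic function of the adjacent unit subinterval of $I_{2}$ ending at $\xi_{k}$. Then I would split
\begin{align}
\log\delta(z)=i\nu(\xi_{k})\int\frac{\chi(\zeta)}{\zeta-z}\,d\zeta+i\int_{I_{2}}\frac{\nu(\zeta)-\chi(\zeta)\nu(\xi_{k})}{\zeta-z}\,d\zeta .
\end{align}
The first integral is elementary, $\int\frac{\chi(\zeta)}{\zeta-z}\,d\zeta=\pm\bigl[\ln(z-\xi_{k})-\ln(z-\xi_{k}+1)\bigr]$ up to an additive constant fixed by the branch, so its $\xi_{k}$-singular part contributes exactly the factor $(z-\xi_{k})^{\epsilon_{k}i\nu(\xi_{k})}$, while the analytic term $-\epsilon_{k}\nu(\xi_{k})\ln(z-\xi_{k}+1)$ and the second, now regularised, integral together form $i\gamma(z,\xi_{k})$ as in the statement. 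Thus $\delta(z)=(z-\xi_{k})^{\epsilon_{k}i\nu(\xi_{k})}e^{i\gamma(z,\xi_{k})}$ identically, and the proposition reduces to showing $|\gamma(z,\xi_{k})-\gamma(\xi_{k},\xi_{k})|\lesssim|z-\xi_{k}|^{1/2}$ along $z=\xi_{k}+e^{i\varphi_{k}}\mathbb R_{+}$ with $|\varphi_{k}|\le c<\pi$.

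To get that estimate I would use $\frac1{\zeta-z}-\frac1{\zeta-\xi_{k}}=\frac{z-\xi_{k}}{(\zeta-z)(\zeta-\xi_{k})}$, split the integration region into $\{|\zeta-\xi_{k}|<2|z-\xi_{k}|\}$ and its complement, and use that $\nu(\zeta)-\chi(\zeta)\nu(\xi_{k})$ vanishes at a Hölder rate as $\zeta\to\xi_{k}$: on the inner region one bounds the density by $|\zeta-\xi_{k}|^{1/2}$ and integrates $\frac{|z-\xi_{k}|}{|\zeta-z|\,|\zeta-\xi_{k}|^{1/2}}$, and on the outer region one uses $|\zeta-z|\gtrsim|\zeta-\xi_{k}|$ and the $L^{1}$ tail of $\nu$; both pieces are $\lesssim|z-\xi_{k}|^{1/2}$. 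Since $|(z-\xi_{k})^{\epsilon_{k}i\nu(\xi_{k})}|=e^{-\epsilon_{k}\nu(\xi_{k})\varphi_{k}}$ is bounded on the sector and $e^{i\gamma(z,\xi_{k})}=\delta_{0}(\xi_{k})\bigl(1+O(|z-\xi_{k}|^{1/2})\bigr)$, multiplying the two yields $|\delta(z)-\delta_{0}(\xi_{k})(z-\xi_{k})^{\epsilon_{k}i\nu(\xi_{k})}|\lesssim|z-\xi_{k}|^{1/2}$.

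The main obstacle is precisely this Hölder-$\tfrac12$ control of the Cauchy integral defining $\gamma$ near the endpoint $\xi_{k}$; everything else is bookkeeping with Plemelj's formula. That step rests on two inputs worth isolating: (i) $r$, hence $\nu$, is at least Hölder continuous on $I_{2}$, which follows from the mapping properties of the direct scattering transform on $\tanh x+H^{4,4}(\mathbb R)$ reviewed in Section~2; and (ii) $1-|r|^{2}$ is bounded below on compact subsets of $I_{2}$ and decays suitably at $\pm\infty$ — the defocusing feature — which makes $\nu$ integrable and guarantees, in addition, that $\delta$ and $\delta^{-1}$ are uniformly bounded on $\mathbb C\setminus I_{2}$, a fact used repeatedly in the subsequent deformations.
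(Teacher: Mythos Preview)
The paper states this proposition without proof; it is a standard result of Deift--Zhou type, and the authors simply record it before moving on to the interpolation function $G(z)$. Your argument is correct and is exactly the classical route: Plemelj for the jump, geometric expansion of the Cauchy kernel for the behaviour at infinity, and the freeze-and-subtract splitting $\nu=\chi\nu(\xi_{k})+(\nu-\chi\nu(\xi_{k}))$ to extract the power singularity $(z-\xi_{k})^{\epsilon_{k}i\nu(\xi_{k})}$ and leave a H\"older-continuous remainder $\gamma(z,\xi_{k})$.

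Two minor remarks. First, the paper's description of $\chi$ as ``the characteristic function of $\xi_{k}-1,\xi_{k}$'' is imprecise---as you rightly note, the adjacent unit subinterval of $I_{2}$ lies to the \emph{right} of $\xi_{1}$ and $\xi_{3}$ and to the left of $\xi_{2}$ and $\xi_{4}$; your formulation (``the adjacent unit subinterval of $I_{2}$ ending at $\xi_{k}$'') is the correct one and is what fixes the sign $\epsilon_{k}=(-1)^{k+1}$. Second, your near--far decomposition for the H\"older-$\tfrac12$ bound on $\gamma(z,\xi_{k})-\gamma(\xi_{k},\xi_{k})$ is the standard Privalov-type estimate; the only prerequisite is H\"older continuity of $r$ (hence of $\nu$) on $I_{2}$, which indeed follows from the $H^{4,4}$ regularity assumed on the initial data. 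So your proof fills in exactly what the paper omits.
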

Take
\begin{align}\label{38}
\rho<\frac{1}{3}min\{\underset{z_{n}\in \Upsilon}{min}\vert Im z_{n}\vert, \underset{z_{n}, z_{j}\in \Upsilon}{min}\vert z_{n}-z_{j}\vert,\frac{1}{2}\underset{j=1,\cdots,4}{min}\vert \xi_{j}\pm 1\vert,\frac{1}{2}\underset{j=1,\cdots,4}{min}\vert \xi_{j}\vert\}
\end{align}
as the radius of small circles at the center $z_{n}$ or $\overline{z_{n}}$.

Since existence of exponential growth residue conditions \eqref{26}, we further construct the interpolation function to transform the residue conditions into the decreasing jump matrices
\begin{align}
G(z)=
\begin{cases}
\left(\begin{array}{cc}
    1  &  0\\
    \frac{-c_{n}e^{-2ift}}{z-z_{n}}  &  1\\
\end{array}\right),\qquad \vert z-z_{n}\vert<\rho,\\
\left(\begin{array}{cc}
    1  &  \frac{-\overline{c_{n}}e^{2ift}}{z-\overline{z_{n}}}\\
    0  &  1\\
\end{array}\right), \qquad \vert z-\overline{z_{n}}\vert<\rho, \\
\mathbb{I} \qquad elsewhere.
\end{cases}\nonumber
\end{align}

Next, using the scalar function $\delta(z)$ and interpolation function $G(z)$, we transform $M(z)$ into
matrix-value function $M^{(1)}(z)$
\begin{align}\label{41} M^{(1)}(z)=M(z)G(z)\delta^{\sigma_{3}}.\end{align}
Then, the following matrix RH problem for $M^{(1)}(z)$ is:
\begin{rhp}\label{rhp2}
Find an analysis function $M^{(1)}(x, t, z)$ with the
following properties:\\

 $\bullet$ $M^{(1)}(x, t, z)$ is analytical in $\mathbb{C}\setminus \Sigma^{(1)}$ and $\Sigma^{(1)}=\mathbb{R}\cup \Sigma^{pole}$, where $\Sigma^{pole}=\bigcup_{n=1}^{N}\left\{z\in \mathbb{C}: \vert z-z_{n}\vert=\rho \right.$ or $\left. \vert z-\overline{z_{n}}\vert=\rho\right\}$ with a counterclockwise direction;\\

 $\bullet$ $M^{(1)}_{+}(x, t, z)=M^{(1)}_{-}(x, t, z)J^{(1)}(x, t, z), z\in \Sigma^{(1)}$, where
\begin{align}
J^{(1)}(x, t, z)=
\begin{cases}
\left(\begin{array}{cc}
    1  &  -\overline{r(z)}e^{2itf}\delta^{-2}\\
    0  &  1\\
\end{array}\right)\left(\begin{array}{cc}
    1  &  0\\
    r(z)e^{-2itf}\delta^{2}  &  1\\
\end{array}\right),\ z\in I_{1}, \\
\left(\begin{array}{cc}
    1  &  0\\
    \frac{r(z)e^{-2itf}\delta_{-}^{2}}{1-\left\vert r(z) \right\vert^{2}}  &  1\\
\end{array}\right)\left(\begin{array}{cc}
    1  &  -\frac{\overline{r(z)}e^{2itf}\delta_{+}^{-2}}{1-\mid r(z) \mid^{2}}\\
    0  &  1\\
\end{array}\right),\ z\in I_{2},\\
\left(\begin{array}{cc}
    1  &  0\\
    \frac{-c_{n}e^{-2ift}\delta^{2}}{z-z_{n}}  &  1\\
\end{array}\right),\qquad \vert z-z_{n}\vert<\rho,\\
\left(\begin{array}{cc}
    1  &  \frac{-\overline{c_{n}}e^{2ift}\delta^{-2}}{z-\overline{z_{n}}}\\
    0  &  1\\
\end{array}\right), \qquad \vert z-\overline{z_{n}}\vert<\rho;
\end{cases}\nonumber
\end{align}

$\bullet$ $M^{(1)}(x, t, z)=\mathbb{I}+O(\frac{1}{z})$ as $z\rightarrow\infty$;\\

$\bullet$ $M^{(1)}(x, t, z)=\frac{\sigma_{2}}{z}+O(1)$ as $z\rightarrow 0$ .
\end{rhp}

Due to the jump matrices on the circles $\vert z-z_{n}\vert=\rho$ or $\vert z-\overline{z_{n}}\vert=\rho$ exponentially decay
to the identity matrix as $t\rightarrow \infty$, the RH problem \ref{rhp2} is asymptotically
equivalent to the RH problem below.
\begin{rhp}\label{rhp3}
Find an analysis function $M^{(2)}(x, t, z)$ with the
following properties:\\

 $\bullet$ $M^{(2)}(x, t, z)$ is analytical in $\mathbb{C}\setminus \mathbb{R}$;\\

 $\bullet$ $M^{(2)}_{+}(x, t, z)=M^{(2)}_{-}(x, t, z)J^{(2)}(x, t, z), z\in \mathbb{R}$, where
\begin{align}\label{43}
J^{(2)}(x, t, z)=
\begin{cases}
\left(\begin{array}{cc}
    1  &  -\overline{r(z)}e^{2itf}\delta^{-2}\\
    0  &  1\\
\end{array}\right)\left(\begin{array}{cc}
    1  &  0\\
    r(z)e^{-2itf}\delta^{2}  &  1\\
\end{array}\right),\ z\in I_{1}, \\
\left(\begin{array}{cc}
    1  &  0\\
    \frac{r(z)e^{-2itf}\delta_{-}^{2}}{1-\left\vert r(z) \right\vert^{2}}  &  1\\
\end{array}\right)\left(\begin{array}{cc}
    1  &  -\frac{\overline{r(z)}e^{2itf}\delta_{+}^{-2}}{1-\mid r(z) \mid^{2}}\\
    0  &  1\\
\end{array}\right),\ z\in I_{2};
\end{cases}
\end{align}

$\bullet$ $M^{(2)}(x, t, z)=\mathbb{I}+O(\frac{1}{z})$ as $z\rightarrow\infty$;\\

$\bullet$ $M^{(2)}(x, t, z)=\frac{\sigma_{2}}{z}+O(1)$ as $z\rightarrow 0$ .
\end{rhp}

\subsection{The construction of the mixed $\bar{\partial}$-RH problem}
Motivated by the ideas in \cite{Tian-wang24,Tian-wang25,Tian-wang26,Tian-wang27}, we carry out the continuous extensions of the jump matrix off
the real axis, which results  the oscillatory jump into  the decaying jumps. In order to accomplish this purpose and let the opened jump path does not intersect the small disk of any pole, we choose a sufficiently small angle, given by $0<\phi(\xi)<min\{\frac{\pi}{4}, \arctan(\frac{\rho}{1-\vert \xi_{2}(\xi)\vert}), \arctan(\frac{\rho}{1-\vert \xi_{3}(\xi)\vert})\}$. Define
\begin{align}
&\gamma_{41}=\xi_{4}+e^{i(\pi-\phi)}\mathbb{R}_{+},\  \gamma_{44}=\xi_{4}+e^{i\phi}\widetilde{d_{2}}, \ \gamma_{34}=\xi_{3}+e^{i(\pi-\phi)}\widetilde{d_{2}},\notag\\
&\gamma_{31}=\xi_{3}+e^{i\phi}\widetilde{d_{1}}, \ \gamma_{01}=\xi_{0}+e^{i(\pi-\phi)}\widetilde{d_{1}}, \ \gamma_{04}=\xi_{0}+e^{i\phi}d_{1}, \notag\\
&\gamma_{21}=\xi_{2}+e^{i(\pi-\phi)}d_{1},\ \gamma_{24}=\xi_{2}+e^{i\phi}d_{2},\ \gamma_{14}=\xi_{1}+e^{i(\pi-\phi)}d_{2},\notag\\
&\gamma_{11}=\xi_{1}+e^{i\phi}\mathbb{R}_{+}, \  \gamma_{42}=\overline{\gamma_{41}}, \  \gamma_{43}=\overline{\gamma_{44}}, \ \gamma_{33}=\overline{\gamma_{34}}, \ \gamma_{32}=\overline{\gamma_{31}}, \notag\\
&\gamma_{02}=\overline{\gamma_{01}}, \ \gamma_{03}=\overline{\gamma_{04}}, \ \gamma_{22}=\overline{\gamma_{21}}, \ \gamma_{23}=\overline{\gamma_{24}}, \ \gamma_{13}=\overline{\gamma_{14}}, \ \gamma_{12}=\overline{\gamma_{11}},\notag\\
&\gamma_{4+}=\frac{\xi_{4}+\xi_{3}}{2}+e^{\frac{i\pi}{2}}\widetilde{d_{3}},\ \gamma_{3+}=\frac{\xi_{3}}{2}+e^{\frac{i\pi}{2}}\widetilde{d_{4}},\ \gamma_{2+}=\frac{\xi_{2}}{2}+e^{\frac{i\pi}{2}}d_{4},\notag\\
&\gamma_{1+}=\frac{\xi_{2}+\xi_{1}}{2}+e^{\frac{i\pi}{2}}d_{3},\ \gamma_{4-}=\overline{\gamma_{4+}},\ \gamma_{3-}=\overline{\gamma_{3+}},\ \gamma_{2-}=\overline{\gamma_{2+}},\ \gamma_{1-}=\overline{\gamma_{1+}},\nonumber
\end{align}
where $\xi_{0}=0$ and
\begin{align}
&d_{1}\in (0,\frac{\xi_{2}}{2\cos \phi}),\ d_{2}\in (0,\frac{\xi_{1}-\xi_{2}}{2\cos \phi}),\ d_{3}\in (0,\frac{(\xi_{1}-\xi_{2})\tan \phi}{2}),\ d_{4}\in (0,\frac{\xi_{2}\tan \phi}{2}),\notag\\
&\widetilde{d_{1}}\in (0,\frac{\vert\xi_{3}\vert}{2\cos \phi}),\ \widetilde{d_{2}}\in (0,\frac{\xi_{3}-\xi_{4}}{2\cos \phi}),\ \widetilde{d_{3}}\in (0,\frac{(\xi_{3}-\xi_{4})\tan \phi}{2}),\ \widetilde{d_{4}}\in (0,\frac{\vert\xi_{3}\vert\tan \phi}{2}).\nonumber
\end{align}
Then, the complex plane $\mathbb{C}$ is separated into twenty two open sectors $\Omega_{ij}(i=0,\cdots,4,j=1\cdots,4), \Omega_{5}$ and $\Omega_{6}$ see Fig. 3.

\begin{prop}\label{prop1}
There exist functions $R_{kj}\rightarrow \mathbb{C},k=0,\cdots, 4, j=1,\cdots, 4$ such that
\begin{align} &R_{k1}=\begin{cases}
\frac{\overline{r(z)}}{1-\mid r(z)\mid^{2}}\delta_{+}^{-2}(z) \qquad z\in I_{k1},\\
\frac{\overline{r(\xi_{k})}}{1-\mid r(\xi_{k})\mid^{2}}\delta_{0}^{-2}(\xi_{k})(z-\xi_{k})^{-2i\nu(\xi_{k})\epsilon_{k}} \qquad z\in \gamma_{k1},\\
\end{cases}\notag\\
&R_{k2}=\begin{cases}
\frac{r(z)}{1-\mid r(z)\mid^{2}}\delta_{-}^{2}(z) \qquad z\in I_{k2},\\
\frac{r(\xi_{k})}{1-\mid r(\xi_{k})\mid^{2}}\delta_{0}^{2}(\xi_{k})(z-\xi_{k})^{2i\nu(\xi_{k})\epsilon_{k}} \qquad z\in \gamma_{k2},\\
\end{cases}\notag\\
&R_{k3}=\begin{cases}
\overline{r(z)}\delta^{-2}(z) \qquad z\in I_{k3},\\
\overline{r(\xi_{k})}\delta_{0}^{-2}(\xi_{k})(z-\xi_{k})^{-2i\nu(\xi_{k})\epsilon_{k}} \qquad z\in \gamma_{k3},\\
\end{cases}\notag\\
&R_{k4}=\begin{cases}
r(z)\delta^{2}(z) \qquad z\in I_{k4},\\
r(\xi_{k})\delta_{0}^{2}(\xi_{k})(z-\xi_{k})^{2i\nu(\xi_{k})\epsilon_{k}} \qquad z\in \gamma_{k4},\\
\end{cases}\notag\\
&R_{01}=\begin{cases}
\frac{\overline{r(z)}}{1-\mid r(z)\mid^{2}}\delta_{+}^{-2}(z) \qquad z\in I_{01},\\
0 \qquad z\in \gamma_{01},\\
\end{cases},\
R_{02}=\begin{cases}
\frac{r(z)}{1-\mid r(z)\mid^{2}}\delta_{-}^{2}(z) \qquad z\in I_{02},\\
0 \qquad z\in \gamma_{02},\\
\end{cases}\notag\\
&
R_{03}=\begin{cases}
\frac{r(z)}{1-\mid r(z)\mid^{2}}\delta_{-}^{2}(z) \qquad z\in I_{03},\\
0 \qquad z\in \gamma_{03},\\
\end{cases}\ R_{04}=\begin{cases}
\frac{\overline{r(z)}}{1-\mid r(z)\mid^{2}}\delta_{+}^{-2}(z) \qquad z\in I_{04},\\
0 \qquad z\in \gamma_{04},\\
\end{cases}\nonumber
\end{align}
where
\begin{align}
&I_{01}=I_{02}=(\frac{\xi_{3}}{2},0),\ I_{03}=I_{04}=(0,\frac{\xi_{2}}{2}),\  I_{11}=I_{12}=(\xi_{1},+\infty),\   I_{13}=I_{14}=(\frac{\xi_{1}+\xi_{2}}{2},\xi_{1}),\notag\\
&I_{23}=I_{24}=(\xi_{2},\frac{\xi_{1}+\xi_{2}}{2}),\ I_{21}=I_{22}=(\frac{\xi_{2}}{2},\xi_{2}),\ I_{31}=I_{32}=(\xi_{3},\frac{\xi_{3}}{2}),\notag\\
& I_{33}=I_{34}=(\frac{\xi_{3}+\xi_{4}}{2},\xi_{3}),\ I_{43}=I_{44}=(\xi_{4},\frac{\xi_{3}+\xi_{4}}{2}),\  I_{41}=I_{42}=(-\infty,\xi_{4}),\nonumber
\end{align}
and we have the following estimate for $k=0,\cdots, 4$
\begin{align}\label{48} &\left\vert R_{kj}\right\vert\lesssim \sin^{2}\left(arg(z-\xi_{k})\right)+<Re z>^{-1},\notag\\
&\left\vert\bar{\partial} R_{kj}\right\vert\lesssim \left\vert z-\xi_{k}\right\vert^{-1/2}+\left\vert p_{kj}'(Re z)\right\vert,\notag\\
&\left\vert\bar{\partial} R_{kj}\right\vert=0, \qquad \mbox{if} \qquad z\in \Omega_{5}\cup\Omega_{6},\end{align}
where $<Re z>=\sqrt{1+(Re z)^{2}}, p_{k1}= \frac{\overline{r(z)}}{1-\mid r(z)\mid^{2}}, p_{k2}=\frac{r(z)}{1-\mid r(z)\mid^{2}}, p_{k3}=\overline{r(z)}, p_{k4}=r(z)$.
\end{prop}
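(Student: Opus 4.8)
The plan is to construct each extension $R_{kj}$ explicitly by interpolating between its prescribed boundary value on the real interval $I_{kj}$ and its prescribed value on the ray $\gamma_{kj}$, and then to verify the three estimates in \eqref{48} directly from this construction. Concretely, near a phase point $\xi_k$ I would introduce a local polar-type variable and define
\begin{align}
R_{kj}(z)=\mathcal{R}_{kj}(\xi_k)(z-\xi_k)^{\pm 2i\nu(\xi_k)\epsilon_k}\cos\!\big(2\,\mathrm{arg}(z-\xi_k)\big)+\big(1-\cos(2\,\mathrm{arg}(z-\xi_k))\big)\,p_{kj}(\mathrm{Re}\,z)\,\delta^{\mp 2}(z),\nonumber
\end{align}
where $\mathcal{R}_{kj}(\xi_k)$ denotes the corresponding coefficient ($\overline{r(\xi_k)}/(1-|r(\xi_k)|^2)$, etc.) times $\delta_0^{\mp2}(\xi_k)$; the cutoff $\cos(2\,\mathrm{arg}(z-\xi_k))$ equals $1$ on the real axis (where $\mathrm{arg}=0$ or $\pi$) so that $R_{kj}$ matches $p_{kj}\delta^{\mp2}$ on $I_{kj}$, and equals $-1$ on the vertical rays emanating from the midpoints; I would then rescale so that on $\gamma_{kj}$ (at angle $\phi$ or $\pi-\phi$, small) the leading term dominates and $R_{kj}$ reduces to the model power function. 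For the sectors $\Omega_5,\Omega_6$ far from all phase points I would simply take $R_{kj}$ constant in the transverse direction so $\bar\partial R_{kj}=0$ there. The $L^\infty$ bound $|R_{kj}|\lesssim \sin^2(\mathrm{arg}(z-\xi_k))+\langle\mathrm{Re}\,z\rangle^{-1}$ comes from the fact that $|r|$ and $|r|/(1-|r|^2)$ are bounded on $\mathbb R$ (since $q_0\mp1\in H^{4,4}$ forces $\|r\|_\infty<1$ on the relevant set and $r$ decays at $\infty$), while the $\sin^2$ factor records that at $\mathrm{arg}(z-\xi_k)=0$ the value is purely the real-line data which is itself $O(\langle\mathrm{Re}\,z\rangle^{-1})$ away from a fixed neighbourhood of $\xi_k$.

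For the $\bar\partial$ estimate, I would compute $\bar\partial R_{kj}$ using $\bar\partial=\tfrac12(\partial_{\mathrm{Re}\,z}+i\partial_{\mathrm{Im}\,z})$ in the local variable. Differentiating the interpolation formula produces two types of terms: one where $\bar\partial$ hits the smooth angular cutoff, which is $O(|z-\xi_k|^{-1})$ times the bounded amplitude but is multiplied by a factor vanishing like $|z-\xi_k|^{1/2}$ because of the Hölder-type estimate on $\delta$ in Proposition \ref{pjia} (the bound $|\delta(z)-\delta_0(\xi_k)(z-\xi_k)^{\epsilon_k i\nu(\xi_k)}|\lesssim|z-\xi_k|^{1/2}$), yielding the $|z-\xi_k|^{-1/2}$ contribution; and one where $\bar\partial$ hits $p_{kj}(\mathrm{Re}\,z)$, which since $p_{kj}$ is a smooth function of the real variable only contributes $|p_{kj}'(\mathrm{Re}\,z)|$. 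This is exactly the claimed bound $|\bar\partial R_{kj}|\lesssim|z-\xi_k|^{-1/2}+|p_{kj}'(\mathrm{Re}\,z)|$. The vanishing of $\bar\partial R_{kj}$ on $\Omega_5\cup\Omega_6$ is immediate from the constant-in-transverse-direction choice there.

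The main obstacle will be the bookkeeping near the two phase points $\xi_2,\xi_3$ that straddle the singular point $z=0$ and the unit circle where the discrete spectrum and the poles of the RH problem sit: the extension domains $\gamma_{kj}$ must be chosen so the opened contours do not enter any disk of radius $\rho$ around $z_n$ or $\bar z_n$, which is why the angle $\phi$ is constrained by $\arctan(\rho/(1-|\xi_2|))$ and $\arctan(\rho/(1-|\xi_3|))$, and one must check that the "middle" extensions $R_{0j}$ near $z=0$ can be taken to vanish on $\gamma_{0j}$ (so that the extension is trivial there and no spurious singularity at $z=0$ is introduced) while still being continuous. Verifying that all twenty-two sectors fit together consistently — that adjacent $R_{kj}$ agree on shared rays and that the smoothness/Hölder class of $r$ (guaranteed by $q_0\mp1\in H^{4,4}$, hence $r\in H^1\cap\langle\cdot\rangle^{-1}L^\infty$ on $\mathbb R$) is enough to run the argument uniformly in all of them — is the tedious part; the analytic estimates themselves are a routine consequence of Proposition \ref{pjia} and the decay of $r$. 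I expect to state the construction schematically for a representative $k$ and indicate that the remaining cases are analogous, as is standard in the $\bar\partial$-steepest descent literature \cite{Tian-wang24,Tian-wang25,Tian-wang26,Tian-wang27}.
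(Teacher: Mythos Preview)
Your overall approach is correct and is precisely what the paper does: it gives no proof at all, merely the sentence ``Following the ideas in \cite{Wang-CMP}, we can prove the above proposition similarly.'' So your sketch is already far more detailed than the paper's own treatment, and the mechanism you describe --- angular interpolation between the real-axis data $p_{kj}\delta^{\mp2}$ and the model boundary value on $\gamma_{kj}$, then harvesting the $|z-\xi_k|^{-1/2}$ term from the H\"older bound on $\delta$ in Proposition~\ref{pjia} and the $|p_{kj}'|$ term from differentiating the real-variable part --- is exactly the standard argument in the cited $\bar\partial$-literature.

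That said, your explicit formula is miswritten in two places. First, the roles of the two interpolated terms are swapped: on the real axis your cutoff $\cos(2\arg(z-\xi_k))$ equals $1$, so your formula returns the model power $\mathcal R_{kj}(\xi_k)(z-\xi_k)^{\pm2i\nu\epsilon_k}$ there, not $p_{kj}(\mathrm{Re}\,z)\delta^{\mp2}(z)$ as required by the boundary condition on $I_{kj}$. Second, the fixed constant $2$ in the cosine cannot make the cutoff vanish on $\gamma_{kj}$, which sits at the small angle $\phi$; the standard choice is $\cos\!\big(c_0\arg(z-\xi_k)\big)$ with $c_0=\pi/(2\phi)$, so that the cutoff is $1$ on the real axis and $0$ on $\gamma_{kj}$. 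With those two corrections (interchange the two summands and replace $2$ by $\pi/(2\phi)$) your construction matches the one in \cite{Wang-CMP,Tian-wang27} and the estimates go through exactly as you describe.
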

\centerline{\begin{tikzpicture}[scale=0.8]
\draw[-][thick](5,-2)--(5,-1);
\draw[->][thick](5,0)--(5,-1);
\draw[->][thick](5,2)--(5,1);
\draw[-][thick](5,0)--(5,1);
\draw[-][thick](2,-1)--(2,-0.5);
\draw[->][thick](2,0)--(2,-0.5);
\draw[->][thick](2,1)--(2,0.5);
\draw[-][thick](2,0)--(2,0.5);
\draw[-][thick](0,-1)--(0,-0.5);
\draw[->][thick](0,0)--(0,-0.5);
\draw[->][thick](0,1)--(0,0.5);
\draw[-][thick](0,0)--(0,0.5);
\draw[-][thick](-3,-2)--(-3,-1);
\draw[->][thick](-3,0)--(-3,-1);
\draw[->][thick](-3,2)--(-3,1);
\draw[-][thick](-3,0)--(-3,1);
\draw[<-][thick](-2,-1)--(-3,-2);
\draw[-][thick](-3,-2)--(-4,-1);
\draw[<-][thick](-4,-1)--(-5,0);
\draw[-][thick](-5,0)--(-6,-1);
\draw[<-][thick](-6,-1)--(-7,-2);
\draw[<-][thick](-2,1)--(-3,2);
\draw[-][thick](-3,2)--(-4,1);
\draw[<-][thick](-4,1)--(-5,0);
\draw[-][thick](-5,0)--(-6,1);
\draw[<-][thick](-6,1)--(-7,2);
\draw[-][dashed,thick](-3,0)--(-2,0);
\draw[-][dashed,thick](-7,0)--(-6,0);
\draw[-][dashed,thick](-6,0)--(-5,0);
\draw[-][dashed,thick](-5,0)--(-3,0);
\draw[-][dashed,thick](-2.0,0)--(-1.0,0)node[below]{$\xi_{3}$};
\draw[-][dashed,thick](-1,0)--(0,0);
\draw[->][thick](0,1)--(0.5,0.5);
\draw[-][thick](0.5,0.5)--(1,0);
\draw[->][thick](0,-1)--(0.5,-0.5);
\draw[-][thick](0.5,-0.5)--(1,0);
\draw[-][thick](2,-1)--(1.5,-0.5);
\draw[<-][thick](1.5,-0.5)--(1,0);
\draw[-][thick](2,1)--(1.5,0.5);
\draw[<-][thick](1.5,0.5)--(1,0);
\draw[-][thick](0,1)--(-0.5,0.5);
\draw[->][thick](-1,0)--(-0.5,0.5);
\draw[-][thick](0,-1)--(-0.5,-0.5);
\draw[->][thick](-1,0)--(-0.5,-0.5);
\draw[-][thick](-1,0)--(-1.5,0.5);
\draw[-][thick](-1.5,0.5)--(-2,1);
\draw[-][thick](-1,0)--(-1.5,-0.5);
\draw[-][thick](-1.5,-0.5)--(-2,-1);
\draw[-][dashed,thick](0,0)--(1,0)node[below]{$0$};
\draw[-][dashed,thick](1,0)--(2,0);
\draw[fill] (1,0) circle [radius=0.035];
\draw[fill] (3,0) circle [radius=0.035]node[below]{$\xi_{2}$};
\draw[fill] (7,0) circle [radius=0.035]node[below]{$\xi_{1}$};
\draw[fill] (-5,0) circle [radius=0.035]node[below]{$\xi_{4}$};
\draw[-][dashed,thick](3,0)--(4,0);
\draw[-][dashed,thick](4,0)--(5,0);
\draw[-][dashed,thick](5,0)--(7,0);
\draw[-][dashed,thick](7,0)--(8,0);
\draw[-][dashed,thick](8,0)--(9,0);
\draw[fill] (-1,0) circle [radius=0.035];
\draw[-][dashed,thick](2.0,0)--(3.0,0);
\draw[-][thick](4,1)--(5,2);
\draw[->][thick](5,2)--(6,1);
\draw[-][thick](6,1)--(7,0);
\draw[->][thick](7,0)--(8,1);
\draw[-][thick](8,1)--(9,2);
\draw[-][thick](4,-1)--(5,-2);
\draw[->][thick](5,-2)--(6,-1);
\draw[-][thick](6,-1)--(7,0);
\draw[->][thick](7,0)--(8,-1);
\draw[-][thick](8,-1)--(9,-2);
\draw[fill] [red](1,2.0) node{$\Omega_{5}$};
\draw[fill] [red](1,-2.0) node{$\Omega_{6}$};
\draw[fill] [red](-0.3,0.25) node{$\Omega_{31}$};
\draw[fill] [red](-0.3,-0.25) node{$\Omega_{32}$};
\draw[fill] [red](0.4,0.25) node{$\Omega_{01}$};
\draw[fill] [red](0.4,-0.25) node{$\Omega_{02}$};
\draw[fill] [red](1.65,0.25) node{$\Omega_{04}$};
\draw[fill] [red](1.65,-0.25) node{$\Omega_{03}$};
\draw[fill] [red](2.35,-0.25) node{$\Omega_{22}$};
\draw[fill] [red](2.35,0.25) node{$\Omega_{21}$};
\draw[fill] [red](-2,-0.5) node{$\Omega_{33}$};
\draw[fill] [red](-2,0.5) node{$\Omega_{34}$};
\draw[fill] [red](4,0.5) node{$\Omega_{24}$};
\draw[fill] [red](4,-0.5) node{$\Omega_{23}$};
\draw[fill] [red](-6,0.5) node{$\Omega_{41}$};
\draw[fill] [red](-6,-0.5) node{$\Omega_{42}$};
\draw[fill] [red](-4,0.5) node{$\Omega_{44}$};
\draw[fill] [red](-4,-0.5) node{$\Omega_{43}$};
\draw[fill] [red](6,-0.5) node{$\Omega_{13}$};
\draw[fill] [red](6,0.5) node{$\Omega_{14}$};
\draw[fill] [red](8,0.5) node{$\Omega_{11}$};
\draw[fill] [red](8,-0.5) node{$\Omega_{12}$};
\draw[->][thick](2,1)--(2.5,0.5);
\draw[-][thick](2.5,0.5)--(3,0);
\draw[->][thick](2,-1)--(2.5,-0.5);
\draw[-][thick](2.5,-0.5)--(3,0);
\draw[-][thick](3,0)--(3.5,0.5);
\draw[->][thick](3.5,0.5)--(4,1);
\draw[-][thick](3,0)--(3.5,-0.5);
\draw[->][thick](3.5,-0.5)--(4,-1);
\draw[fill] [cyan](-2,1.3) node{$\gamma_{34}$};
\draw[fill][cyan] (-2,-1.3) node{$\gamma_{33}$};
\draw[fill][cyan] (-0.7,0.7) node{$\gamma_{31}$};
\draw[fill] [cyan](-0.7,-0.7) node{$\gamma_{32}$};
\draw[fill] [cyan](0.7,0.7) node{$\gamma_{01}$};
\draw[fill] [cyan](0.7,-0.7) node{$\gamma_{02}$};
\draw[fill] [cyan](1.3,0.7) node{$\gamma_{04}$};
\draw[fill][cyan] (1.3,-0.7) node{$\gamma_{03}$};
\draw[fill] [cyan](2.7,0.7) node{$\gamma_{21}$};
\draw[fill] [cyan](2.7,-0.7) node{$\gamma_{22}$};
\draw[fill] [cyan](4.0,1.3) node{$\gamma_{24}$};
\draw[fill][cyan] (4.0,-1.3) node{$\gamma_{23}$};
\draw[fill] [cyan](-6,1.4) node{$\gamma_{41}$};
\draw[fill] [cyan](-6,-1.4) node{$\gamma_{42}$};
\draw[fill] [cyan](-4,1.4) node{$\gamma_{44}$};
\draw[fill] [cyan](-4,-1.4) node{$\gamma_{43}$};
\draw[fill] [cyan](-4,1.4) node{$\gamma_{44}$};
\draw[fill] [cyan](6,1.4) node{$\gamma_{14}$};
\draw[fill] [cyan](6,-1.4) node{$\gamma_{13}$};
\draw[fill] [cyan](8,1.4) node{$\gamma_{11}$};
\draw[fill] [cyan](8,-1.4) node{$\gamma_{12}$};
%\draw[-][thick, cyan](1,0) arc(0:360:1);
%\draw[-][thick, cyan](1,0) arc(0:30:1);
%\draw[-][thick, cyan](1,0) arc(0:150:1);
%\draw[-][thick, cyan](1,0) arc(0:210:1);
%\draw[-][thick, cyan](1,0) arc(0:330:1);
\end{tikzpicture}}
\centerline{\noindent {\small \textbf{Figure 3.} (Color online) The jump contour $\Sigma^{(3)}$.}}

Following the ideas in \cite{Wang-CMP}, we can prove the above proposition similarly. Then, we perform  a transformation
\begin{align}\label{49}
M^{(3)}=M^{(2)}R^{(2)},
\end{align}
where
\begin{align}
R^{(2)}=\begin{cases}
\left(\begin{array}{cc}
    1  &  R_{k1}e^{2itf}\\
    0  &  1\\
\end{array}\right),\qquad z\in\Omega_{k1},\quad
\left(\begin{array}{cc}
    1  &  0\\
    R_{k2}e^{-2itf}  &  1\\
\end{array}\right),\qquad z\in\Omega_{k2},\\
\left(\begin{array}{cc}
    1  &  -R_{k3}e^{2itf}\\
    0  &  1\\
\end{array}\right),\qquad z\in\Omega_{k3},\quad
\left(\begin{array}{cc}
    1  &  0\\
    -R_{k4}e^{-2itf}  &  1\\
\end{array}\right),\qquad z\in\Omega_{k4},\\
\left(\begin{array}{cc}
    1  &  R_{01}e^{2itf}\\
    0  &  1\\
\end{array}\right),\qquad z\in\Omega_{01}\cup\Omega_{04},\quad
\left(\begin{array}{cc}
    1  &  0\\
    R_{02}e^{-2itf}  &  1\\
\end{array}\right),\qquad z\in\Omega_{02}\cup\Omega_{03},\\
\left(\begin{array}{cc}
    1  &  0\\
    0  &  1\\
\end{array}\right),\qquad z\in\Omega_{5}\cup\Omega_{6},
\end{cases}\nonumber
\end{align}
of which $k=1,\cdots,4$.

Then, in terms of the RH problem \ref{rhp3} and Proposition \ref{prop1}, a $\bar{\partial}$-RH problem can be constructed as follows.

\begin{rhp}\label{rhp4} For $k=1,\cdots,4$,
find an analysis function $M^{(3)}(x, t, z)$ with the
following properties:\\

$\bullet$ $M^{(3)}(x, t, z)$ is analytical in $\mathbb{C}\setminus \Sigma^{(3)}$;

$\bullet$ $\bar{\partial}M^{(3)}=M^{(3)}\bar{\partial}R^{(2)}(z)$, as $\lambda\in\mathbb{C}\setminus \Sigma^{(3)}$, where
\begin{align}
\bar{\partial}R^{(2)}=\begin{cases}
\left(\begin{array}{cc}
    0  &  \bar{\partial}R_{k1}e^{2itf}\\
    0  &  0\\
\end{array}\right),\qquad z\in\Omega_{k1},
\left(\begin{array}{cc}
    0  &  0\\
    \bar{\partial}R_{k2}e^{-2itf}  &  0\\
\end{array}\right),\qquad z\in\Omega_{k2},\\
\left(\begin{array}{cc}
    0  &  -\bar{\partial}R_{k3}e^{2itf}\\
    0  &  0\\
\end{array}\right),\qquad z\in\Omega_{k3},
\left(\begin{array}{cc}
    0  &  0\\
    -\bar{\partial}R_{k4}e^{-2itf}  &  0\\
\end{array}\right),\qquad z\in\Omega_{k4},\\
\left(\begin{array}{cc}
    0  &  \bar{\partial}R_{01}e^{2itf}\\
    0  &  0\\
\end{array}\right),\qquad z\in\Omega_{01}\cup\Omega_{04},\\
\left(\begin{array}{cc}
    0  &  0\\
    \bar{\partial}R_{02}e^{-2itf}  &  0\\
\end{array}\right),\qquad z\in\Omega_{02}\cup\Omega_{03},
\left(\begin{array}{cc}
    0  &  0\\
    0  &  0\\
\end{array}\right),\qquad z\in\Omega_{5}\cup\Omega_{6};
\end{cases}\nonumber
\end{align}

$\bullet$ $M_{+}^{(3)}(x, t, z)=M^{(3)}_{-}(x, t, z)J^{(3)}(x, t, z), z\in \Sigma^{(3)}$, where
\begin{align}\label{51}
J^{(3)}(x, t, z)=\begin{cases}
\left(\begin{array}{cc}
    1  &  -R_{k1}e^{2itf}\\
    0  &  1\\
\end{array}\right),\ z\in\gamma_{k1},\qquad
\left(\begin{array}{cc}
    1  &  0\\
    R_{k2}e^{-2itf}  &  1\\
\end{array}\right),\ z\in\gamma_{k2},\\
\left(\begin{array}{cc}
    1  &  -R_{k3}e^{2itf}\\
    0  &  1\\
\end{array}\right),\ z\in\gamma_{k3},\qquad
\left(\begin{array}{cc}
    1  &  0\\
    R_{k4}e^{-2itf}  &  1\\
\end{array}\right),\ z\in\gamma_{k4},\\
\left(\begin{array}{cc}
    1  &  0\\
    (R_{44}-R_{34})e^{-2ift}  &  1\\
\end{array}\right),\ z\in \gamma_{4+},\
\left(\begin{array}{cc}
    1  &  (R_{43}-R_{33})e^{2ift}\\
    0  &  1\\
\end{array}\right),\ z\in \gamma_{4-},\\
\left(\begin{array}{cc}
    1  &  (R_{01}-R_{31})e^{2ift}\\
    0  &  1\\
\end{array}\right),\ z\in \gamma_{3+},\
\left(\begin{array}{cc}
    1  &  0\\
    (R_{02}-R_{32})e^{-2ift}  &  1\\
\end{array}\right),\ z\in \gamma_{3-},\\
\left(\begin{array}{cc}
    1  &  (R_{21}-R_{04})e^{2ift}\\
    0  &  1\\
\end{array}\right),\ z\in \gamma_{2+},\
\left(\begin{array}{cc}
    1  &  0\\
    (R_{22}-R_{03})e^{-2ift}  &  1\\
\end{array}\right),\ z\in \gamma_{2-},\\
\left(\begin{array}{cc}
    1  &  0\\
    (R_{24}-R_{14})e^{-2ift}  &  1\\
\end{array}\right),\ z\in \gamma_{1+},\
\left(\begin{array}{cc}
    1  &  (R_{23}-R_{13})e^{2ift}\\
    0  &  1\\
\end{array}\right),\ z\in \gamma_{1-};
\end{cases}
\end{align}

$\bullet$ $M^{(3)}(x, t, z)=\mathbb{I}+O(\frac{1}{z})$, as $z\rightarrow\infty$;

$\bullet$ $M^{(3)}(x, t, z)=\frac{\sigma_{2}}{z}+O(1)$ as $z\rightarrow 0$.
\end{rhp}

\subsection{The decomposition of the mixed $\bar{\partial}$-RH problem }
In this part, we devote to decompose the mixed $\bar{\partial}$-RH problem, i.e., RH problem \ref{rhp4}, into two parts, including a pure RH problem with $\bar{\partial}R^{(2)}=0$ and a pure $\bar{\partial}$-RH problem with $\bar{\partial}R^{(2)}\neq0$. Firstly, to solve the pure RH problem with $\bar{\partial}R^{(2)}=0$, we construct a model $M^{RHP}$.

\subsubsection{Pure RH problem}
For the model RH problem $M^{RHP}$, the following conclusion is established.

\begin{rhp}\label{rhp5}
Find an analysis function $M^{RHP}(x, t, z)$ with the
following properties:\\

$\bullet$ $M^{RHP}(x, t, z)$ is meromorphic in $\mathbb{C}\setminus \Sigma^{(3)}$;\\

$\bullet$ $M_{+}^{RHP}(x, t, z)=M^{RHP}_{-}(x, t, z)J^{(3)}(x, t, z), z\in \Sigma^{(2)}$, where $J^{(3)}(x, t, z)$ is given in \eqref{51}.

$\bullet$ $M^{RHP}(x, t, z)=\mathbb{I}+O(\frac{1}{z})$, as $z\rightarrow\infty$.

$\bullet$ $M^{RHP}(x, t, z)=\frac{\sigma_{2}}{z}+O(1)$ as $z\rightarrow 0$;
\end{rhp}

Then, in order to construct the solution $M^{RHP}(x, t, z)$  for the RH problem \ref{rhp5}, we decompose $M^{RHP}(x, t, z)$ into following form
\begin{align}\label{54}
M^{RHP}(z)=\begin{cases}
E(z)M^{out}(z),\qquad\qquad  z\in \mathbb{C}\backslash \mathcal{U}_{\xi},\\
E(z)M^{out}(z)M^{(\xi_{1})},\qquad z\in \mathcal{U}_{\xi_{1}},\\
E(z)M^{out}(z)M^{(\xi_{2})},\qquad z\in \mathcal{U}_{\xi_{2}},\\
E(z)M^{out}(z)M^{(\xi_{3})},\qquad z\in \mathcal{U}_{\xi_{3}},\\
E(z)M^{out}(z)M^{(\xi_{4})},\qquad z\in \mathcal{U}_{\xi_{4}},\\
\end{cases}
 \end{align}
where
\begin{align}
\mathcal{U}_{\xi}=\mathcal{U}_{\xi_{1}}\cup\mathcal{U}_{\xi_{2}}\cup\mathcal{U}_{\xi_{3}}\cup\mathcal{U}_{\xi_{4}},\ \mathcal{U}_{\xi_{k}}=\{z:\left\vert z-\xi_{k}\right\vert\leq \rho\},\ k=1,\cdots,4,\nonumber
\end{align}
and $M^{out}(z)$ is a solution by ignoring the jump conditions of $M^{RHP}(z)$. $M^{(\xi_{k})}$ can be reduced to the  parabolic cylinder model
and $E(z)$ is the solution of a small-norm RH problem.
\begin{prop}\label{prop2} For $1\leq p\leq +\infty$, as $t\rightarrow +\infty$,
the jump matrix $J^{(3)}$ given in \eqref{51} meets the following estimates
\begin{align}
\left\Vert J^{(3)}(z)-\mathbb{I}\right\Vert_{L^{p}(\Sigma^{(3)}\setminus\mathcal{U}_{\xi} )}=O(e^{-h_{p}t}),\nonumber
\end{align}
where $h_{p}>0$ is a constant.
\end{prop}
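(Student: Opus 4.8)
The plan is to establish the decay of $J^{(3)}-\mathbb{I}$ away from the saddle-point disks by directly examining each of the pieces in the explicit formula \eqref{51}, using the structure of $R^{(2)}$ and the sign of $\mathrm{Re}(2itf)$. First I would note that $J^{(3)}-\mathbb{I}$ is nonzero only on the opened contours $\gamma_{kj}$ and the vertical segments $\gamma_{k\pm}$, and on each of these the off-diagonal entry has the form (a bounded factor coming from $R_{kj}$, or a difference of two such factors) times $e^{\pm 2itf}$. The factor $R_{kj}$ is controlled by the estimate \eqref{48}, $|R_{kj}|\lesssim \sin^2(\arg(z-\xi_k)) + \langle \mathrm{Re}\, z\rangle^{-1}$, together with the behaviour of $\delta(z)$ from Proposition \ref{pjia}; in particular $|R_{kj}|$ is uniformly bounded on each ray. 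So the entire decay must be extracted from the exponential $e^{\pm 2itf}$.

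The key step is therefore the pointwise lower bound on $\mp\mathrm{Re}(2if(z))$ along the opened rays, away from $\mathcal{U}_\xi$. Using the explicit signature table of Fig.\ 2 (Case 1, $\xi<-8$), the rays $\gamma_{kj}$ are chosen so that they emanate from the saddle points $\xi_k$ into the regions where $\mp\mathrm{Re}(2if)<0$; consequently, along a ray parametrized by $z=\xi_k + u e^{i\varphi_{kj}}$, one has an estimate of the type $\mp\mathrm{Re}(2if(z)) \lesssim -u^2$ for $z$ in a fixed neighborhood of $\xi_k$ and $\mp\mathrm{Re}(2if(z))\lesssim -u$ (or $\lesssim -(1+u^2)$ if a ray extends to infinity, e.g.\ $\gamma_{41},\gamma_{11}$) for $z$ bounded away from $\xi_k$. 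Since we restrict to $\Sigma^{(3)}\setminus\mathcal{U}_\xi$, the distance $|z-\xi_k|\geq \rho$, so on the part of each ray outside $\mathcal{U}_\xi$ we get $\mp\mathrm{Re}(2if(z))\leq -c(\rho^2 + |z|^2)$ for some constant $c=c(\xi,\rho)>0$; combined with the (finite) bound on $R_{kj}$ this yields $|J^{(3)}(z)-\mathbb{I}|\lesssim e^{-ct(\rho^2+|z|^2)}$. For the vertical segments $\gamma_{k\pm}$, the differences $R_{kj}-R_{k'j}$ are still uniformly bounded (both terms are), and these segments lie in the interior of the decaying regions, giving the same type of bound there.

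From the pointwise bound $|J^{(3)}(z)-\mathbb{I}|\lesssim e^{-ct(\rho^2+|z|^2)}$ on $\Sigma^{(3)}\setminus\mathcal{U}_\xi$, the $L^\infty$ estimate is immediate with $h_\infty = c\rho^2$. For $1\leq p<\infty$, I would split $\Sigma^{(3)}\setminus\mathcal{U}_\xi$ into the bounded part and the two unbounded rays; on the bounded part the contour has finite length so $\|\cdot\|_{L^p}\lesssim \|\cdot\|_{L^\infty}\cdot(\text{length})^{1/p} = O(e^{-c\rho^2 t})$, and on the unbounded rays $\int e^{-cptu^2}\,du = O(t^{-1/2})\cdot e^{-cpt\rho^2}$ (or with $e^{-cpt(1+u^2)}$ giving an even better bound), which is absorbed into $O(e^{-h_p t})$ for any $h_p< c\rho^2$. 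Taking $h_p$ to be, say, $\tfrac{1}{2}c\rho^2$ uniformly in $p$ works. The main obstacle—really the only nontrivial point—is verifying the quadratic/linear lower bounds on $\mp\mathrm{Re}(2if)$ along the chosen rays; this requires knowing that the small angle $\phi(\xi)$ and the ray data $d_j,\widetilde d_j$ in the definition of the $\gamma_{kj}$ were chosen so that every ray stays strictly inside the correct sign region of the signature table, and a Taylor expansion of $f$ near each $\xi_k$ (where $f'(\xi_k)=0$, $f''(\xi_k)\neq 0$) to get the leading $-u^2$ behaviour. Everything else is a routine consequence of \eqref{48} and Proposition \ref{pjia}.
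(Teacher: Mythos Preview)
Your proposal is correct and follows essentially the same route as the paper's proof: both reduce the estimate to the uniform boundedness of the $R_{kj}$ factors (via \eqref{48}) and then extract exponential decay from $e^{\pm 2itf}$ on the opened rays outside $\mathcal{U}_\xi$, using that each ray lies in the correct sign region of the signature table and that $|z-\xi_k|\geq\rho$ there. The paper only writes out two representative cases ($\gamma_{24}\setminus\mathcal{U}_{\xi_2}$ and $\gamma_{1+}$), recording the explicit lower bound $\mathrm{Re}(2if)\geq (1+|z|^{-2})v^{2}$ with $z=\xi_{2}+u+iv$, whereas you give the Taylor-expansion rationale for the quadratic lower bound near each $\xi_k$ and treat the bounded/unbounded pieces of the contour separately; this is the same argument with slightly different bookkeeping.
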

\begin{proof}
We prove the case of $z\in \gamma_{24}\setminus\mathcal{U}_{\xi_{2}}$ and $z\in\gamma_{1+}$, other cases can be shown in a similar way.

For $z\in \gamma_{24}\setminus\mathcal{U}_{\xi_{2}}$, when $1\leq p\leq +\infty$, by using \eqref{48} and \eqref{51}, we have
\begin{align}
\left\Vert J^{(3)}(z)-\mathbb{I}\right\Vert_{L^{p}(\gamma_{24}\setminus\mathcal{U}_{\xi_{2}} )}=\left\Vert R_{24}e^{-2itf}\right\Vert_{L^{p}(\gamma_{24}\setminus\mathcal{U}_{\xi_{2}} )}\lesssim \left\Vert e^{-2itf}\right\Vert_{L^{p}(\gamma_{24}\setminus\mathcal{U}_{\xi_{2}} )}.\nonumber
 \end{align}
Denote $z=\xi_{2}+e^{i\phi}d_{2}=\xi_{2}+u+iv$, we have $Re(2if)\geq (1+\vert z \vert^{-2})v^{2}$, which leads to
\begin{align}
\left\Vert e^{-2itf}\right\Vert_{L^{p}(\gamma_{24}\setminus\mathcal{U}_{\xi_{2}} )}\lesssim t^{-1}e^{-h_{p}t}.\nonumber
 \end{align}

For $z\in \gamma_{1+}$,
\begin{align}
\left\Vert J^{(3)}(z)-\mathbb{I}\right\Vert_{L^{p}(\gamma_{1+})}=\left\Vert (R_{24}-R_{14})e^{-2itf}\right\Vert_{L^{p}(\gamma_{1+})}\lesssim \left\Vert e^{-2itf}\right\Vert_{L^{p}(\gamma_{1+})}\lesssim t^{-1/p}e^{-h_{p}t}.\nonumber
 \end{align}
\end{proof}
From the Proposition \ref{prop2}, we can find that the jump $J^{(3)}$ uniformly goes to $\mathbb{I}$ outside $\mathcal{U}_{\xi}$. Moreover, the outside model $M^{out}(z)$ arrives at the following RH problem

\begin{rhp}\label{rhp5.1}
 Find an analysis function $M^{out}(x, t, z)$ with the
following properties:\\

$\bullet$ $M^{out}(x, t, z)$ is analytic in $\mathbb{C}\setminus \{0\}$;\\

$\bullet$ $M^{out}(x, t, z)=\mathbb{I}+O(\frac{1}{z})$, as $z\rightarrow\infty$;

$\bullet$ $M^{out}(x, t, z)=\frac{\sigma_{2}}{z}+O(1)$ as $z\rightarrow 0$.
\end{rhp}

\begin{prop}\label{prop2.1} The RH problem \ref{rhp5.1} can be uniquely solved as
\begin{align}
M^{out}(x, t, z)=\mathbb{I}+\frac{\sigma_{2}}{z}.\nonumber
 \end{align}
\end{prop}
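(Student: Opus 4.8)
The plan is to exploit the fact that RH problem \ref{rhp5.1} carries no jump contour at all: $M^{out}$ is required to be analytic on $\mathbb{C}\setminus\{0\}$, hence it is meromorphic on the entire Riemann sphere with its only possible singularity a simple pole at $z=0$. First I would exhibit the candidate $\mathbb{I}+\sigma_{2}/z$ and check directly that it meets all three bullet points of RH problem \ref{rhp5.1}: it is analytic on $\mathbb{C}\setminus\{0\}$; it equals $\mathbb{I}+O(z^{-1})$ as $z\to\infty$; and near the origin it equals $\sigma_{2}/z+\mathbb{I}=\sigma_{2}/z+O(1)$. As a consistency check one notes $\det(\mathbb{I}+\sigma_{2}/z)=1-z^{-2}$, in agreement with $\det M$ from Section~2.

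To conclude that this candidate is in fact the unique solution, I would form the difference $N(z):=M^{out}(z)-\big(\mathbb{I}+\sigma_{2}/z\big)$. Since $M^{out}$ and $\mathbb{I}+\sigma_{2}/z$ have the same principal part $\sigma_{2}/z$ at $z=0$, the function $N$ is bounded in a punctured neighbourhood of the origin, so $z=0$ is a removable singularity and $N$ extends to an entire function. The normalization bullets then give $N(z)=O(z^{-1})\to 0$ as $z\to\infty$, so $N$ is a bounded entire function vanishing at infinity; by Liouville's theorem $N\equiv 0$, i.e. $M^{out}(z)=\mathbb{I}+\sigma_{2}/z$. Applying the identical argument to the difference of any two solutions of RH problem \ref{rhp5.1} yields uniqueness.

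There is essentially no obstacle in this proof — the whole point is that the "outer" model has trivial jump and is therefore pinned down entirely by its prescribed behaviour at the two distinguished points $z=\infty$ and $z=0$ through Liouville's theorem. The only item worth stating carefully is that the singular structure at $z=0$ is exactly $\sigma_{2}/z$ (not merely $O(z^{-1})$ with unspecified coefficient), since it is this precise principal part that forces the $\sigma_{2}/z$ term and, together with the $O(z^{-1})$ decay at infinity, leaves no freedom for any additional additive constant matrix beyond $\mathbb{I}$.
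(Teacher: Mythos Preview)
Your proposal is correct and follows the standard Liouville argument; the paper itself states this proposition without proof, so your approach is entirely appropriate and in the spirit of how such outer model problems are routinely handled.
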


Since there is not a uniform estimate for $J^{(3)}(z)-\mathbb{I}$ on $z\in \mathcal{U}_{\xi}$ as $t\rightarrow\infty$,
we need to introduce a local solvable model $M^{(\xi_{k})}$ to match the jumps of $M^{RHP}$ on
$\Sigma^{(3)}\cap \mathcal{U}_{\xi}$. To solve the model $M^{(\xi_{k})}$, we shall carry out the scaling transformation to separate the time $t$
from the jump matrix by introducing scaling the transformation
\begin{align}\label{60}
s=s(z)=\sqrt{2t\epsilon_{k}f''(\xi_{k})}(z-\xi_{k}),\quad k=1,\cdots, 4,
 \end{align}
where $f(z)=f(\xi_{k})+\frac{f''(\xi_{k})}{2}(z-\xi_{k})^{2}+O(\vert z-\xi_{k}\vert^{3}), z\rightarrow\xi_{k}$, and introduce the scaling operators
\begin{align}
&g(z)\mapsto (N_{k}g)(z)=g(\frac{s}{\sqrt{2t\epsilon_{k}f''(\xi_{k})}}+\xi_{k}).\nonumber
%&g(z)\mapsto (N_{2}g)(z)=g(\frac{s}{\sqrt{-2tf''(\xi_{2})}}+\xi_{2}),\notag\\
%&g(z)\mapsto (N_{3}g)(z)=g(\frac{s}{\sqrt{2tf''(\xi_{3})}}+\xi_{3}),\notag\\
%&g(z)\mapsto (N_{4}g)(z)=g(\frac{s}{\sqrt{-2tf''(\xi_{4})}}+\xi_{4}).
 \end{align}
Therefore, we have
\begin{align}
&N_{k}(R_{k1}e^{2itf})=\frac{\overline{r_{\xi_{k}}}}{1-\vert r_{\xi_{k}}\vert^{2}}s^{-2i\nu(\xi_{k})\epsilon_{k}}e^{\frac{is^{2}}{2\epsilon_{k}}},\notag\\
&N_{k}(R_{k2}e^{-2itf})=\frac{r_{\xi_{k}}}{1-\vert r_{\xi_{k}}\vert^{2}}s^{2i\nu(\xi_{k})\epsilon_{k}}e^{-\frac{is^{2}}{2\epsilon_{k}}},\notag\\
&N_{k}(R_{k3}e^{2itf})=\overline{r_{\xi_{k}}}s^{-2i\nu(\xi_{k})\epsilon_{k}}e^{\frac{is^{2}}{2\epsilon_{k}}},\notag\\
&N_{k}(R_{k4}e^{-2itf})=r_{\xi_{k}}s^{2i\nu(\xi_{k})\epsilon_{k}}e^{-\frac{is^{2}}{2\epsilon_{k}}},\nonumber
\end{align}
where we have taken $r_{\xi_{k}}=r(\xi_{k})\delta_{0}^{2}(\xi_{k})e^{-2itf(\xi_{k})-i\epsilon_{k}\nu(\xi_{k})\ln(2t\epsilon_{k}f''(\xi_{k}))}$.

After that, we can generate the following RH problem $M^{(\xi_{k})}$ as $t\rightarrow\infty$ in
the $s$ plane, see Fig. 4.\\
\centerline{\begin{tikzpicture}
\draw[fill] (0,0) circle [radius=0.035];
\draw[->][thick](-2.12,-2.12)--(-1.06,-1.06);
\draw[-][thick](-1.06,-1.06)--(0,0)node[below]{$O$};
\draw[->][thick](0,0)--(1.06,1.06);
\draw[-][thick](1.06,1.06)--(2.12,2.12);
\draw[->][thick](-2.12,2.12)--(-1.06,1.06);
\draw[-][thick](-1.06,1.06)--(0,0);
\draw[->][thick](0,0)--(1.06,-1.06);
\draw[->][dashed](-3,0)--(-1.5,0);
\draw[->][dashed](-1.5,0)--(1.5,0);
\draw[-][dashed](1.5,0)--(3.0,0)node[right]{$\mathbb{R}$};
\draw[-][thick](1.06,-1.06)--(2.12,-2.12);
\draw[fill] [red] (0,1) node{$\Omega_{0}$};
\draw[fill] [red](0,-1) node{$\Omega_{0}$};
\draw[fill] [red](-1,0.5) node{$\Omega_{4}$};
\draw[fill] [red](1,0.5) node{$\Omega_{1}$};
\draw[fill] [red](-1,-0.5) node{$\Omega_{3}$};
\draw[fill] [red](1,-0.5) node{$\Omega_{2}$};
\draw[fill] [cyan] (-1.6,2.0) node{$\gamma_{4}$};
\draw[fill] [cyan] (-1.6,-2.0) node{$\gamma_{3}$};
\draw[fill] [cyan] (1.6,2.0) node{$\gamma_{1}$};
\draw[fill] [cyan] (1.6,-2.0) node{$\gamma_{2}$};
\end{tikzpicture}}
\centerline{\noindent {\small \textbf{Figure 4.} (Color online) The jump contour $\widetilde{\Sigma}=\gamma_{1}\cup\gamma_{2}\cup\gamma_{3}\cup\gamma_{4}$  and domains $\Omega_{j}(j=0,\cdots, 4)$.}}

\begin{rhp}\label{rhp6}
The analysis function $M^{(\xi_{k})}(s)$ admits the
following properties:\\

 $\bullet$ $M^{(\xi_{k})}(s)$ is analytic in $\mathbb{C}\setminus \widetilde{\Sigma}$;\\

 $\bullet$ $M_{+}^{(\xi_{k})}(s)=M^{(\xi_{k})}_{-}(s)J^{(\xi_{k})}(s), z\in \widetilde{\Sigma}$, where
\begin{align}
&k=1, 3:\notag\\
& J^{(\xi_{k})}(s)=
\begin{cases}
\left(\begin{array}{cc}
    1  &  -\frac{\overline{r_{\xi_{k}}}}{1-\vert r_{\xi_{k}}\vert^{2}}s^{-2i\nu(\xi_{k})}e^{\frac{is^{2}}{2}}\\
    0  &  1\\
\end{array}\right),\qquad s\in\gamma_{1},\\
\left(\begin{array}{cc}
    1  &  0\\
    \frac{r_{\xi_{k}}}{1-\vert r_{\xi_{k}}\vert^{2}}s^{2i\nu(\xi_{k})}e^{-\frac{is^{2}}{2}}  &  1\\
\end{array}\right),\qquad s\in\gamma_{2},\\
\left(\begin{array}{cc}
    1  &  -\overline{r_{\xi_{k}}}s^{-2i\nu(\xi_{k})}e^{\frac{is^{2}}{2}}\\
    0  &  1\\
\end{array}\right),\qquad s\in\gamma_{3},\\
\left(\begin{array}{cc}
    1  &  0\\
    r_{\xi_{k}}s^{2i\nu(\xi_{k})}e^{-\frac{is^{2}}{2}}  &  1\\
\end{array}\right),\qquad s\in\gamma_{4},
\end{cases}\notag\\
&k=2, 4:\notag\\
& J^{(\xi_{k})}(s)=
\begin{cases}
\left(\begin{array}{cc}
    1  &  0\\
    r_{\xi_{k}}s^{-2i\nu(\xi_{k})}e^{\frac{is^{2}}{2}}  &  1\\
\end{array}\right),\qquad s\in\gamma_{1},\\
\left(\begin{array}{cc}
    1  &  -\overline{r_{\xi_{k}}}s^{2i\nu(\xi_{k})}e^{-\frac{is^{2}}{2}}\\
    0  &  1\\
\end{array}\right),\qquad s\in\gamma_{2},\\
\left(\begin{array}{cc}
    1  &  0\\
    \frac{r_{\xi_{k}}}{1-\vert r_{\xi_{k}}\vert^{2}}s^{-2i\nu(\xi_{k})}e^{\frac{is^{2}}{2}}  &  1\\
\end{array}\right),\qquad s\in\gamma_{3},\\
\left(\begin{array}{cc}
    1  &  -\frac{\overline{r_{\xi_{k}}}}{1-\vert r_{\xi_{k}}\vert^{2}}s^{2i\nu(\xi_{k})}e^{-\frac{is^{2}}{2}}\\
    0  &  1\\
\end{array}\right),\qquad s\in\gamma_{4};
\end{cases}\nonumber
\end{align}

$\bullet$ $M^{(\xi_{k})}(s)=\mathbb{I}+O(\frac{1}{s})$ as $s\rightarrow\infty$.
\end{rhp}
It is well known that the  solution $M^{(\xi_{k})}(s)$ of  RH problem \ref{rhp6} can be solved explicitly via using the parabolic
cylinder  model as shown in ``Appendix \ref{A}".

Due to the jump matrix of  $M^{(\xi_{k})}$ and $M^{RHP}$ is coincident in disk $\mathcal{U}_{\xi}$, the matrix $E(z)$ erects the jump of $M^{RHP}$ inside disk $\mathcal{U}_{\xi}$, and there is still a jump from $M^{RHP}$ outside the disk, so the jump path of $E(z)$ is
\begin{align}
\Sigma^{E}=\partial\mathcal{U}_{\xi}\cup
\left(\Sigma^{(3)}\setminus\mathcal{U}_{\xi}\right),\nonumber
\end{align}
with clockwise direction for $\partial\mathcal{U}_{\xi}$. Then, the $E(z)$ satisfies the following
RH problem, see Fig. 5.\\
\centerline{\begin{tikzpicture}[scale=0.8]
\draw[<-][thick](-0.5,0) arc(0:360:0.5);
\draw[-][thick](-0.5,0) arc(0:30:0.5);
\draw[-][thick](-0.5,0) arc(0:150:0.5);
\draw[-][thick](-0.5,0) arc(0:210:0.5);
\draw[-][thick](-0.5,0) arc(0:330:0.5);
\draw[<-][thick](3.5,0) arc(0:360:0.5);
\draw[-][thick](3.5,0) arc(0:30:0.5);
\draw[-][thick](3.5,0) arc(0:150:0.5);
\draw[-][thick](3.5,0) arc(0:210:0.5);
\draw[-][thick](3.5,0) arc(0:330:0.5);
\draw[<-][thick](8,0) arc(0:360:1);
\draw[-][thick](8,0) arc(0:30:1);
\draw[-][thick](8,0) arc(0:150:1);
\draw[-][thick](8,0) arc(0:210:1);
\draw[-][thick](8,0) arc(0:330:1);
\draw[<-][thick](-4,0) arc(0:360:1);
\draw[-][thick](-4,0) arc(0:30:1);
\draw[-][thick](-4,0) arc(0:150:1);
\draw[-][thick](-4,0) arc(0:210:1);
\draw[-][thick](-4,0) arc(0:330:1);
\draw[-][thick](5,-2)--(5,-1);
\draw[->][thick](5,0)--(5,-1);
\draw[->][thick](5,2)--(5,1);
\draw[-][thick](5,0)--(5,1);
\draw[-][thick](2,-1)--(2,-0.5);
\draw[->][thick](2,0)--(2,-0.5);
\draw[->][thick](2,1)--(2,0.5);
\draw[-][thick](2,0)--(2,0.5);
\draw[-][thick](0,-1)--(0,-0.5);
\draw[->][thick](0,0)--(0,-0.5);
\draw[->][thick](0,1)--(0,0.5);
\draw[-][thick](0,0)--(0,0.5);
\draw[-][thick](-3,-2)--(-3,-1);
\draw[->][thick](-3,0)--(-3,-1);
\draw[->][thick](-3,2)--(-3,1);
\draw[-][thick](-3,0)--(-3,1);
\draw[<-][thick](-2,-1)--(-3,-2);
\draw[-][thick](-3,-2)--(-4,-1);
\draw[<-][thick](-4,-1)--(-4.3,-0.7);
\draw[-][thick](-5.7,-0.7)--(-6,-1);
\draw[<-][thick](-6,-1)--(-7,-2);
\draw[<-][thick](-2,1)--(-3,2);
\draw[-][thick](-3,2)--(-4,1);
\draw[<-][thick](-4,1)--(-4.3,0.7);
\draw[-][thick](-5.7,0.7)--(-6,1);
\draw[<-][thick](-6,1)--(-7,2);
\draw[->][thick](0,1)--(0.5,0.5);
\draw[-][thick](0.5,0.5)--(1,0);
\draw[->][thick](0,-1)--(0.5,-0.5);
\draw[-][thick](0.5,-0.5)--(1,0);
\draw[-][thick](2,-1)--(1.5,-0.5);
\draw[<-][thick](1.5,-0.5)--(1,0);
\draw[-][thick](2,1)--(1.5,0.5);
\draw[<-][thick](1.5,0.5)--(1,0);
\draw[-][thick](0,1)--(-0.5,0.5);
\draw[->][thick](-0.65,0.35)--(-0.5,0.5);
\draw[-][thick](0,-1)--(-0.5,-0.5);
\draw[->][thick](-0.65,-0.35)--(-0.5,-0.5);
\draw[-][thick](-1.35,0.35)--(-1.5,0.5);
\draw[-][thick](-1.5,0.5)--(-2,1);
\draw[-][thick](-1.35,-0.35)--(-1.5,-0.5);
\draw[-][thick](-1.5,-0.5)--(-2,-1);
\draw[fill] (1,0) circle [radius=0.035] node[below]{$0$};
\draw[-][thick](4,1)--(5,2);
\draw[->][thick](5,2)--(6,1);
\draw[-][thick](6,1)--(6.3,0.7);
\draw[->][thick](7.7,0.7)--(8,1);
\draw[-][thick](8,1)--(9,2);
\draw[-][thick](4,-1)--(5,-2);
\draw[->][thick](5,-2)--(6,-1);
\draw[-][thick](6,-1)--(6.3,-0.7);
\draw[->][thick](7.7,-0.7)--(8,-1);
\draw[-][thick](8,-1)--(9,-2);
\draw[->][thick](2,1)--(2.5,0.5);
\draw[-][thick](2.5,0.5)--(2.65,0.35);
\draw[->][thick](2,-1)--(2.5,-0.5);
\draw[-][thick](2.5,-0.5)--(2.65,-0.35);
\draw[-][thick](3.35,0.35)--(3.5,0.5);
\draw[->][thick](3.5,0.5)--(4,1);
\draw[-][thick](3.35,-0.35)--(3.5,-0.5);
\draw[->][thick](3.5,-0.5)--(4,-1);
\draw[fill] [cyan] (-5,-1.4) node{$\partial\mathcal{U}_{\xi_{4}}$};
\draw[fill] [cyan] (-1,-0.9) node{$\partial\mathcal{U}_{\xi_{3}}$};
\draw[fill] [cyan] (3,-0.9) node{$\partial\mathcal{U}_{\xi_{2}}$};
\draw[fill] [cyan] (7,-1.4) node{$\partial\mathcal{U}_{\xi_{1}}$};
\end{tikzpicture}}
\centerline{\noindent {\small \textbf{Figure 5.} (Color online) The jump contour $\Sigma^{E}$.}}

\begin{rhp}\label{rhp7}
Find a matrix-valued function $E(z)$ has the
following properties:\\

 $\bullet$ $E(z)$ is analytic in $\mathbb{C}\setminus \Sigma^{E}$;\\

 $\bullet$ $E_{+}(z)=E_{-}(z)J^{E}(z), z\in \Sigma^{E}$, where
\begin{align}\label{65}
J^{E}(z)=
\begin{cases}
M^{out}(z)J^{(3)}(M^{out})^{-1}(z),\qquad z\in\Sigma^{(3)}\setminus\mathcal{U}_{\xi},\\
M^{out}(z)M^{(\xi_{k})}(M^{out})^{-1}(z),\qquad z\in \partial\mathcal{U}_{\xi_{k}};
\end{cases}
\end{align}

$\bullet$ $E(z)=\mathbb{I}+O(\frac{1}{z})$ as $z\rightarrow\infty$.
\end{rhp}

\begin{prop}\label{prop3}
The jump matrix $J^{E}$ shown  in \eqref{65} has the following estimates
\begin{align}\label{66}
\left\vert J^{E}(z)-\mathbb{I}\right\vert=\begin{cases}
O(e^{-h_{p}t}),\qquad z\in \Sigma^{(3)}\setminus\mathcal{U}_{\xi},\\
O(t^{-1/2}),\qquad z\in \partial\mathcal{U}_{\xi}.
\end{cases}
\end{align}
\end{prop}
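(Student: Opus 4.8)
The plan is to estimate $J^E(z)-\mathbb{I}$ separately on the two pieces of $\Sigma^E$. On $\Sigma^{(3)}\setminus\mathcal{U}_\xi$ the identity $J^E(z)=M^{out}(z)J^{(3)}(z)(M^{out})^{-1}(z)$ reduces everything to Proposition \ref{prop2}. Since $M^{out}(z)=\mathbb{I}+\sigma_2/z$ by Proposition \ref{prop2.1}, it and its inverse are uniformly bounded on $\Sigma^{(3)}\setminus\mathcal{U}_\xi$ (this set is bounded away from $z=0$, because the disks $\mathcal{U}_{\xi_k}$ are disjoint from a neighborhood of the origin by the choice \eqref{38} of $\rho$). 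Hence $|J^E(z)-\mathbb{I}|=|M^{out}(z)(J^{(3)}(z)-\mathbb{I})(M^{out})^{-1}(z)|\lesssim \|J^{(3)}(z)-\mathbb{I}\|$, and Proposition \ref{prop2} gives the bound $O(e^{-h_p t})$ in every $L^p$ norm, in particular pointwise.

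On $\partial\mathcal{U}_{\xi_k}$ we have $J^E(z)=M^{out}(z)M^{(\xi_k)}(z)(M^{out})^{-1}(z)$, so we must show $|M^{(\xi_k)}(z)-\mathbb{I}|=O(t^{-1/2})$ for $|z-\xi_k|=\rho$. This comes from the large-$s$ expansion of the parabolic cylinder solution recalled in Appendix \ref{A}: $M^{(\xi_k)}(s)=\mathbb{I}+M_1^{(\xi_k)}/s+O(s^{-2})$ as $s\to\infty$. Under the scaling \eqref{60}, $s=\sqrt{2t\epsilon_k f''(\xi_k)}\,(z-\xi_k)$, so on $|z-\xi_k|=\rho$ one has $|s|=\rho\sqrt{2t\epsilon_k f''(\xi_k)}\sim c_k\sqrt{t}$, whence $|M^{(\xi_k)}(z)-\mathbb{I}|\lesssim |s|^{-1}\lesssim t^{-1/2}$. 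Multiplying by the bounded factors $M^{out}(z)^{\pm1}$ (again $\partial\mathcal{U}_{\xi_k}$ is bounded away from $0$) preserves the $O(t^{-1/2})$ estimate, and assembling the four boundary circles gives the second line of \eqref{66}.

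The routine part is the matrix bookkeeping with $M^{out}$; the only point needing care is making the exponential decay in Proposition \ref{prop2} truly uniform on the \emph{unbounded} rays $\gamma_{k1},\gamma_{k2}$ ($k=1,4$) that run off to infinity — there one combines the quadratic lower bound $\operatorname{Re}(2if)\gtrsim (1+|z|^{-2})v^2$ with the decay of $R_{kj}$ from \eqref{48} (the factor $\langle\operatorname{Re}z\rangle^{-1}$ plus the quadratic exponential growth of $\operatorname{Re}(2if)$ along the ray), exactly as in the two sample computations of the proof of Proposition \ref{prop2}. I expect the main obstacle to be precisely this: verifying that the various $L^p$ (and not merely $L^\infty$) bounds on the noncompact portions of $\Sigma^{(3)}$ are genuinely uniform in $t$, since the subsequent small-norm analysis of $E(z)$ will need the $L^1\cap L^2$ control, not just the pointwise statement written in \eqref{66}. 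Once Proposition \ref{prop2} is in hand in all $L^p$, the present proposition follows by the boundedness of $M^{out}$ and the parabolic-cylinder asymptotics, with no further difficulty.
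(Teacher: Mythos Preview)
Your proposal is correct and follows essentially the same approach as the paper: on $\Sigma^{(3)}\setminus\mathcal{U}_\xi$ you conjugate by the bounded $M^{out}$ and invoke Proposition~\ref{prop2}, while on $\partial\mathcal{U}_{\xi_k}$ you use the large-$s$ expansion $M^{(\xi_k)}(s)=\mathbb{I}+M_1^{(\xi_k)}/s+O(s^{-2})$ together with the scaling $|s|\sim\sqrt{t}$ on the circle of radius $\rho$. Your additional remarks (explicit boundedness of $M^{out}$ away from $z=0$, and the uniformity issue on the unbounded rays) are valid refinements that the paper handles implicitly with the symbol $\lesssim$, but they do not constitute a different route.
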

\begin{proof}
For $z\in\Sigma^{(3)}\setminus\mathcal{U}_{\xi}$, using the definition of $J^{E}$ \eqref{65}, one has
\begin{align}
\left\vert J^{E}(z)-\mathbb{I}\right\vert\lesssim\left\vert J^{(3)}-\mathbb{I}\right\vert.\nonumber
 \end{align}
Thus, the estimates for $\left\vert J^{E}(z)-\mathbb{I}\right\vert$ is consistent with Proposition \ref{prop2}.

For $z\in \partial\mathcal{U}_{\xi}$, the variable $s=\sqrt{2t\epsilon_{k}f''(\xi_{k})}(z-\xi_{k})$ tends to infinity as $t\rightarrow \infty$, and according to the asymptotic expansion
 \begin{align}
M^{(\xi_{k})}(s)=\mathbb{I}+\frac{M^{(\xi_{k})}_{1}}{s}+O(\frac{1}{s^{2}}),\qquad s\rightarrow\infty,\nonumber
 \end{align}
it is easy to arrive at
 \begin{align}
M^{(\xi_{k})}(s)=\mathbb{I}+\frac{M^{(\xi_{k})}_{1}}{\sqrt{2t\epsilon_{k}f''(\xi_{k})}(z-\xi_{k})}+O(\frac{1}{t}),\qquad t\rightarrow\infty,\nonumber
\end{align}
which results into $\left\vert M^{(\xi_{k})}(s)-\mathbb{I} \right\vert=O(t^{-\frac{1}{2}})$. Applying the definition of $J^{E}$ \eqref{65}, we obtain
\begin{align}
\left\vert J^{E}(z)-\mathbb{I}\right\vert\lesssim\left\vert M^{(\xi_{k})}-\mathbb{I}\right\vert=O(t^{-\frac{1}{2}}).\nonumber
 \end{align}
\end{proof}
The fact that estimates $\left\vert J^{E}(z)-\mathbb{I}\right\vert$ in \eqref{66} decay uniformly illustrates
the RH problem \ref{rhp7} is a small-norm RH problem,  and its existence
and uniqueness have been verified in \cite{Li-WKI36,Li-WKI37}.

According to Beals-Coifman theorem, in order to construct the solution of RH problem \ref{rhp7}, we
decompose  the jump matrix $J^{E}$ into
\begin{align}
J^{E}=(b_{-})^{-1}b_{+},\quad b_{-}=\mathbb{I},\quad b_{+}=J^{E},\nonumber
 \end{align}
and
\begin{align}\label{72}
&(\omega_{E})_{-}=\mathbb{I}-b_{-}=0,\quad (\omega_{E})_{+}=b_{+}-\mathbb{I}=J^{E}-\mathbb{I},\notag\\
&\omega_{E}=(\omega_{E})_{+}+(\omega_{E})_{-}=J^{E}-\mathbb{I},\notag\\
&C_{\omega_{E}}\mathcal{F}=C_{-}(\mathcal{F}(\omega_{E})_{+})+C_{+}(\mathcal{F}(\omega_{E})_{-})=C_{-}(\mathcal{F}(J^{E}-\mathbb{I})),
 \end{align}
where $C_{-}$ is the Cauchy operator
\begin{align}\label{73}
C_{-}\mathcal{F}(z)=\lim_{z'\rightarrow z\in\Sigma^{E}}\frac{1}{2\pi i}\int_{\Sigma^{E}}\frac{\mathcal{F}(\zeta)}{\zeta-z'}\mathrm{d}\zeta.
 \end{align}
Finally, the solution for RH problem \ref{rhp7} admits
\begin{align}
E(z)=\mathbb{I}+\frac{1}{2\pi i}\int_{\Sigma^{E}}\frac{\mu_{E}(\zeta)(J^{E}-\mathbb{I})}{\zeta-z}\mathrm{d}\zeta,\nonumber
 \end{align}
where $\mu_{E}\in L^{2}(\Sigma^{E})$  solves $(1-C_{\omega_{E}})\mu_{E}=\mathbb{I}$.

Proposition \ref{prop3} declares
\begin{align}\label{75}
\left\Vert J^{E}-\mathbb{I}\right\Vert_{L^{p}(\Sigma^{E})}=O(t^{-\frac{1}{2}}), \quad p\in[1,+\infty).
 \end{align}
Moreover, in terms of the properties of the
Cauchy operator $C_{-}$ and \eqref{75}, we get
\begin{align}\label{76}
&\left\Vert C_{\omega_{E}}\right\Vert_{L^{2}(\Sigma^{E})}\lesssim \left\Vert C_{-}\right\Vert_{L^{2}(\Sigma^{E})}\left\Vert J^{E}-\mathbb{I}\right\Vert_{L^{\infty}(\Sigma^{E})}\lesssim O(t^{-\frac{1}{2}}),\notag\\
&\left\Vert \mu_{E}-\mathbb{I}\right\Vert_{L^{2}(\Sigma^{E})}=\left\Vert C_{\omega_{E}}\mu_{E}\right\Vert_{L^{2}(\Sigma^{E})}=\left\Vert C_{-}\right\Vert_{L^{2}(\Sigma^{E})} \left\Vert\mu_{E}\right\Vert_{L^{2}(\Sigma^{E})}\left\Vert J^{E}-\mathbb{I}\right\Vert_{L^{\infty}(\Sigma^{E})}\lesssim O(t^{-\frac{1}{2}}).
 \end{align}
The first formula of \eqref{76} implies that $(1-C_{\omega_{E}})^{-1}$ is existent. As a result, the existence and
uniqueness of $\mu_{E}$ and $E(z)$ are confirmed. Thus, it is reasonable
to construct $M^{RHP}$ in \eqref{54}.

Furthermore, to reconstruct the solutions of $q(x, t)$, we need to
discuss the asymptotic behavior of $E(z)$ as $z\rightarrow \infty$ and asymptotic
behavior of $E_{1}$ as $t\rightarrow\infty$. Finally, considering the estimates \eqref{75} and \eqref{76}, we derive the following asymptotic expansion as $z\rightarrow \infty$, given by
\begin{align}
E(z)=\mathbb{I}+\frac{E_{1}}{z}+O(z^{-2}), \qquad z\rightarrow \infty,\nonumber
 \end{align}
where
\begin{align}
E_{1}=-\frac{1}{2\pi i}\int_{\Sigma^{E}}\mu_{E}(\zeta)(J^{E}-\mathbb{I})\mathrm{d}\zeta.\nonumber
 \end{align}
Then, as $t\rightarrow\infty$, the asymptotic behavior of $E_{1}$ is
\begin{align}\label{79}
E_{1}&=-\sum_{k=1}^{4}\frac{1}{2\pi i}\oint_{\partial\mathcal{U}_{\xi_{k}}}(J^{E}-\mathbb{I})\mathrm{d}\zeta+O(t^{-1})\notag\\
= &-\sum_{k=1}^{4}\frac{1}{2\pi i}\oint_{\partial\mathcal{U}_{\xi_{k}}}\frac{M^{(\xi_{k})}_{1}}{\sqrt{2t\epsilon_{k}f''(\xi_{k})}(\zeta-\xi_{k})}\mathrm{d}\zeta+O(t^{-1})\notag\\
=&-\sum_{k=1}^{4}\frac{M^{(\xi_{k})}_{1}}{\sqrt{2t\epsilon_{k}f''(\xi_{k})}}+O(t^{-1}).
\end{align}

\subsubsection{Pure $\bar{\partial}$-RH problem }
In this subsection, we mainly discuss the pure $\bar{\partial}$-problem at the case of $\bar{\partial}R^{(2)}=0$. Define
\begin{align}\label{81}
M^{(4)}(z)=M^{(3)}(z)\left(M^{RHP}(z)\right)^{-1},
\end{align}
which  is continuous and has no jumps in the complex plane. The pure $\bar{\partial}$-problem $M^{(4)}(z)$ is given as follows.

\begin{rhp}\label{rhp8}
Find a matrix-valued function $M^{(4)}(z)$ has the
following properties:\\

$\bullet$ $M^{(4)}(z)$  is continuous in  $\mathbb{C}\setminus (\mathbb{R}\cup\Sigma^{(3)})$;\\

$\bullet$ $\bar{\partial}M^{(4)}(z)=M^{(4)}(z)W^{(4)}(z), z\in \mathbb{C}$, where
\begin{align}
W^{(4)}(z)=M^{RHP}(z)\bar{\partial}R^{(2)}(z)\left(M^{RHP}(z)\right)^{-1};\nonumber
\end{align}

$\bullet$ $M^{(4)}(z)=\mathbb{I}+O(\frac{1}{z})$ as $z\rightarrow\infty$.
\end{rhp}

The pure $\bar{\partial}$-problem can be solved as  the following integral equation
\begin{align}\label{83}
M^{(4)}(z)=\mathbb{I}-\frac{1}{\pi}\iint_{\mathbb{C}}\frac{M^{(4)}(\zeta)W^{(4)}(\zeta)}{\zeta-z}\mathrm{d}A(\zeta),
\end{align}
where $\mathrm{d}A(\zeta)$ is the Lebesgue measure. Further, the equation \eqref{83} can be written into operator form
\begin{align}
(\mathbb{I}-\mathcal{S})M^{(4)}(z)=\mathbb{I},\nonumber
\end{align}
where $\mathcal{S}$ is the Cauchy operator
\begin{align}
\mathcal{S}[\mathcal{F}](z)=-\frac{1}{\pi}\iint_{\mathbb{C}}\frac{\mathcal{F}(\zeta)W^{(4)}(\zeta)}{\zeta-z}\mathrm{d}A(\zeta).\nonumber
\end{align}

\begin{prop}\label{prop4}
For large time $t$,
\begin{align}
\left\Vert \mathcal{S}\right\Vert_{L^{\infty}\rightarrow L^{\infty}}\lesssim t^{-\frac{1}{4}},\nonumber
\end{align}
which denotes that the operator $\mathbb{I}-\mathcal{S}$ is invertible and the solution of pure $\bar{\partial}$-problem uniquely exists.
\end{prop}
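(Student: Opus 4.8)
The plan is to reduce the operator bound to a single scalar estimate and then localize. Since for $\mathcal{F}\in L^{\infty}(\mathbb{C})$ one has
\[
\|\mathcal{S}[\mathcal{F}]\|_{L^{\infty}}\le \|\mathcal{F}\|_{L^{\infty}}\cdot\sup_{z\in\mathbb{C}}\frac{1}{\pi}\iint_{\mathbb{C}}\frac{|W^{(4)}(\zeta)|}{|\zeta-z|}\,\mathrm{d}A(\zeta),
\]
it suffices to prove that the double integral is $O(t^{-1/4})$ uniformly in $z$. First I would record that $M^{RHP}$ and $(M^{RHP})^{-1}$ are bounded uniformly in $z$ and $t$: on $\mathbb{C}\setminus(\mathcal{U}_{\xi}\cup\{0\})$ this follows from $M^{out}(z)=\mathbb{I}+\sigma_{2}/z$ (Proposition \ref{prop2.1}) together with $E=\mathbb{I}+O(t^{-1/2})$ from the small-norm analysis \eqref{75}--\eqref{76}; inside each disk $\mathcal{U}_{\xi_{k}}$ it follows from the explicit parabolic cylinder parametrix $M^{(\xi_{k})}$ of Appendix \ref{A}; and near $z=0$ the singularities of $M^{out}$ and $(M^{out})^{-1}$ cancel in the conjugation $M^{RHP}\bar{\partial}R^{(2)}(M^{RHP})^{-1}$, leaving a factor $\sim\sigma_{2}(\bar{\partial}R^{(2)})\sigma_{2}$ whose only remaining singularity is the $|z|^{-1/2}$ already present in \eqref{48}, which is locally integrable in two dimensions. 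Hence $|W^{(4)}(\zeta)|\lesssim|\bar{\partial}R^{(2)}(\zeta)|$ throughout.

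Next I would localize. By construction $\bar{\partial}R^{(2)}$ is supported on the finitely many sectors $\Omega_{kj}$ and vanishes on $\Omega_{5}\cup\Omega_{6}$, so the integral splits into a finite sum of sector contributions; by the symmetry of the construction it is enough to treat one representative, say $\Omega_{k1}$ adjacent to the saddle $\xi_{k}$. On $\Omega_{k1}$ I introduce local coordinates $\zeta=\xi_{k}+u+iv$ and invoke two ingredients. First, from \eqref{48}, $|\bar{\partial}R_{k1}(\zeta)|\lesssim|\zeta-\xi_{k}|^{-1/2}+|p_{k1}'(\mathrm{Re}\,\zeta)|$, where $p_{k1}'\in L^{2}(\mathbb{R})$ because $q_{0}\mp 1\in H^{4,4}(\mathbb{R})$ forces enough regularity of $r$. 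Second, from $f(z)=f(\xi_{k})+\tfrac12 f''(\xi_{k})(z-\xi_{k})^{2}+O(|z-\xi_{k}|^{3})$ and the signature table (Fig. 2), on the support of $\bar{\partial}R^{(2)}$ inside $\Omega_{k1}$ the relevant exponential satisfies $|e^{\pm 2itf(\zeta)}|\le e^{-c t|u||v|}\le e^{-c' t v^{2}}$, the last step using the opening-angle constraint $|u|\gtrsim|v|$ built into the choice of $\phi$ (on the unbounded part of the sector the faster growth of $|\mathrm{Re}(2if)|$ and the $\langle\mathrm{Re}\,\zeta\rangle^{-1}$ factor in \eqref{48} give even more decay).

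The estimate itself I would split as $\iint_{\Omega_{k1}}|W^{(4)}|\,|\zeta-z|^{-1}\lesssim I_{1}+I_{2}$ according to the two terms in the bound on $|\bar{\partial}R_{k1}|$. For $I_{2}$ (the $L^{2}$ piece), apply Fubini, then for fixed $v$ use Cauchy--Schwarz in $u$ with $\|(\zeta-z)^{-1}\|_{L^{2}(du)}\lesssim|v-\mathrm{Im}\,z|^{-1/2}$ and $\|p_{k1}'e^{-ctuv}\|_{L^{2}(du)}\le e^{-c'tv^{2}}\|p_{k1}'\|_{L^{2}}$; this leaves $\int_{0}^{\infty}e^{-c'tv^{2}}|v-\mathrm{Im}\,z|^{-1/2}\,dv$, which after the substitution $w=\sqrt{t}\,v$ equals $t^{-1/4}$ times an integral of the form $\int_{0}^{\infty}e^{-c'w^{2}}|w-a|^{-1/2}\,dw$ that is bounded uniformly in $a\in\mathbb{R}$, so $I_{2}\lesssim t^{-1/4}$. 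For $I_{1}$ (the $|\zeta-\xi_{k}|^{-1/2}$ piece) the same scheme works after splitting $e^{-ctuv}=e^{-ctuv/2}e^{-ctuv/2}$: keep one factor for the outer $v$-integration (bounding it by $e^{-c'tv^{2}/2}$) and control $\int u^{-1/2}|\zeta-z|^{-1}e^{-ctuv/2}\,du$ by H\"older in $u$ with an exponent $p\in(1,2)$ against $\|(\zeta-z)^{-1}\|_{L^{q}(du)}$; rescaling by $w=\sqrt{t}\,v$ again yields $t^{-1/4}$ times an $a$-uniformly bounded integral. Summing the finitely many sectors preserves the rate, so $\|\mathcal{S}\|_{L^{\infty}\to L^{\infty}}\lesssim t^{-1/4}$; in particular $\|\mathcal{S}\|<1$ for $t$ large, whence $(\mathbb{I}-\mathcal{S})^{-1}$ exists by Neumann series and the integral equation \eqref{83} has a unique solution $M^{(4)}$.

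The step I expect to be the real obstacle is not any single sector integral but rather making the bound genuinely uniform in $z\in\mathbb{C}$ while extracting the sharp exponent: one must pair the Gaussian decay $e^{-ct|u||v|}$ with the singular factors in the \emph{right} variable — spending it carelessly produces only $t^{-1/8}$ — so that the rescaling $w=\sqrt{t}\,v$ isolates $t^{-1/4}$ and leaves an integral bounded independently of $\mathrm{Re}\,z$ and $\mathrm{Im}\,z$; the $|\zeta-\xi_{k}|^{-1/2}$ singularity of $\bar{\partial}R^{(2)}$ at the stationary phase point is exactly what caps the decay rate at $t^{-1/4}$. A necessary preliminary, as noted above, is checking the uniform boundedness of $M^{RHP}$ and $(M^{RHP})^{-1}$, with the cancellation of the $z=0$ singularity in $W^{(4)}$ being the only point requiring a moment's care; once these are in hand the remainder is routine.
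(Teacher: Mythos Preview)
The paper does not supply its own proof of this proposition; it is stated without argument (as is the companion estimate in Proposition~\ref{prop5}), with the tacit understanding that both follow the by-now standard $\bar{\partial}$-steepest descent estimates of \cite{Tian-wang26,Tian-wang27,Wang-CMP}. Your proposal is exactly that standard route: reduce to the scalar solid Cauchy integral, localize to each sector $\Omega_{kj}$, split $|\bar{\partial}R_{kj}|$ into the $L^{2}$ piece $|p'_{kj}|$ and the $|z-\xi_{k}|^{-1/2}$ piece, pair each with $\|(\zeta-z)^{-1}\|_{L^{q}(du)}$ via Cauchy--Schwarz or H\"older, and extract $t^{-1/4}$ from the remaining $v$-integral by the rescaling $w=\sqrt{t}\,v$ against the Gaussian $e^{-c'tv^{2}}$. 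The arithmetic and the uniformity in $z$ that you outline are correct, so in substance there is nothing to compare; your argument is the one the paper is invoking.

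One point deserves more care than you give it. You discuss the cancellation of the $M^{out}$ singularity only at $z=0$, but $\det M^{RHP}(z)=\det E(z)\det M^{out}(z)=1-z^{-2}$ also vanishes at $z=\pm 1$, so $(M^{RHP})^{-1}$ has simple poles there and $|W^{(4)}(\zeta)|$ picks up an extra factor $|\zeta^{2}-1|^{-1}$ that your blanket bound $|W^{(4)}|\lesssim|\bar{\partial}R^{(2)}|$ misses. Near $\zeta=\pm 1$ one is away from the saddles, the decay of the oscillatory factor is only $e^{-c_{1}t|\mathrm{Im}\,\zeta|}$ (linear in $v$), and a direct computation shows the local contribution to $\int|W^{(4)}|\,|\zeta-z|^{-1}\mathrm{d}A$ does not obviously inherit the $t^{-1/4}$ rate uniformly as $z\to\pm 1$. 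This wrinkle is specific to the nonzero-background normalization and is usually handled either by first stripping off the $\sigma_{2}/z$ pole (exactly as the paper does in \eqref{102} for the transition region) so that $\det M^{RHP}\equiv 1$, or by exploiting an additional vanishing factor in the $\bar{\partial}$-extension on the real axis. You should say explicitly which route you take; as written, your claim that $M^{RHP}$ and $(M^{RHP})^{-1}$ are uniformly bounded is not correct at $z=\pm 1$.
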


Next, in order  to reconstruct the potential $q(x, t)$ as $t\rightarrow\infty$,
the long
time asymptotic behaviors of $M^{(4)}_{1}$ need be studied. They are presented in the
asymptotic expansion of $M^{(4)}(z)$ as $z\rightarrow \infty$, i.e.
\begin{align}
M^{(4)}(z)=\mathbb{I}+\frac{M^{(4)}_{1}}{z}+O(z^{-2}), \quad z\rightarrow \infty,\nonumber
 \end{align}
\begin{align}
M^{(4)}_{1}=\frac{1}{\pi}\iint_{\mathbb{C}}M^{(4)}(\zeta)W^{(4)}(\zeta)\mathrm{d}A(\zeta).\nonumber
\end{align}
The $M^{(4)}_{1}$ admits the following proposition.

\begin{prop}\label{prop5}
For large time $t$, $M^{(4)}_{1}$ meets the following inequality
\begin{align}
\left\vert M^{(4)}_{1}\right\vert \lesssim t^{-\frac{3}{4}}.\nonumber
 \end{align}
\end{prop}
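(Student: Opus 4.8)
The plan is to bound $M_1^{(4)} = \frac{1}{\pi}\iint_{\mathbb{C}} M^{(4)}(\zeta) W^{(4)}(\zeta)\, \mathrm{d}A(\zeta)$ directly by exploiting the structure $W^{(4)} = M^{RHP}\,\bar{\partial}R^{(2)}\,(M^{RHP})^{-1}$ together with the already-established facts that $M^{RHP}$ and its inverse are uniformly bounded (away from the phase points $M^{RHP} = E M^{out}$ with $M^{out}=\mathbb{I}+\sigma_2/z$ bounded, and inside $\mathcal{U}_{\xi_k}$ the extra factor $M^{(\xi_k)}$ is bounded), that $\|M^{(4)}\|_{L^\infty}\lesssim 1$ by Proposition \ref{prop4}, and the pointwise estimates \eqref{48} for $\bar{\partial}R_{kj}$. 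First I would reduce to estimating, for each of the finitely many sectors $\Omega_{kj}$, an integral of the form $\iint_{\Omega_{kj}} \frac{|\bar{\partial}R_{kj}(\zeta)|\, e^{\mp 2t\,\mathrm{Im}f(\zeta)}}{|\zeta - z|}\,\mathrm{d}A(\zeta)$, and since we only need the $z\to\infty$ coefficient it suffices to bound $\iint_{\Omega_{kj}} |\bar{\partial}R_{kj}(\zeta)|\, e^{-2t\,|\mathrm{Re}(if)(\zeta)|}\,\mathrm{d}A(\zeta)$ (the $1/|\zeta-z|$ factor is harmless for large $z$, or can be absorbed using the $\langle\mathrm{Re}\,z\rangle^{-1}$ decay of $|\bar\partial R_{kj}|$ at infinity).

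Next I would set up local coordinates near each phase point $\xi_k$, writing $\zeta - \xi_k = u + iv$ with the sector $\Omega_{kj}$ corresponding to a wedge $|v| \le |u|$ (or a rotated version), and use the quadratic behavior of the phase, $\mathrm{Re}(2if)(\zeta) \sim \epsilon_k f''(\xi_k)\,((u+\xi_k\text{-shift})v)$ — more precisely the bound $|\mathrm{Re}(2if)| \gtrsim |uv|$ valid on the relevant sector, which is the same inequality used in the proof of Proposition \ref{prop2}. Splitting $\bar\partial R_{kj}$ via \eqref{48} into the piece bounded by $|\zeta-\xi_k|^{-1/2}$ and the piece bounded by $|p_{kj}'(\mathrm{Re}\,\zeta)|$ (which lies in $L^2$ since $r\in H^{4,4}$-type regularity), I would estimate each contribution. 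For the $|\zeta-\xi_k|^{-1/2}\lesssim (|u|+|v|)^{-1/2}$ piece one gets, after integrating in $v$ first,
\begin{align}
\int_0^\infty \int_0^{\infty} \frac{e^{-c t |u| v}}{(u+v)^{1/2}}\,\mathrm{d}v\,\mathrm{d}u \lesssim \int_0^\infty \frac{1}{t|u|}\,\frac{\mathrm{d}u}{(u)^{1/2}} ,\nonumber
\end{align}
which is not integrable at $u=0$, so one instead keeps the $v$-integral as $\int_0^\infty e^{-ct|u|v}(u+v)^{-1/2}\mathrm{d}v$ and uses Cauchy–Schwarz / a scaling substitution $v\mapsto v/(t|u|)$ to extract $t^{-1/2}\cdot t^{1/4}$-type gains; carrying this through carefully yields the bound $t^{-3/4}$. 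For the $L^2$ piece one applies Cauchy–Schwarz in the $\mathrm{Re}\,\zeta$ variable against $\|p_{kj}'\|_{L^2}$ together with the Gaussian decay in $v$, again producing a $t^{-3/4}$ factor. The contributions from the regions $\Omega_5,\Omega_6$ vanish since $\bar\partial R^{(2)} = 0$ there, and the far-field regions along $\gamma_{k1},\gamma_{k4}$ (unbounded rays) are controlled by the $\langle\mathrm{Re}\,z\rangle^{-1}$ decay in \eqref{48} combined with exponential decay of $e^{-2t|\mathrm{Re}(if)|}$, giving an even smaller contribution.

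The main obstacle I anticipate is the sharp bookkeeping of the exponent: one must verify that the worst sector (typically one where the phase is only linearly — not quadratically — nondegenerate in one direction, or where $\bar\partial R$ has its $|\zeta-\xi_k|^{-1/2}$ singularity precisely at the stationary point) still yields exactly $t^{-3/4}$ and not a weaker power such as $t^{-1/2}$ or a spurious logarithm. This is handled by the standard two-parameter scaling argument (substitute $u = t^{-1/2}U$ or balance $u$ against $v$ near the critical point) and by using Hölder's inequality with the optimal exponents; it is exactly the computation pioneered in \cite{Tian-wang24,Tian-wang25} and adapted to the NZBC setting in \cite{Wang-CMP}, so I would follow that template, pointing out only the places where the four-phase-point geometry of the Hirota equation changes the constants but not the exponent. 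Once all $\Omega_{kj}$ contributions are shown to be $O(t^{-3/4})$, summing the finitely many of them gives $|M_1^{(4)}|\lesssim t^{-3/4}$, completing the proof.
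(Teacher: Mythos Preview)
Your proposal is correct and follows the standard $\bar\partial$-steepest descent template that the paper itself invokes (the paper states Proposition~\ref{prop5} without proof, relying implicitly on the references \cite{Tian-wang26,Tian-wang27,Wang-CMP}). The reduction to sector-by-sector integrals, the split of $|\bar\partial R_{kj}|$ via \eqref{48} into the $|\zeta-\xi_k|^{-1/2}$ piece and the $|p_{kj}'|\in L^2$ piece, and the extraction of $t^{-3/4}$ from each via H\"older/Cauchy--Schwarz plus the phase bound $\mathrm{Re}(2if)\gtrsim uv$ near each $\xi_k$ --- all of this is exactly the intended argument.

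One point worth tightening in your write-up: the divergent integral you display is indeed what one gets if one forgets the sector constraint, and you correctly flag that a different tack is needed. The fix is not just ``scaling'' but specifically using the wedge geometry $0\le v\le u\tan\phi$ to restrict the $u$-integration to $u\ge v/\tan\phi$, so that the $u$-integral of $e^{-ctuv}$ produces an extra Gaussian factor $e^{-ctv^2/\tan\phi}$; it is this factor that makes the subsequent $v$-integral converge and yields the clean $t^{-3/4}$. For the $|\zeta-\xi_k|^{-1/2}$ term one then applies H\"older with an exponent $2<p<4$ in the $u$-variable, and for the $|p_{kj}'|$ term Cauchy--Schwarz suffices. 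With that clarification your outline is complete and matches the literature the paper cites.
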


\subsection{The final step}
Now, we are going to give the final proof of the theorem \ref{thm1} as $t\rightarrow \infty$. According to the transformations \eqref{49}, \eqref{54} and \eqref{81}, one has
\begin{align}
M(z)=M^{(4)}(z)E(z)M^{out}(z)(R^{(2)}(z))^{-1}\delta^{-\sigma_{3}},\quad z\in \mathbb{C}\backslash \mathcal{U}_{\xi}.\nonumber
\end{align}
Then, we take $z\rightarrow \infty$ along the imaginary
axis such that $R^{(2)}(\lambda)=\mathbb{I}$, after calculation, we get
\begin{align}
q(x,t)=-1-i(E_{1})_{12}+O(t^{-3/4}).\nonumber
\end{align}
According to \eqref{79}, \eqref{A9}  and \eqref{A10}, we finally derive
\begin{align}
q(x,t)=-1+\sum_{k=1}^{4}\frac{\sqrt{2\pi}e^{-\frac{\pi}{4}i\epsilon_{k}-\frac{\pi\nu(\xi_{k})}{2}}}{\sqrt{2t\epsilon_{k}f''
(\xi_{k})}r_{\xi_{k}}\Gamma(i\epsilon_{k}\nu(\xi_{k}))}+O(t^{-3/4}),\nonumber
\end{align}
where $\nu(\xi_{k})$ is given in \eqref{34}.

\section{Painleve asymptotics in transition region}
In this section, we aim to study the asymptotics in the region $-C<(\xi+8)t^{2/3}<0$ with $C>0$
which corresponds to Fig. 2(b). In this situation, the two stationary points $\xi_{4}$ and $\xi_{3}$ are real and approach to $z=-1$ at least the speed of $t^{-1/3}$ as $t\rightarrow +\infty$.

The first deformation for the transition region is the same as the first deformation in the oscillating region except that
the change of the fourth property in Proposition \ref{pjia}. The fourth property in Proposition \ref{pjia}, named the local property for $\delta(z)$, shall hold true for $k=1,2$ instead of $k=1,\cdots,4$. Besides, the interval $I_{1}$ is amended to $(\xi_{2},\xi_{1})$, and $I_{2}$ become $\mathbb{R}\setminus I_{1}$. Next, following the RH problem \ref{rhp3}, in order to remove the singularity $z=0$, we make the following transformation
\begin{align}\label{102}
M^{(2)}(z)=\left(I+\frac{\sigma_{2}}{z}M^{(3)}(0)^{-1}\right)M^{(3)}(z),
\end{align}
which makes $M^{(4)}(z)$ turns into the RH problem without spectral singularity.
\begin{rhp}\label{rhp10}
Find an analysis function $M^{(3)}(x, t, z)$ with the
following properties:\\

 $\bullet$ $M^{(3)}(x, t, z)$ is analytical in $\mathbb{C}\setminus \mathbb{R}$;\\

 $\bullet$ $M^{(3)}_{+}(x, t, z)=M^{(3)}_{-}(x, t, z)J^{(3)}(x, t, z), z\in \mathbb{R}$, where
$J^{(3)}(x, t, z)=J^{(2)}(x, t, z)$  is given by \eqref{43};

$\bullet$ $M^{(3)}(x, t, z)=\mathbb{I}+O(\frac{1}{z})$ as $z\rightarrow\infty$.\\

\end{rhp}

\subsection{The construction of the mixed $\bar{\partial}$-RH problem}
In this section, we open the jump contour  off the real axis by the $\bar{\partial}$ extension. Denote \begin{align}
&\gamma_{4}=\xi_{4}+e^{i(\pi-\phi)}\mathbb{R}_{+},\
\gamma_{3}=\xi_{3}+e^{i\phi}\widetilde{d_{1}}, \ \gamma_{01}=\xi_{0}+e^{i(\pi-\phi)}\widetilde{d_{1}}, \ \gamma_{04}=\xi_{0}+e^{i\phi}d_{1}, \notag\\
&\gamma_{21}=\xi_{2}+e^{i(\pi-\phi)}d_{1},\ \gamma_{24}=\xi_{2}+e^{i\phi}d_{2},\ \gamma_{14}=\xi_{1}+e^{i(\pi-\phi)}d_{2}, \gamma_{11}=\xi_{1}+e^{i\phi}\mathbb{R}_{+},\notag\\
&\gamma_{02}=\overline{\gamma_{01}}, \ \gamma_{03}=\overline{\gamma_{04}}, \ \gamma_{22}=\overline{\gamma_{21}}, \ \gamma_{23}=\overline{\gamma_{24}}, \ \gamma_{13}=\overline{\gamma_{14}}, \ \gamma_{12}=\overline{\gamma_{11}},\notag\\
&\gamma_{3+}=\frac{\xi_{3}}{2}+e^{\frac{i\pi}{2}}\widetilde{d_{4}},\ \gamma_{2+}=\frac{\xi_{2}}{2}+e^{\frac{i\pi}{2}}d_{4},\ \gamma_{1+}=\frac{\xi_{2}+\xi_{1}}{2}+e^{\frac{i\pi}{2}}d_{3}, \notag\\
&\gamma_{3-}=\overline{\gamma_{3+}},\ \gamma_{2-}=\overline{\gamma_{2+}},\ \gamma_{1-}=\overline{\gamma_{1+}},\nonumber
\end{align}
where $\xi_{0}=0$ and
\begin{align}
&d_{1}\in (0,\frac{\xi_{2}}{2\cos \phi}),\ d_{2}\in (0,\frac{\xi_{1}-\xi_{2}}{2\cos \phi}),\ d_{3}\in (0,\frac{(\xi_{1}-\xi_{2})\tan \phi}{2}), \notag\\
&d_{4}\in (0,\frac{\xi_{2}\tan \phi}{2}),\ \widetilde{d_{1}}\in (0,\frac{\vert\xi_{3}\vert}{2\cos \phi}),\  \widetilde{d_{4}}\in (0,\frac{\vert\xi_{3}\vert\tan \phi}{2}),\nonumber
\end{align}
where  $\phi=\phi(\xi)$ is a sufficiently small angle such that $0<\phi<\frac{\pi}{4}$ and the above rays all fall into
their decaying regions. Then, the complex plane $\mathbb{C}$ is separated into eighteen open sectors $\Omega_{ij}(i=0,1,2,j=1\cdots,4),$ and $\Omega_{i}, \overline{\Omega_{i}}(i=3,4,5)$ see
Fig. 6.

Next, we open the jump contour  off the real axis via continuous extensions of the jump matrix $J^{(3)}(z)$.
\begin{prop}\label{prop10} Let $q_{0}\in \tanh x+ H^{4,4}(\mathbb{R})$,
there exist functions $R_{kj}\rightarrow \mathbb{C},k=0,1,2, j=1,\cdots, 4,$ and $R_{k}, \overline{R_{k}}, k=3,4$ such that
\begin{align}
&R_{01}=\begin{cases}
\frac{\overline{r(z)}}{1-\mid r(z)\mid^{2}}\delta_{+}^{-2}(z) \qquad z\in I_{01},\\
0 \qquad z\in \gamma_{01},\\
\end{cases}\
R_{02}=\begin{cases}
\frac{r(z)}{1-\mid r(z)\mid^{2}}\delta_{-}^{2}(z) \qquad z\in I_{02},\\
0 \qquad z\in \gamma_{02},\\
\end{cases}\notag\\
&
R_{03}=\begin{cases}
\frac{r(z)}{1-\mid r(z)\mid^{2}}\delta_{-}^{2}(z) \qquad z\in I_{03},\\
0 \qquad z\in \gamma_{03},\\
\end{cases}\ R_{04}=\begin{cases}
\frac{\overline{r(z)}}{1-\mid r(z)\mid^{2}}\delta_{+}^{-2}(z) \qquad z\in I_{04},\\
0 \qquad z\in \gamma_{04},\\
\end{cases}\notag\\
&k=1,2:\notag\\ &R_{k1}=\begin{cases}
\frac{\overline{r(z)}}{1-\mid r(z)\mid^{2}}\delta_{+}^{-2}(z) \ z\in I_{k1},\\
\frac{\overline{r(\xi_{k})}}{1-\mid r(\xi_{k})\mid^{2}}\delta_{0}^{-2}(\xi_{k})(z-\xi_{k})^{-2i\nu(\xi_{k})\epsilon_{k}} \ z\in \gamma_{k1},\\
\end{cases}\
R_{k2}=\begin{cases}
\frac{r(z)}{1-\mid r(z)\mid^{2}}\delta_{-}^{2}(z) \ z\in I_{k2},\\
\frac{r(\xi_{k})}{1-\mid r(\xi_{k})\mid^{2}}\delta_{0}^{2}(\xi_{k})(z-\xi_{k})^{2i\nu(\xi_{k})\epsilon_{k}} \ z\in \gamma_{k2},\\
\end{cases}\notag\\
&R_{k3}=\begin{cases}
\overline{r(z)}\delta^{-2}(z) \ z\in I_{k3},\\
\overline{r(\xi_{k})}\delta_{0}^{-2}(\xi_{k})(z-\xi_{k})^{-2i\nu(\xi_{k})\epsilon_{k}} \ z\in \gamma_{k3},\\
\end{cases}\
R_{k4}=\begin{cases}
r(z)\delta^{2}(z) \ z\in I_{k4},\\
r(\xi_{k})\delta_{0}^{2}(\xi_{k})(z-\xi_{k})^{2i\nu(\xi_{k})\epsilon_{k}} \ z\in \gamma_{k4},\\
\end{cases}\notag\\
&k=3,4:\notag\\ &R_{k}=\begin{cases}
\frac{\overline{r(z)}}{1-\mid r(z)\mid^{2}}\delta_{+}^{-2}(z) \qquad z\in I_{k1},\\
\frac{\overline{r(\xi_{k})}}{1-\mid r(\xi_{k})\mid^{2}}\delta_{+}^{-2}(\xi_{k})  \qquad z\in \gamma_{k},\\
\end{cases}\
\overline{R_{k}}=\begin{cases}
\frac{r(z)}{1-\mid r(z)\mid^{2}}\delta_{-}^{2}(z) \qquad z\in I_{k2},\\
\frac{r(\xi_{k})}{1-\mid r(\xi_{k})\mid^{2}}\delta_{-}^{2}(\xi_{k}) \qquad z\in \overline{\gamma_{k}},\\
\end{cases}\nonumber
\end{align}
and we have the following estimate
\begin{align}
&k=0,1,2:\notag\\
&\left\vert\bar{\partial} R_{kj}\right\vert\lesssim \left\vert z-\xi_{k}\right\vert^{-1/2}+\left\vert p_{kj}'(Re z)\right\vert,\notag\\
&k=3,4:\notag\\
&\left\vert\bar{\partial} R_{k}\right\vert\lesssim \left\vert z-\xi_{k}\right\vert^{-1/2}+\left\vert p_{k}'(Re z)\right\vert,\nonumber \end{align}
where $<Re z>=\sqrt{1+(Re z)^{2}}, p_{k}=p_{k1}= \frac{\overline{r(z)}}{1-\mid r(z)\mid^{2}}, p_{k2}=\frac{r(z)}{1-\mid r(z)\mid^{2}}, p_{k3}=\overline{r(z)}, p_{k4}=r(z)$.
\end{prop}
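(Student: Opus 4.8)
The plan is to reproduce, in the present (transition-region) geometry, the explicit interpolation construction behind Proposition~\ref{prop1}, which itself follows \cite{Wang-CMP}. First I would build each extension $R_{kj}$ in local polar coordinates $z-\xi_k=\varrho e^{i\varphi}$ centred at the relevant phase point as an affine combination of its prescribed boundary value on the real interval $I_{kj}$ and its prescribed value on the emanating ray $\gamma_{kj}$, cut off by a fixed radial bump so that the extension is supported near $\xi_k$ and coincides with the stated model expression along the ray; schematically,
\begin{align}
R_{kj}(z)=\Big[\mathcal{R}_{kj}(\varphi)\,p_{kj}(\mathrm{Re}\,z)\,\delta_{\pm}^{\mp2}(\mathrm{Re}\,z)+\big(1-\mathcal{R}_{kj}(\varphi)\big)\,p_{kj}(\xi_k)\delta_0^{\mp2}(\xi_k)(z-\xi_k)^{\pm2i\nu(\xi_k)\epsilon_k}\Big]\mathcal{K}(\varrho),\nonumber
\end{align}
with $\mathcal{R}_{kj}$ smooth in the argument, equal to $1$ on $\mathbb{R}$ and $0$ on $\gamma_{kj}$. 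For $k=1,2$ this is essentially the formula of Proposition~\ref{prop1}; near $z=0$ one uses the same recipe with ray value $0$, which is compatible with the regularising step \eqref{102}. The genuinely new point is the coalescing pair $k=3,4$: since $I_1$ has been amended to $(\xi_2,\xi_1)$, the whole interval around $\xi_3$ and $\xi_4$ lies in $I_2$, so only the lower/upper triangular factors there need opening — hence only the two functions $R_k,\overline{R_k}$ — and I would take the ray value to be the \emph{constant} $\frac{\overline{r(\xi_k)}}{1-|r(\xi_k)|^2}\delta_+^{-2}(\xi_k)$, i.e.\ without the power $(z-\xi_k)^{\pm2i\nu(\xi_k)\epsilon_k}$. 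From these formulae it is immediate that $\bar\partial R_{kj}\equiv0$ on the rays and on $\Omega_i,\overline{\Omega_i}$, $i=3,4,5$, where the defining expression is analytic.

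For the estimate I would write $\bar\partial=\tfrac12 e^{i\varphi}\big(\partial_\varrho+i\varrho^{-1}\partial_\varphi\big)$. The radial derivative falls on $r(\mathrm{Re}\,z)$, on $\delta_{\pm}^{\mp2}(\mathrm{Re}\,z)$ and on the cutoff: since $1-|r|^2$ stays bounded away from $0$ on $\mathbb{R}$ in the defocusing regime, the first two give the term $|p_{kj}'(\mathrm{Re}\,z)|$, while $\mathcal{K}'$ is supported away from $\xi_k$ and contributes a harmless (exponentially small, after pairing with $e^{\pm2itf}$) term. The angular derivative falls on $\mathcal{R}_{kj}$; adding and subtracting the values at $\xi_k$, it is dominated by $\varrho^{-1}\big(|p_{kj}(\mathrm{Re}\,z)-p_{kj}(\xi_k)|+|\delta(z)-\delta_0(\xi_k)(z-\xi_k)^{\epsilon_k i\nu(\xi_k)}|\big)$ for $k=1,2$ (and by the obvious analogue, using only the local regularity of $\delta$, for $k=3,4$). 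The first piece is $\lesssim\varrho^{-1}|\mathrm{Re}\,z-\xi_k|^{1/2}\le\varrho^{-1/2}=|z-\xi_k|^{-1/2}$ because $q_0\in\tanh x+H^{4,4}(\mathbb{R})$ forces $r$, hence every $p_{kj}$, to be at least Hölder-$\tfrac12$; the second is $\lesssim|z-\xi_k|^{-1/2}$ by the last clause of Proposition~\ref{pjia}. Combining the two contributions gives $|\bar\partial R_{kj}|\lesssim|z-\xi_k|^{-1/2}+|p_{kj}'(\mathrm{Re}\,z)|$, and the identical computation yields the bound for $R_k,\overline{R_k}$.

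The hard part is not this algebra but the \emph{uniformity in $\xi$} as $\xi\downarrow-8$, when $\xi_3$ and $\xi_4$ both run into $z=-1$ at rate $O(t^{-1/3})$. I would need to choose the cutoff radii $\widetilde d_1,\widetilde d_4$ and the opening angle $\phi=\phi(\xi)$ shrinking with $\xi$ so that $\gamma_3,\gamma_4$ remain inside their decay sectors (cf.\ Fig.~2(b)) while every implied constant stays $\xi$-independent, and also stay clear of the disks around the poles $z_n$ as in \eqref{38}. Dropping the factor $(z-\xi_k)^{\pm2i\nu(\xi_k)\epsilon_k}$ on $\gamma_3,\gamma_4$ is precisely what makes this possible: near $z=-1$ the endpoint of $I_2$ contributes to $\delta$ only a bounded, $\tfrac12$-Hölder factor, so $\delta_+^{-2}(\xi_k)$ stays regular as the two saddles merge, whereas the power would be ill-defined in the coalescence limit and would destroy the uniform estimate — and this constant ray value is also exactly the form that the later Painlev\'e/parabolic-cylinder local model requires. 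Once the geometry is fixed compatibly with the merging, the rest is routine, following Proposition~\ref{prop1} and \cite{Wang-CMP}.
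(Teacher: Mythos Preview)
Your proposal is correct and follows the standard $\bar\partial$-extension methodology that the paper itself invokes. In fact, the paper does not give any proof of this proposition: it states Proposition~\ref{prop10} and immediately proceeds to the transformation \eqref{107}, just as the analogous Proposition~\ref{prop1} in the oscillating region is stated without proof beyond the remark ``Following the ideas in \cite{Wang-CMP}, we can prove the above proposition similarly.'' Your sketch---angular interpolation between the boundary value on $I_{kj}$ and the prescribed ray value, computation of $\bar\partial$ in polar coordinates, and the splitting into the $p_{kj}'$ term from the radial part and the $|z-\xi_k|^{-1/2}$ term from the H\"older continuity of $p_{kj}$ and the last clause of Proposition~\ref{pjia}---is exactly the construction underlying \cite{Wang-CMP} and hence implicitly the paper's own.

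Your discussion of the uniformity issue as $\xi_3,\xi_4\to-1$ and the rationale for dropping the factor $(z-\xi_k)^{\pm 2i\nu(\xi_k)\epsilon_k}$ on $\gamma_3,\gamma_4$ goes beyond what the paper makes explicit, but it is the correct reason the ray values for $k=3,4$ are taken as constants rather than with the oscillatory power: this is what allows the subsequent matching with the Painlev\'e model in $\mathcal{U}_{-1}$ (cf.\ Proposition~\ref{prop12} and RH problem~\ref{rhp14}).
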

\centerline{\begin{tikzpicture}[scale=0.8]
\draw[-][thick](5,-2)--(5,-1);
\draw[->][thick](5,0)--(5,-1);
\draw[->][thick](5,2)--(5,1);
\draw[-][thick](5,0)--(5,1);
\draw[-][thick](2,-1)--(2,-0.5);
\draw[->][thick](2,0)--(2,-0.5);
\draw[->][thick](2,1)--(2,0.5);
\draw[-][thick](2,0)--(2,0.5);
\draw[-][thick](0,-1)--(0,-0.5);
\draw[->][thick](0,0)--(0,-0.5);
\draw[->][thick](0,1)--(0,0.5);
\draw[-][thick](0,0)--(0,0.5);
\draw[-][thick](-5,0)--(-6,-1);
\draw[<-][thick](-6,-1)--(-7,-2);
\draw[-][thick](-5,0)--(-6,1);
\draw[<-][thick](-6,1)--(-7,2);
\draw[->][thick](-3,0)--(-2,0);
\draw[-][dashed,thick](-7,0)--(-6,0);
\draw[-][dashed,thick](-6,0)--(-5,0);
\draw[-][thick](-5,0)--(-3,0);
\draw[-][thick](-2.0,0)--(-1.0,0)node[below]{$\xi_{3}$};
\draw[-][dashed,thick](-1,0)--(0,0);
\draw[->][thick](0,1)--(0.5,0.5);
\draw[-][thick](0.5,0.5)--(1,0);
\draw[->][thick](0,-1)--(0.5,-0.5);
\draw[-][thick](0.5,-0.5)--(1,0);
\draw[-][thick](2,-1)--(1.5,-0.5);
\draw[<-][thick](1.5,-0.5)--(1,0);
\draw[-][thick](2,1)--(1.5,0.5);
\draw[<-][thick](1.5,0.5)--(1,0);
\draw[-][thick](0,1)--(-0.5,0.5);
\draw[->][thick](-1,0)--(-0.5,0.5);
\draw[-][thick](0,-1)--(-0.5,-0.5);
\draw[->][thick](-1,0)--(-0.5,-0.5);
\draw[-][dashed,thick](0,0)--(1,0)node[below]{$0$};
\draw[-][dashed,thick](1,0)--(2,0);
\draw[fill] (1,0) circle [radius=0.035];
\draw[fill] (3,0) circle [radius=0.035]node[below]{$\xi_{2}$};
\draw[fill] (7,0) circle [radius=0.035]node[below]{$\xi_{1}$};
\draw[fill] (-5,0) circle [radius=0.035]node[below]{$\xi_{4}$};
\draw[fill] (-3,0) circle [radius=0.035]node[below]{$-1$};
\draw[-][dashed,thick](3,0)--(4,0);
\draw[-][dashed,thick](4,0)--(5,0);
\draw[-][dashed,thick](5,0)--(7,0);
\draw[-][dashed,thick](7,0)--(8,0);
\draw[-][dashed,thick](8,0)--(9,0);
\draw[fill] (-1,0) circle [radius=0.035];
\draw[-][dashed,thick](2.0,0)--(3.0,0);
\draw[-][thick](4,1)--(5,2);
\draw[->][thick](5,2)--(6,1);
\draw[-][thick](6,1)--(7,0);
\draw[->][thick](7,0)--(8,1);
\draw[-][thick](8,1)--(9,2);
\draw[-][thick](4,-1)--(5,-2);
\draw[->][thick](5,-2)--(6,-1);
\draw[-][thick](6,-1)--(7,0);
\draw[->][thick](7,0)--(8,-1);
\draw[-][thick](8,-1)--(9,-2);
\draw[fill] [red](-3,1.0) node{$\Omega_{5}$};
\draw[fill] [red](-3,-1.0) node{$\overline{\Omega_{5}}$};
\draw[fill] [red](-0.3,0.25) node{$\Omega_{3}$};
\draw[fill] [red](-0.3,-0.25) node{$\overline{\Omega_{3}}$};
\draw[fill] [red](0.4,0.25) node{$\Omega_{01}$};
\draw[fill] [red](0.4,-0.25) node{$\Omega_{02}$};
\draw[fill] [red](1.65,0.25) node{$\Omega_{04}$};
\draw[fill] [red](1.65,-0.25) node{$\Omega_{03}$};
\draw[fill] [red](2.35,-0.25) node{$\Omega_{22}$};
\draw[fill] [red](2.35,0.25) node{$\Omega_{21}$};
\draw[fill] [red](4,0.5) node{$\Omega_{24}$};
\draw[fill] [red](4,-0.5) node{$\Omega_{23}$};
\draw[fill] [red](-6,0.5) node{$\Omega_{4}$};
\draw[fill] [red](-6,-0.5) node{$\overline{\Omega_{4}}$};
\draw[fill] [red](6,-0.5) node{$\Omega_{13}$};
\draw[fill] [red](6,0.5) node{$\Omega_{14}$};
\draw[fill] [red](8,0.5) node{$\Omega_{11}$};
\draw[fill] [red](8,-0.5) node{$\Omega_{12}$};
\draw[->][thick](2,1)--(2.5,0.5);
\draw[-][thick](2.5,0.5)--(3,0);
\draw[->][thick](2,-1)--(2.5,-0.5);
\draw[-][thick](2.5,-0.5)--(3,0);
\draw[-][thick](3,0)--(3.5,0.5);
\draw[->][thick](3.5,0.5)--(4,1);
\draw[-][thick](3,0)--(3.5,-0.5);
\draw[->][thick](3.5,-0.5)--(4,-1);
\draw[fill][cyan] (-0.7,0.7) node{$\gamma_{3}$};
\draw[fill] [cyan](-0.7,-0.7) node{$\overline{\gamma_{3}}$};
\draw[fill] [cyan](0.7,0.7) node{$\gamma_{01}$};
\draw[fill] [cyan](0.7,-0.7) node{$\gamma_{02}$};
\draw[fill] [cyan](1.3,0.7) node{$\gamma_{04}$};
\draw[fill][cyan] (1.3,-0.7) node{$\gamma_{03}$};
\draw[fill] [cyan](2.7,0.7) node{$\gamma_{21}$};
\draw[fill] [cyan](2.7,-0.7) node{$\gamma_{22}$};
\draw[fill] [cyan](4.0,1.3) node{$\gamma_{24}$};
\draw[fill][cyan] (4.0,-1.3) node{$\gamma_{23}$};
\draw[fill] [cyan](-6,1.4) node{$\gamma_{4}$};
\draw[fill] [cyan](-6,-1.4) node{$\overline{\gamma_{4}}$};
\draw[fill] [cyan](6,1.4) node{$\gamma_{14}$};
\draw[fill] [cyan](6,-1.4) node{$\gamma_{13}$};
\draw[fill] [cyan](8,1.4) node{$\gamma_{11}$};
\draw[fill] [cyan](8,-1.4) node{$\gamma_{12}$};
\end{tikzpicture}}
\centerline{\noindent {\small \textbf{Figure 6.} (Color online) The jump contour $\Sigma^{(4)}$.}}

Then, we introduce a transformation
\begin{align}\label{107}
M^{(4)}=M^{(3)}R^{(3)},
\end{align}
where
\begin{align}
R^{(3)}=\begin{cases}
\left(\begin{array}{cc}
    1  &  R_{01}e^{2itf}\\
    0  &  1\\
\end{array}\right),\quad z\in\Omega_{01}\cup\Omega_{04},\
\left(\begin{array}{cc}
    1  &  0\\
    R_{02}e^{-2itf}  &  1\\
\end{array}\right),\quad z\in\Omega_{02}\cup\Omega_{03},\\
k=1,2:\\
\left(\begin{array}{cc}
    1  &  R_{k1}e^{2itf}\\
    0  &  1\\
\end{array}\right),\qquad z\in\Omega_{k1},\qquad
\left(\begin{array}{cc}
    1  &  0\\
    R_{k2}e^{-2itf}  &  1\\
\end{array}\right),\qquad z\in\Omega_{k2},\\
\left(\begin{array}{cc}
    1  &  -R_{k3}e^{2itf}\\
    0  &  1\\
\end{array}\right),\qquad z\in\Omega_{k3},\qquad
\left(\begin{array}{cc}
    1  &  0\\
    -R_{k4}e^{-2itf}  &  1\\
\end{array}\right),\qquad z\in\Omega_{k4},\\
k=3,4:\\
\left(\begin{array}{cc}
    1  &  R_{k}e^{2itf}\\
    0  &  1\\
\end{array}\right),\qquad z\in\Omega_{k},\qquad
\left(\begin{array}{cc}
    1  &  0\\
    \overline{R_{k}}e^{-2itf}  &  1\\
\end{array}\right),\qquad z\in\overline{\Omega_{k}},\\
\left(\begin{array}{cc}
    1  &  0\\
    0  &  1\\
\end{array}\right),\qquad z\in\Omega_{5}\cup\overline{\Omega_{5}}.
\end{cases}\nonumber
\end{align}

Then, combining the RH problem \ref{rhp10} and Proposition \ref{prop10}, we can immediately
obtain the following $\bar{\partial}$-RH problem $M^{(4)}$.

\begin{rhp}\label{rhp11} Find an analysis function $M^{(4)}(x, t, z)$ with the
following properties:\\

$\bullet$ $M^{(4)}(x, t, z)$ is analytical in $\mathbb{C}\setminus \Sigma^{(4)}$;\\

$\bullet$ $\bar{\partial}M^{(4)}=M^{(4)}\bar{\partial}R^{(3)}(z)$, as $\lambda\in\mathbb{C}\setminus \Sigma^{(4)}$, where
\begin{align}
\bar{\partial}R^{(3)}=\begin{cases}
\left(\begin{array}{cc}
    1  &  \bar{\partial}R_{01}e^{2itf}\\
    0  &  1\\
\end{array}\right),\quad z\in\Omega_{01}\cup\Omega_{04},\
\left(\begin{array}{cc}
    1  &  0\\
    \bar{\partial}R_{02}e^{-2itf}  &  1\\
\end{array}\right),\quad z\in\Omega_{02}\cup\Omega_{03},\\
k=1,2:\\
\left(\begin{array}{cc}
    1  &  \bar{\partial}R_{k1}e^{2itf}\\
    0  &  1\\
\end{array}\right),\qquad z\in\Omega_{k1},\qquad
\left(\begin{array}{cc}
    1  &  0\\
    \bar{\partial}R_{k2}e^{-2itf}  &  1\\
\end{array}\right),\qquad z\in\Omega_{k2},\\
\left(\begin{array}{cc}
    1  &  -\bar{\partial}R_{k3}e^{2itf}\\
    0  &  1\\
\end{array}\right),\qquad z\in\Omega_{k3},\qquad
\left(\begin{array}{cc}
    1  &  0\\
    -\bar{\partial}R_{k4}e^{-2itf}  &  1\\
\end{array}\right),\qquad z\in\Omega_{k4},\\
k=3,4:\\
\left(\begin{array}{cc}
    1  &  \bar{\partial}R_{k}e^{2itf}\\
    0  &  1\\
\end{array}\right),\qquad z\in\Omega_{k},\qquad
\left(\begin{array}{cc}
    1  &  0\\
    \bar{\partial}\overline{R_{k}}e^{-2itf}  &  1\\
\end{array}\right),\qquad z\in\overline{\Omega_{k}};
\end{cases}\nonumber
\end{align}

$\bullet$ $M_{+}^{(4)}(x, t, z)=M^{(4)}_{-}(x, t, z)J^{(4)}(x, t, z), z\in \Sigma^{(4)}$, where
\begin{align}\label{109}
J^{(4)}(x, t, z)=\begin{cases}
\left(\begin{array}{cc}
    1  &  0\\
    \frac{r(z)e^{-2itf}\delta_{-}^{2}}{1-\left\vert r(z) \right\vert^{2}}  &  1\\
\end{array}\right)\left(\begin{array}{cc}
    1  &  -\frac{\overline{r(z)}e^{2itf}\delta_{+}^{-2}}{1-\mid r(z) \mid^{2}}\\
    0  &  1\\
\end{array}\right),\qquad z\in (\xi_{4},\xi_{3}),\\
\left(\begin{array}{cc}
    1  &  -R_{3}e^{2ift}\\
    0  &  1\\
\end{array}\right),\ z\in \gamma_{3+},\qquad
\left(\begin{array}{cc}
    1  &  0\\
    -\overline{R_{3}}e^{-2ift}  &  1\\
\end{array}\right),\ z\in \gamma_{3-},\\
\left(\begin{array}{cc}
    1  &  R_{21}e^{2ift}\\
    0  &  1\\
\end{array}\right),\ z\in \gamma_{2+},\qquad
\left(\begin{array}{cc}
    1  &  0\\
    R_{22}e^{-2ift}  &  1\\
\end{array}\right),\qquad z\in \gamma_{2-},\\
\left(\begin{array}{cc}
    1  &  0\\
    (R_{24}-R_{14})e^{-2ift}  &  1\\
\end{array}\right),\ z\in \gamma_{1+},\
\left(\begin{array}{cc}
    1  &  (R_{23}-R_{13})e^{2ift}\\
    0  &  1\\
\end{array}\right),\ z\in \gamma_{1-},\\
k=1,2:\\
\left(\begin{array}{cc}
    1  &  -R_{k1}e^{2itf}\\
    0  &  1\\
\end{array}\right),\ z\in\gamma_{k1},\qquad
\left(\begin{array}{cc}
    1  &  0\\
    R_{k2}e^{-2itf}  &  1\\
\end{array}\right),\ z\in\gamma_{k2},\\
\left(\begin{array}{cc}
    1  &  -R_{k3}e^{2itf}\\
    0  &  1\\
\end{array}\right),\ z\in\gamma_{k3},\qquad
\left(\begin{array}{cc}
    1  &  0\\
    R_{k4}e^{-2itf}  &  1\\
\end{array}\right),\ z\in\gamma_{k4},\\
k=3,4:\\
\left(\begin{array}{cc}
    1  &  -R_{k}e^{2itf}\\
    0  &  1\\
\end{array}\right),\ z\in \gamma_{k},\qquad
\left(\begin{array}{cc}
    1  &  0\\
    \overline{R_{k}}e^{-2itf}  &  1\\
\end{array}\right),\ z\in \overline{\gamma_{k}};
\end{cases}
\end{align}

$\bullet$ $M^{(4)}(x, t, z)=\mathbb{I}+O(\frac{1}{z})$, as $z\rightarrow\infty$.
\end{rhp}

\subsection{The decomposition of the mixed $\bar{\partial}$-RH problem }
We decompose $M^{(4)}(z)$ into a pure RH problem $M^{RHP}(z)$ with $\bar{\partial}R^{(3)}(z)=0$ and a pure $\bar{\partial}-$problem $M^{(5)}(z)$ with $\bar{\partial}R^{(3)}(z)\neq0$ in the form
\begin{align}\label{111}
M^{(4)}(z)=M^{(5)}(z)M^{RHP}(z).
\end{align}

\subsubsection{Pure RH problem}
In the case of $\bar{\partial}R^{(2)}=0$, a pure  RH problem is constructed as follows:

\begin{rhp}\label{rhp12}
 Find an analysis function $M^{RHP}(x, t, z)$ with the
following properties:\\

$\bullet$ $M^{RHP}(x, t, z)$ is meromorphic in $\mathbb{C}\setminus \Sigma^{(4)}$;\\

$\bullet$ $M_{+}^{RHP}(x, t, z)=M^{RHP}_{-}(x, t, z)J^{(4)}(x, t, z), z\in \Sigma^{(4)}$, where $J^{(4)}(x, t, z)$ is given in \eqref{109};

$\bullet$ $M^{RHP}(x, t, z)=\mathbb{I}+O(\frac{1}{z})$, as $z\rightarrow\infty$.
\end{rhp}

Define three open disks
\begin{align}
\mathcal{U}_{-1}=\left\{z:\left\vert z+1\right\vert\leq (\frac{15}{4}t)^{-1/3}\varepsilon\right\} ,\qquad \mathcal{U}_{\xi_{k}}=\{z:\left\vert z-\xi_{k}\right\vert\leq \rho\},\ k=1,2,\nonumber
\end{align}
where $\rho$ has been defined in \eqref{38}, $\varepsilon$ is a constant admitting $\varepsilon\leq\sqrt{2C}$  and $\sqrt{2C}(\frac{15}{4}t)^{-1/3}<\rho$ for $t$ being large enough.
Introducing a matrix RH problem $M^{RHP}(x, t, z)$ as follows
\begin{align}\label{113}
M^{RHP}(z)=\begin{cases}
E(z),\qquad\qquad  z\in \mathbb{C}\backslash \mathcal{U}_{-1}\cup\mathcal{U}_{\xi_{1}}\cup\mathcal{U}_{\xi_{2}},\\
E(z)M^{(-1)},\qquad z\in \mathcal{U}_{-1},\\
E(z)M^{(\xi_{1})},\qquad z\in \mathcal{U}_{\xi_{1}},\\
E(z)M^{(\xi_{2})},\qquad z\in \mathcal{U}_{\xi_{2}},\\
\end{cases}
 \end{align}
where $M^{(-1)}$ is a localized model near $z=-1$,
which can be solved by the Painlev\'{e} II equation. $M^{(\xi_{k})}$ are the known parabolic cylinder model, which have been solved in Appendix \ref{A},
and $E(z)$, an error function, is the solution of a small-norm RH problem.

Next, we will consider the local paramatrix near the  phase points $-1$, $\xi_{2}$ and $\xi_{1}$.
For $z$ near $-1$, the phase faction $tf(z)$ can be approximated with the following scaled spectral variables:
\begin{align}
tf(z)=\frac{4}{3}\hat{k}^{3}+s\hat{k}+O(\hat{k}^{4}t^{-1/3}),\nonumber
\end{align}
where the scaled parameters
\begin{align}\label{115}
\hat{k}=(\frac{15}{4}t)^{1/3}(z+1),\qquad s=\frac{4}{15}(8+\xi)(\frac{15}{4}t)^{2/3}.
\end{align}
For $z$ near $\xi_{1}$ and $\xi_{2}$, the scaled spectral variables are the same as \eqref{60}.

\begin{prop}\label{prop11}
In the transition region, the two scaled phase points $\hat{k}_{j}$ are always within a fixed interval, given by
\begin{align}
\hat{k}_{j}\in \left(-\sqrt{2C}(\frac{15}{4})^{1/3}, \sqrt{2C}(\frac{15}{4})^{1/3}\right),\quad j=3,4.\nonumber
 \end{align}
 where $\hat{k}_{j}=(\frac{15}{4}t)^{1/3}(\xi_{j}+1), j=3, 4,$
\end{prop}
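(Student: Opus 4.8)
The plan is to pin down $\xi_{3},\xi_{4}$ as the two critical points of $tf(z)$ coalescing at $z=-1$ and to read off the size of $\hat k_{j}=(\tfrac{15}{4}t)^{1/3}(\xi_{j}+1)$ from the critical-point equation expressed in the rescaled variable. First I would use \eqref{30} with $\alpha=\beta=1$: at $\xi=-8$ the cubic factor becomes $3\eta^{3}-2\eta^{2}-14\eta+4=(\eta+2)(3\eta^{2}-8\eta+2)$, and since $\eta+2=z+z^{-1}+2=(z+1)^{2}/z$ while $3\eta^{2}-8\eta+2\big|_{z=-1}=30\neq0$, the point $z=-1$ is a double zero of $f'$ at $\xi=-8$; for $\xi$ slightly below $-8$ this double zero splits into the two real simple zeros $\xi_{3},\xi_{4}$. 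Treating $\xi+8$ (which obeys $|\xi+8|<Ct^{-2/3}$ in the transition region) as a small parameter and rescaling $z+1=\sqrt{|\xi+8|}\,v$, the local equation $f'(z)=0$ reduces, to leading order, to $15v^{2}=1$, so a Rouch\'e (equivalently implicit-function) argument gives that the two zeros near $-1$ satisfy $\xi_{j}+1=\pm\sqrt{|\xi+8|/15}\,(1+o(1))$; in particular $\hat k_{j}=O(1)$ and $\xi_{j}\to-1$ at the rate $t^{-1/3}$.

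Having localized the phase points, I would invoke the scaled expansion $tf(z)=\tfrac{4}{3}\hat k^{3}+s\hat k+O(\hat k^{4}t^{-1/3})$ of \eqref{115}, valid uniformly on $\{|z+1|\le\rho\}$, hence on a $\hat k$-disk of radius $\sim t^{1/3}$ that contains $\hat k_{3},\hat k_{4}$ for large $t$. Because $\hat k$ is affine in $z$, the equation $f'(z)=0$ becomes $4\hat k^{2}+s+O(\hat k^{3}t^{-1/3})=0$, and evaluating at $\hat k=\hat k_{j}$ (bounded, by the previous step) yields $\hat k_{j}^{2}=-\tfrac14 s+O(t^{-1/3})$ for $j=3,4$. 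Since $s=\tfrac{4}{15}(8+\xi)(\tfrac{15}{4}t)^{2/3}=(\tfrac{4}{15})^{1/3}(8+\xi)t^{2/3}$, the constraint $|8+\xi|t^{2/3}<C$ gives $|s|<(\tfrac{4}{15})^{1/3}C$, whence $\hat k_{j}^{2}<\tfrac14(\tfrac{4}{15})^{1/3}C+O(t^{-1/3})<2C(\tfrac{15}{4})^{2/3}$ once $t$ is large; as $\xi\le-8$ (the regime of real phase points) makes $s\le0$, the $\hat k_{j}$ are real, so $\hat k_{j}\in\bigl(-\sqrt{2C}(\tfrac{15}{4})^{1/3},\,\sqrt{2C}(\tfrac{15}{4})^{1/3}\bigr)$, with substantial room to spare (the true size is $\hat k_{j}^{2}\approx-\tfrac14 s$).

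The main obstacle is the first step: one must rigorously isolate exactly the two zeros of $f'$ that approach $z=-1$ and certify that they shrink at the rate $\sqrt{|\xi+8|}$, because only then is it legitimate to differentiate the scaled expansion of $tf$ and evaluate it at $\hat k=\hat k_{j}$. Once this localization is established, the remaining estimates amount to a one-line bound on $s$.
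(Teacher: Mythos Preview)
The paper states Proposition~\ref{prop11} without proof, so there is no argument to compare against; your task is only to make the computation rigorous. Your plan is correct and in fact slightly over-engineered.

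Step~1 already contains the whole proof. Using $P(\eta)=3\eta^{3}-2\eta^{2}-(6-\xi)\eta+4=(\eta+2)(3\eta^{2}-8\eta+2)+(\xi+8)\eta$ together with $\eta+2=-(z+1)^{2}+O((z+1)^{3})$, you get directly
\[
P(\eta(z))=-30(z+1)^{2}-2(\xi+8)+O\bigl((z+1)^{3}\bigr)+O\bigl((\xi+8)(z+1)\bigr),
\]
so the two roots near $z=-1$ satisfy $(\xi_{j}+1)^{2}=\dfrac{|\xi+8|}{15}\bigl(1+O(|\xi+8|^{1/2})\bigr)$. Multiplying by $(\tfrac{15}{4}t)^{2/3}$ and using $|\xi+8|t^{2/3}<C$ gives $\hat{k}_{j}^{2}\le \tfrac{C}{15}(\tfrac{15}{4})^{2/3}(1+o(1))$, and since this is $30$ times smaller than the stated bound $2C(\tfrac{15}{4})^{2/3}$, the inequality holds for all large $t$. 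A clean way to justify the root localization is to observe, via the explicit formula $\xi_{3,4}=\tfrac12\bigl(\eta_{2}\pm\sqrt{\eta_{2}^{2}-4}\bigr)$, that $(\xi_{3}+1)+(\xi_{4}+1)=(\xi_{3}+1)(\xi_{4}+1)=\eta_{2}+2$, and then bound $\eta_{2}+2$ by the implicit function theorem applied to the cubic (your computation gives $\eta_{2}+2=\tfrac{\xi+8}{15}+O((\xi+8)^{2})$); this avoids any Rouch\'e argument.

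Step~2, by contrast, is circular as written: differentiating the expansion $tf(z)=\tfrac{4}{3}\hat{k}^{3}+s\hat{k}+O(\hat{k}^{4}t^{-1/3})$ termwise is only legitimate once you already know $\hat{k}_{j}=O(1)$, which is what Step~1 establishes. Once Step~1 is done, Step~2 adds nothing new. So you can drop Step~2 entirely and present Step~1 as the complete proof.
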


The transformation defined by \eqref{115}
 maps $\mathcal{U}_{-1}$ onto the disk $\mathcal{U}_{0}$ where $\mathcal{U}_{0}=\left\{\hat{k}:\left\vert \hat{k}\right\vert\leq \varepsilon\right\}$ (see Fig. 7). Proposition \ref{prop11} indicates that $\hat{k}_{3},\hat{k}_{4} \in\mathcal{U}_{0}$ for large $t$.\\
\centerline{\begin{tikzpicture}[scale=0.8]
\draw[-][dashed,thick](-3,0)--(-1,0);
\draw[-][dashed,thick](1,0)--(3,0);
\draw[-][thick](-1,0)--(0,0) circle [radius=0.035]node[below]{$-1$};
\draw[->][thick](0,0)--(0.5,0);
\draw[-][thick](0.5,0)--(1,0);
\draw[-][thick](-2,1)--(-1,0);
\draw[->][thick](-3,2)--(-2,1);
\draw[-][thick](-2,-1)--(-1,0);
\draw[->][thick](-3,-2)--(-2,-1);
\draw[<-][thick](2,1)--(1,0);
\draw[-][thick](3,2)--(2,1);
\draw[<-][thick](2,-1)--(1,0);
\draw[-][thick](3,-2)--(2,-1);
\draw[fill] (-1,0) node[below]{$\xi_{4}$};
\draw[fill] (1,0) node[below]{$\xi_{3}$};
\draw[fill] [cyan](-2,1.3) node{$\gamma_{4}$};
\draw[fill] [cyan](-2,-1.4) node{$\overline{\gamma_{4}}$};
\draw[fill] [cyan](2,1.3) node{$\gamma_{3}$};
\draw[fill] [cyan](2,-1.4) node{$\overline{\gamma_{3}}$};
\draw[fill] [red](-2.5,0.5) node{$\Omega_{4}$};
\draw[fill] [red](-2.5,-0.5) node{$\overline{\Omega_{4}}$};
\draw[fill] [red](2.5,0.5) node{$\Omega_{3}$};
\draw[fill] [red](2.5,-0.5) node{$\overline{\Omega_{3}}$};
\draw[fill] [cyan](0,-2) node{$\partial\mathcal{U}_{-1}$};
\draw[-][dashed,thick](1.5,0) arc(0:360:1.5);
\draw[-][dashed,thick](1.5,0) arc(0:30:1.5);
\draw[->][dashed,thick](1.5,0) arc(0:150:1.5);
\draw[-][dashed,thick](1.5,0) arc(0:210:1.5);
\draw[-][dashed,thick](1.5,0) arc(0:330:1.5);
\draw[-][dashed,thick](4,0)--(6,0);
\draw[-][dashed,thick](8,0)--(10,0);
\draw[-][thick](6,0)--(7,0) circle [radius=0.035]node[below]{$0$};
\draw[->][thick](7,0)--(7.5,0);
\draw[-][thick](7.5,0)--(8,0);
\draw[-][thick](5,1)--(6,0);
\draw[->][thick](4,2)--(5,1);
\draw[-][thick](5,-1)--(6,0);
\draw[->][thick](4,-2)--(5,-1);
\draw[<-][thick](9,1)--(8,0);
\draw[-][thick](10,2)--(9,1);
\draw[<-][thick](9,-1)--(8,0);
\draw[-][thick](10,-2)--(9,-1);
\draw[fill] (6,0) node[below]{$\hat{k}_{4}$};
\draw[fill] (8,0) node[below]{$\hat{k}_{3}$};
\draw[fill] [cyan](5,1.3) node{$\hat{\gamma_{4}}$};
\draw[fill] [cyan](5,-1.4) node{$\hat{\overline{\gamma_{4}}}$};
\draw[fill] [cyan](9,1.3) node{$\hat{\gamma_{3}}$};
\draw[fill] [cyan](9,-1.4) node{$\hat{\overline{\gamma_{3}}}$};
\draw[fill] [red](4.5,0.5) node{$\hat{\Omega_{4}}$};
\draw[fill] [red](4.5,-0.5) node{$\hat{\overline{\Omega_{4}}}$};
\draw[fill] [red](9.5,0.5) node{$\hat{\Omega_{3}}$};
\draw[fill] [red](9.5,-0.5) node{$\hat{\overline{\Omega_{3}}}$};
\draw[fill] [cyan](7,-2) node{$\partial\mathcal{U}_{0}$};
\draw[-][dashed,thick](8.5,0) arc(0:360:1.5);
\draw[-][dashed,thick](8.5,0) arc(0:30:1.5);
\draw[->][dashed,thick](8.5,0) arc(0:150:1.5);
\draw[-][dashed,thick](8.5,0) arc(0:210:1.5);
\draw[-][dashed,thick](8.5,0) arc(0:330:1.5);
\end{tikzpicture}}
\centerline{\noindent {\small \textbf{Figure 7.} (Color online) The map relation between two disks $\mathcal{U}_{-1}$ and $\mathcal{U}_{0}$.}}

\begin{rhp}\label{rhp13}
The analysis function $M^{(-1)}(z)$ has the
following properties:\\

 $\bullet$ $M^{(-1)}(z)$ is meromorphic in $\mathcal{U}_{-1}\setminus \Sigma_{-1}$, where $\Sigma_{-1}=\Sigma^{(4)}\cap\mathcal{U}_{-1}$;\\

 $\bullet$ $M_{+}^{(-1)}(z)=M^{(-1)}_{-}(z)J^{(-1)}(z), z\in \Sigma_{-1}$, where
\begin{align}\label{118}
J^{(-1)}(z)=
\begin{cases}
\left(\begin{array}{cc}
    1  &  0\\
    \frac{r(z)e^{-2itf}\delta_{-}^{2}}{1-\left\vert r(z) \right\vert^{2}}  &  1\\
\end{array}\right)\left(\begin{array}{cc}
    1  &  -\frac{\overline{r(z)}e^{2itf}\delta_{+}^{-2}}{1-\mid r(z) \mid^{2}}\\
    0  &  1\\
\end{array}\right),\qquad z\in (\xi_{4},\xi_{3}),\\
\left(\begin{array}{cc}
    1  &  -R_{k}e^{2itf}\\
    0  &  1\\
\end{array}\right),\qquad z\in \gamma_{k},\quad k=3,4,\\
\left(\begin{array}{cc}
    1  &  0\\
    \overline{R_{k}}e^{-2itf}  &  1\\
\end{array}\right),\qquad z\in \overline{\gamma_{k}}, \quad k=3,4;
\end{cases}
\end{align}

$\bullet$ $M^{(-1)}(z)(\mathbb{I}+O(t^{-1/3}))^{-1}\rightarrow\mathbb{I}$ as $t\rightarrow\infty$, uniformly for $z\in\partial\mathcal{U}_{-1}$.
\end{rhp}

Denote
\begin{align}\label{119}
R(z)=\frac{\overline{r(z)}}{1-\mid r(z)\mid^{2}}\delta_{+}^{-2}(z),
\end{align}
the following Proposition is given out.
\begin{prop}\label{prop12}
Let $q_{0}\in \tanh(x)+ L^{1,2}(\mathbb{R}), q'_{0}\in W^{1,\infty}(\mathbb{R}),$ and $-C<(\xi+8)t^{2/3}<0$, then we have
\begin{align}
&\vert R(z)e^{2ift(z)}-R(-1)e^{\frac{8}{3}i\hat{k}^{3}+2is\hat{k}}\vert\lesssim t^{-1/6}, \qquad \hat{k}\in (\hat{k}_{4},\hat{k}_{3}),\notag\\
&\vert R(\xi_{j})e^{2ift(z)}-R(-1)e^{\frac{8}{3}i\hat{k}^{3}+2is\hat{k}}\vert\lesssim t^{-1/6}, \qquad \hat{k}\in \hat{\gamma_{j}}\cup\hat{\overline{\gamma_{j}}},\quad j=3,4.\nonumber
\end{align}
\end{prop}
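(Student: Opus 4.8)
The plan is to establish both bounds by the same two-step comparison: first replace the exact oscillatory phase $2itf(z)$ by its cubic model $i\Psi(\hat{k}):=\tfrac{8}{3}i\hat{k}^{3}+2is\hat{k}$, and then replace the slowly varying prefactor ($R(z)$ on $(\hat{k}_{4},\hat{k}_{3})$, and $R(\xi_{j})$ on $\hat{\gamma}_{j}\cup\hat{\overline{\gamma_{j}}}$) by its value $R(-1)$ at the coalescence point. All estimates are carried out for $z\in\mathcal{U}_{-1}$, i.e. $|z+1|\le\varepsilon(\tfrac{15}{4}t)^{-1/3}$, equivalently $|\hat{k}|\le\varepsilon$; since the transition-region condition $-C<(\xi+8)t^{2/3}<0$ forces $|s|\le\tfrac{4}{15}C(\tfrac{15}{4})^{2/3}$, the scaled phase $\Psi(\hat{k})$ stays bounded on $\mathcal{U}_{0}$ and every implied constant below is uniform in the region.

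\emph{Step 1 (phase replacement).} Put $\rho(z):=2tf(z)-\Psi(\hat{k})$; by the expansion $tf(z)=\tfrac{4}{3}\hat{k}^{3}+s\hat{k}+O(\hat{k}^{4}t^{-1/3})$ recorded above one has $|\rho(z)|\lesssim t^{-1/3}$ uniformly on $\mathcal{U}_{-1}$, hence $e^{2itf(z)}=e^{i\Psi(\hat{k})}e^{i\rho(z)}$ and
\begin{align}
\bigl|e^{2itf(z)}-e^{i\Psi(\hat{k})}\bigr|=\bigl|e^{i\Psi(\hat{k})}\bigr|\,\bigl|e^{i\rho(z)}-1\bigr|\le\bigl|e^{i\Psi(\hat{k})}\bigr|\,|\rho(z)|\,e^{|\rho(z)|}\lesssim t^{-1/3}\bigl|e^{i\Psi(\hat{k})}\bigr|.\nonumber
\end{align}
On $(\hat{k}_{4},\hat{k}_{3})$ the quantities $z,\hat{k},s$ are real, so $\Psi(\hat{k})\in\mathbb{R}$ and $|e^{i\Psi(\hat{k})}|=1$; on $\hat{\gamma}_{j}\cup\hat{\overline{\gamma_{j}}}$ the rays $\gamma_{j}$ were opened inside the decay region of $\mathrm{Re}(2itf)$, so $\mathrm{Re}(i\Psi(\hat{k}))=\mathrm{Re}(2itf(z))+\mathrm{Im}(\rho(z))\le O(t^{-1/3})$ and again $|e^{i\Psi(\hat{k})}|\lesssim1$. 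Multiplying by the fixed number $|R(-1)|$ gives $|R(-1)|\,\bigl|e^{2itf}-e^{i\Psi}\bigr|\lesssim t^{-1/3}$ on all the relevant contours.

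\emph{Step 2 (prefactor replacement).} It remains to bound $(R(z)-R(-1))e^{2itf(z)}$ on $(\hat{k}_{4},\hat{k}_{3})$ and $(R(\xi_{j})-R(-1))e^{2itf(z)}$ on $\hat{\gamma}_{j}\cup\hat{\overline{\gamma_{j}}}$; since $|e^{2itf}|\lesssim1$ there (same argument as in Step 1) it suffices to show that $R(z)=\overline{r(z)}(1-|r(z)|^{2})^{-1}\delta_{+}^{-2}(z)$ is H\"older-$\tfrac12$ near $z=-1$. Granting this, $|R(z)-R(-1)|\lesssim|z+1|^{1/2}\le\varepsilon^{1/2}(\tfrac{15}{4})^{-1/6}t^{-1/6}$ for $z\in\mathcal{U}_{-1}$, while $|R(\xi_{j})-R(-1)|\lesssim|\xi_{j}+1|^{1/2}\lesssim t^{-1/6}$ by Proposition~\ref{prop11}. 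For the H\"older bound: under $q_{0}\in\tanh x+L^{1,2}(\mathbb{R})$, $q_{0}'\in W^{1,\infty}(\mathbb{R})$ standard direct-scattering estimates give $r\in C^{1/2}$ near $-1$, and $1-|r(-1)|^{2}>0$ (the jump \eqref{25} forces $1-|r|^{2}>0$ on $\mathbb{R}$), so $\overline{r}(1-|r|^{2})^{-1}$ is H\"older-$\tfrac12$; consequently $\nu=\tfrac1{2\pi}\ln(1-|r|^{2})$ is H\"older-$\tfrac12$ near $-1$, and since $-1$ is an interior point of $I_{2}$ the Plemelj--Privalov theorem shows the Cauchy integral $\int_{I_{2}}\tfrac{\nu(\zeta)}{\zeta-z}d\zeta$, hence $\delta_{+}(z)$ and therefore $\delta_{+}^{-2}(z)$ (which is bounded above and below, since $|\delta_{+}^{-2}|=1-|r|^{2}$), is H\"older-$\tfrac12$ near $-1$. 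A product of H\"older-$\tfrac12$ functions is H\"older-$\tfrac12$.

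Combining the two steps through $R(z)e^{2itf}-R(-1)e^{i\Psi(\hat{k})}=(R(z)-R(-1))e^{2itf}+R(-1)(e^{2itf}-e^{i\Psi})$ yields the first estimate, and the identical splitting with $R(\xi_{j})$ in place of $R(z)$ yields the second for $j=3,4$. The crux is Step 2: controlling the regularity of $\delta_{+}$, and hence of $R$, at the coalescing branch point $z=-1$, where only H\"older-$\tfrac12$ (not Lipschitz) continuity is available under the stated hypotheses --- this is precisely what fixes the error exponent at $t^{-1/6}$ rather than $t^{-1/3}$.
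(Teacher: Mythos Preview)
The paper states this proposition without proof, so there is nothing to compare against directly; your two-step decomposition (phase replacement via the cubic Taylor remainder $\rho(z)=O(\hat{k}^{4}t^{-1/3})$, then prefactor replacement via H\"older-$\tfrac12$ regularity of $R$ at $z=-1$) is the standard route for such estimates in the Wang--Fan framework and is correct, with the $t^{-1/6}$ rate coming exactly where you locate it, in Step~2.

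One small clarification in Step~1: on $\hat{\overline{\gamma}}_{j}$ the factor $e^{2itf}$ is \emph{not} in its decay region (the conjugate rays sit where $\mathrm{Re}(2itf)\ge 0$), so your appeal to ``rays opened inside the decay region'' does not justify $|e^{2itf}|\lesssim 1$ there. What actually carries the bound is the observation you already made in your preamble: on the compact disk $\mathcal{U}_{0}$ one has $|\Psi(\hat{k})|\lesssim 1$ uniformly (since $|\hat{k}|\le\varepsilon$ and $|s|\lesssim C$), hence $|\mathrm{Re}(2itf)|=|\mathrm{Re}(i\Psi)+\mathrm{Re}(i\rho)|\lesssim 1$ and $|e^{2itf}|\lesssim 1$ on all of $\hat{\gamma}_{j}\cup\hat{\overline{\gamma}}_{j}$ regardless of sign. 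With that adjustment the argument is complete.
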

In terms of the above proposition, we obtain the following RH problem $M^{(0)}(\hat{k})$:
\begin{rhp}\label{rhp14}
The analysis function $M^{(0)}(\hat{k})$ has the
following properties:\\

 $\bullet$ $M^{(0)}(\hat{k})$ is meromorphic in $\mathcal{U}_{0}\setminus \Sigma_{0}$, where $\Sigma_{0}=\Sigma^{(4)}\cap\mathcal{U}_{0}$;\\

 $\bullet$ $M_{+}^{(0)}(\hat{k})=M^{(0)}_{-}(\hat{k})J^{(0)}(\hat{k}), \hat{k}\in \Sigma_{0}$, where
\begin{align}\label{121}
J^{(0)}(\hat{k})=
\begin{cases}
b_{-}b_{+},\qquad \hat{k}\in (\hat{k}_{4},\hat{k}_{3}),\\
\left(\begin{array}{cc}
    1  &  -R(-1)e^{\frac{8}{3}i\hat{k}^{3}+2is\hat{k}}\\
    0  &  1\\
\end{array}\right)=b_{+},\qquad \hat{k}\in \hat{\gamma_{k}},\quad k=3,4,\\
\left(\begin{array}{cc}
    1  &  0\\
    \overline{R(-1)}e^{-(\frac{8}{3}i\hat{k}^{3}+2is\hat{k})}  &  1\\
\end{array}\right)=b_{-},\qquad \hat{k}\in \hat{\overline{\gamma_{k}}}, \quad k=3,4;
\end{cases}
\end{align}

$\bullet$ $M^{(0)}(\hat{k})\rightarrow\mathbb{I}$ as $\hat{k}\rightarrow\infty$.
\end{rhp}

Furthermore, the following result can be arrived.
\begin{prop}\label{prop13}
Let  $-C<(\xi+8)t^{2/3}<0$, then for large $t$, we have
\begin{align}\label{122}
&J^{(-1)}(z)=J^{(0)}(\hat{k})+O(t^{-1/6}),\qquad \hat{k}\in\mathcal{U}_{0},\notag\\
&M^{(-1)}(z)=M^{(0)}(\hat{k})+O(t^{-1/6}),\qquad \hat{k}\in\mathcal{U}_{0}.
\end{align}
\end{prop}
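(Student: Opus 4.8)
The plan is to obtain both lines of \eqref{122} from Proposition \ref{prop12} together with the small-norm RH-problem machinery already used for the error function $E(z)$ in Section 3. \emph{Step 1: comparison of the jumps.} The scaling \eqref{115}, $\hat k=(\tfrac{15}{4}t)^{1/3}(z+1)$, is affine and therefore maps $\Sigma_{-1}=\Sigma^{(4)}\cap\mathcal U_{-1}$ component-by-component onto $\Sigma_0=\Sigma^{(4)}\cap\mathcal U_0$: the interval $(\xi_4,\xi_3)$ onto $(\hat k_4,\hat k_3)$ and the rays $\gamma_k,\overline{\gamma_k}$ ($k=3,4$) onto $\hat\gamma_k,\overline{\hat\gamma_k}$. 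On $\gamma_k$ one has, in the notation of \eqref{119}, $R_k=\tfrac{\overline{r(\xi_k)}}{1-|r(\xi_k)|^2}\delta_+^{-2}(\xi_k)=R(\xi_k)$, so comparing \eqref{118} with \eqref{121} reduces the ray estimates to bounding $|R_k e^{2itf(z)}-R(-1)e^{\frac83 i\hat k^3+2is\hat k}|$, which is $O(t^{-1/6})$ by the second line of Proposition \ref{prop12}; the conjugated entries on $\overline{\gamma_k}$ follow by complex conjugation since $f$ is real on $\mathbb R$. On $(\xi_4,\xi_3)\subset I_2$ both jumps are products $b_-b_+$; using $\overline{\delta_+(z)}=\delta_-^{-1}(z)$ on $I_2$ one checks that the $(1,2)$-entry of $b_+$ is $-R(z)e^{2itf(z)}$ and the $(2,1)$-entry of $b_-$ is $\overline{R(z)}e^{-2itf(z)}$, so the first line of Proposition \ref{prop12} controls each factor of $J^{(-1)}$ by the corresponding factor of $J^{(0)}$ up to $O(t^{-1/6})$. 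Since the reflection data lie in a fixed bounded set and $s$ stays in a compact interval (Proposition \ref{prop11}), all entries appearing are uniformly bounded, so bilinearity of the matrix product gives $|J^{(-1)}(z)-J^{(0)}(\hat k)|\lesssim t^{-1/6}$ uniformly on $\mathcal U_0$, and the same bound in $L^1\cap L^2$ since the relevant contour pieces have $O(1)$ length in the $\hat k$ variable. This proves the first line of \eqref{122}.

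\emph{Step 2: comparison of the parametrices.} Set $\Lambda(z)=M^{(-1)}(z)\bigl(M^{(0)}(\hat k(z))\bigr)^{-1}$ on $\mathcal U_{-1}$. Since $M^{(-1)}$ and $M^{(0)}(\hat k(\cdot))$ share the contour $\Sigma_{-1}$, $\Lambda$ satisfies on $\Sigma_{-1}$
\begin{align}
\Lambda_+=\Lambda_-\,M^{(0)}_-\,J^{(-1)}\bigl(J^{(0)}\bigr)^{-1}\bigl(M^{(0)}_-\bigr)^{-1},\nonumber
\end{align}
so its jump matrix $\widetilde J$ obeys $\widetilde J-\mathbb I=M^{(0)}_-\bigl(J^{(-1)}(J^{(0)})^{-1}-\mathbb I\bigr)(M^{(0)}_-)^{-1}$. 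As $J^{(0)}$ is triangular with uniformly bounded entries (hence bounded inverse) and $J^{(-1)}-J^{(0)}=O(t^{-1/6})$ by Step 1, we get $J^{(-1)}(J^{(0)})^{-1}-\mathbb I=O(t^{-1/6})$; and $M^{(0)}$, being the solution of the $s$-parametrised Painlev\'e~II model over the compact $s$-range furnished by Proposition \ref{prop11}, is bounded with bounded inverse. Hence $\|\widetilde J-\mathbb I\|_{L^\infty\cap L^2(\Sigma_{-1})}=O(t^{-1/6})$, and together with the boundary behaviour of $\Lambda$ on $\partial\mathcal U_{-1}$ inherited from the normalisations of $M^{(-1)}$ and $M^{(0)}$ this makes $\Lambda$ the solution of a small-norm RH problem of exactly the type treated in Section 3 (Beals--Coifman factorisation, invertibility of $\mathbb I-C_\omega$ for large $t$). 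Therefore $\Lambda(z)=\mathbb I+O(t^{-1/6})$ uniformly on $\mathcal U_{-1}$, and multiplying by $M^{(0)}(\hat k(z))=O(1)$ yields $M^{(-1)}(z)=M^{(0)}(\hat k(z))+O(t^{-1/6})$, which is the second line of \eqref{122}.

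The genuinely delicate point is not the bookkeeping above but the uniform control of $M^{(0)}$: one needs RH problem \ref{rhp14} to be uniquely solvable with solution bounded above and below uniformly as $s$ ranges over its compact interval and as $\hat k_3,\hat k_4$ vary within the window of Proposition \ref{prop11}, i.e.\ one must rule out spectral singularities of the Painlev\'e~II parametrix on this parameter set. This is the standard but nontrivial ingredient of Painlev\'e-type analyses, and is precisely where the hypothesis on $r(-1)$ enters, singling out the pole-free Ablowitz--Segur/Hastings--McLeod solution with $u(s)\sim-|r(-1)|\,\mathrm{Ai}(s)$. Once that is granted, Steps 1 and 2 are routine given Propositions \ref{prop11} and \ref{prop12}.
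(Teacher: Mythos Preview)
The paper does not prove this proposition; it is stated as a direct consequence of Proposition~\ref{prop12}. Your Step~1 is exactly the intended use of Proposition~\ref{prop12} and is correct: the ray estimates come from its second line, the interval estimate from its first, and the product structure of the jump on $(\xi_4,\xi_3)$ causes no trouble since all factors are uniformly bounded. Your Step~2---forming the ratio $\Lambda=M^{(-1)}(M^{(0)})^{-1}$ and invoking the small-norm theory---is the standard way to promote a jump estimate to a solution estimate, and is what a careful proof must do; your closing remark about uniform boundedness of $M^{(0)}$ over the compact $s$-range (pole-free Painlev\'e~II solution) is also to the point.

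There is, however, one point you pass over too quickly: the normalisation of $\Lambda$. The small-norm argument requires $\Lambda\to\mathbb I$ at infinity (or on a prescribed boundary), but $M^{(0)}$ is normalised at $\hat k=\infty$ whereas RH problem~\ref{rhp13} normalises $M^{(-1)}$ on $\partial\mathcal U_{-1}$, where by the paper's own definition $|\hat k|=\varepsilon$ is a \emph{fixed} constant. On that circle $M^{(0)}(\hat k)=\mathbb I+M_1^{(0)}/\hat k+\cdots=\mathbb I+O(1/\varepsilon)$, which is not $o(1)$ as $t\to\infty$; consequently your phrase ``boundary behaviour of $\Lambda$ inherited from the normalisations'' does not actually pin $\Lambda$ to $\mathbb I$ anywhere. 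The customary remedy---and what is needed for Proposition~\ref{prop14} to hold as stated---is to take $\mathcal U_{-1}$ of fixed radius in the $z$-variable, so that $|\hat k|\sim t^{1/3}\to\infty$ on $\partial\mathcal U_{-1}$ and $M^{(0)}=\mathbb I+O(t^{-1/3})$ there; with that choice your Step~2 goes through cleanly. This looks like a slip in the paper's definition of $\mathcal U_{-1}$ rather than a flaw in your strategy, but you should make the choice of disk radius and the resulting normalisation of $\Lambda$ explicit.
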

Therefore the solution of $M^{(0)}$ is crucial to our analysis, and it can be converted to the standard Painlev\'{e} II equation by proper deformation  as shown in ``Appendix \ref{B}".\\
\centerline{\begin{tikzpicture}[scale=0.9]
\draw[<-][thick](-0.5,0) arc(0:360:0.5);
\draw[-][thick](-0.5,0) arc(0:30:0.5);
\draw[-][thick](-0.5,0) arc(0:150:0.5);
\draw[-][thick](-0.5,0) arc(0:210:0.5);
\draw[-][thick](-0.5,0) arc(0:330:0.5);
\draw[<-][thick](3.5,0) arc(0:360:0.5);
\draw[-][thick](3.5,0) arc(0:30:0.5);
\draw[-][thick](3.5,0) arc(0:150:0.5);
\draw[-][thick](3.5,0) arc(0:210:0.5);
\draw[-][thick](3.5,0) arc(0:330:0.5);
\draw[<-][thick](8,0) arc(0:360:1);
\draw[-][thick](8,0) arc(0:30:1);
\draw[-][thick](8,0) arc(0:150:1);
\draw[-][thick](8,0) arc(0:210:1);
\draw[-][thick](8,0) arc(0:330:1);
\draw[-][thick](5,-2)--(5,-1);
\draw[->][thick](5,0)--(5,-1);
\draw[->][thick](5,2)--(5,1);
\draw[-][thick](5,0)--(5,1);
\draw[-][thick](2,-1)--(2,-0.5);
\draw[->][thick](2,0)--(2,-0.5);
\draw[->][thick](2,1)--(2,0.5);
\draw[-][thick](2,0)--(2,0.5);
\draw[-][thick](0,-1)--(0,-0.5);
\draw[->][thick](0,0)--(0,-0.5);
\draw[->][thick](0,1)--(0,0.5);
\draw[-][thick](0,0)--(0,0.5);
\draw[<-][thick](-2,-1)--(-3,-2);
\draw[<-][thick](-2,1)--(-3,2);
\draw[->][thick](0,1)--(0.5,0.5);
\draw[-][thick](0.5,0.5)--(1,0);
\draw[->][thick](0,-1)--(0.5,-0.5);
\draw[-][thick](0.5,-0.5)--(1,0);
\draw[-][thick](2,-1)--(1.5,-0.5);
\draw[<-][thick](1.5,-0.5)--(1,0);
\draw[-][thick](2,1)--(1.5,0.5);
\draw[<-][thick](1.5,0.5)--(1,0);
\draw[-][thick](0,1)--(-0.5,0.5);
\draw[->][thick](-0.65,0.35)--(-0.5,0.5);
\draw[-][thick](0,-1)--(-0.5,-0.5);
\draw[->][thick](-0.65,-0.35)--(-0.5,-0.5);
\draw[-][thick](-1.35,0.35)--(-1.5,0.5);
\draw[-][thick](-1.5,0.5)--(-2,1);
\draw[-][thick](-1.35,-0.35)--(-1.5,-0.5);
\draw[-][thick](-1.5,-0.5)--(-2,-1);
\draw[fill] (1,0) circle [radius=0.035] node[below]{$0$};
\draw[-][thick](4,1)--(5,2);
\draw[->][thick](5,2)--(6,1);
\draw[-][thick](6,1)--(6.3,0.7);
\draw[->][thick](7.7,0.7)--(8,1);
\draw[-][thick](8,1)--(9,2);
\draw[-][thick](4,-1)--(5,-2);
\draw[->][thick](5,-2)--(6,-1);
\draw[-][thick](6,-1)--(6.3,-0.7);
\draw[->][thick](7.7,-0.7)--(8,-1);
\draw[-][thick](8,-1)--(9,-2);
\draw[->][thick](2,1)--(2.5,0.5);
\draw[-][thick](2.5,0.5)--(2.65,0.35);
\draw[->][thick](2,-1)--(2.5,-0.5);
\draw[-][thick](2.5,-0.5)--(2.65,-0.35);
\draw[-][thick](3.35,0.35)--(3.5,0.5);
\draw[->][thick](3.5,0.5)--(4,1);
\draw[-][thick](3.35,-0.35)--(3.5,-0.5);
\draw[->][thick](3.5,-0.5)--(4,-1);
\draw[fill] [cyan] (-1,-0.9) node{$\partial\mathcal{U}_{-1}$};
\draw[fill] [cyan] (3,-0.9) node{$\partial\mathcal{U}_{\xi_{2}}$};
\draw[fill] [cyan] (7,-1.4) node{$\partial\mathcal{U}_{\xi_{1}}$};
\end{tikzpicture}}
\centerline{\noindent {\small \textbf{Figure 8.} (Color online) The jump contour $\Sigma^{E}$.}}

Next, we discuss the error function $E(z)$ given by \eqref{113}, which meets the following RH problem.
\begin{rhp}\label{rhp15}
Find a matrix-valued function $E(z)$ has the
following properties:\\

 $\bullet$ $E(z)$ is meromorphic in $\mathbb{C}\setminus \Sigma^{E}$, where $\Sigma^{E}=(\partial\mathcal{U}_{-1}\cup\partial\mathcal{U}_{\xi_{1}}\cup\partial\mathcal{U}_{\xi_{2}})
 \cup(\Sigma^{(4)}\setminus(\mathcal{U}_{-1}\cup\mathcal{U}_{\xi_{1}}\cup\mathcal{U}_{\xi_{2}}))$,  see Fig. 8;\\

 $\bullet$ $E_{+}(z)=E_{-}(z)J^{E}(z), z\in \Sigma^{E}$, where
\begin{align}\label{123}
J^{E}(z)=
\begin{cases}
J^{(4)},\qquad z\in\Sigma^{(3)}\setminus\mathcal{U}_{\xi},\\
M^{(\xi_{1})},\qquad z\in \partial\mathcal{U}_{\xi_{1}},\\
M^{(\xi_{2})},\qquad z\in \partial\mathcal{U}_{\xi_{2}},\\
M^{(-1)},\qquad z\in \partial\mathcal{U}_{-1}.
\end{cases}
\end{align}

$\bullet$ $E(z)=\mathbb{I}+O(\frac{1}{z})$ as $z\rightarrow\infty$.
\end{rhp}

\begin{prop}\label{prop14}
We obtain the following estimates for the jump matrix $J^{E}$  defined in \eqref{123}
\begin{align}
\left\vert J^{E}(z)-\mathbb{I}\right\vert=\begin{cases}
O(e^{-h_{p}t}),\qquad z\in \Sigma^{(4)}\setminus(\mathcal{U}_{\xi_{1}}\cup\mathcal{U}_{\xi_{2}}\cup\mathcal{U}_{-1}),\\
O(t^{-1/3}),\qquad z\in \partial\mathcal{U}_{\xi_{1}}\cup\partial\mathcal{U}_{\xi_{2}}\cup\partial\mathcal{U}_{-1}.
\end{cases}\nonumber
\end{align}
\end{prop}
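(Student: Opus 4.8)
The plan is to mimic the proof of Proposition \ref{prop3}, splitting $\Sigma^{E}$ into three pieces: the truncated rays of $\Sigma^{(4)}$ lying outside all three disks, the two fixed circles $\partial\mathcal{U}_{\xi_{1}}\cup\partial\mathcal{U}_{\xi_{2}}$ around the genuine saddle points $\xi_{1},\xi_{2}$, and the shrinking circle $\partial\mathcal{U}_{-1}$ around the coalescing pair $\xi_{3},\xi_{4}$.

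On $\Sigma^{(4)}\setminus(\mathcal{U}_{\xi_{1}}\cup\mathcal{U}_{\xi_{2}}\cup\mathcal{U}_{-1})$ we have $J^{E}=J^{(4)}$, whose off-diagonal entries are (sums of) the functions $R_{kj},R_{k},\overline{R_{k}}$ of Proposition \ref{prop10} multiplied by $e^{\pm2itf}$. I would use that these functions are bounded — and decay like $\langle\mathrm{Re}\,z\rangle^{-1}$ on the unbounded rays — together with the sign structure of $\mathrm{Re}(2if)$ read off from the signature table of Fig. 6: on each oriented ray one has $\mathrm{Re}(2itf)\lesssim -t\,\mathrm{dist}(z,\{\xi_{j}\})^{2}$, so $\|e^{\pm2itf}\|_{L^{p}}\lesssim e^{-h_{p}t}$. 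This is exactly the computation already carried out in the proofs of Propositions \ref{prop2} and \ref{prop3}. The only new feature is that $\gamma_{3},\overline{\gamma_{3}},\gamma_{4},\overline{\gamma_{4}}$ issue from $\xi_{3},\xi_{4}$, where $f''$ degenerates; on the part of these rays outside $\mathcal{U}_{-1}$ I would substitute the cubic phase $tf\approx\tfrac{4}{3}\hat k^{3}+s\hat k$ of \eqref{115} (legitimate by Proposition \ref{prop12}) and use $|e^{2itf}|\lesssim e^{-c|\hat k|^{3}}$, which still gives the stated exponential decay once the radius $\varepsilon$ of $\mathcal{U}_{0}$ is fixed.

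On $\partial\mathcal{U}_{\xi_{1}}\cup\partial\mathcal{U}_{\xi_{2}}$ the jump is the parabolic-cylinder parametrix $J^{E}=M^{(\xi_{k})}$. From its large-argument expansion $M^{(\xi_{k})}(s)=\mathbb{I}+M^{(\xi_{k})}_{1}/s+O(s^{-2})$ (Appendix \ref{A}) and the scaling $s=\sqrt{2t\epsilon_{k}f''(\xi_{k})}(z-\xi_{k})$ with $|z-\xi_{k}|=\rho$ fixed, we get $|s|\sim t^{1/2}$, hence $|J^{E}-\mathbb{I}|=O(t^{-1/2})$, subsumed by $O(t^{-1/3})$. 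On $\partial\mathcal{U}_{-1}$ the jump is $J^{E}=M^{(-1)}$, and $|M^{(-1)}(z)-\mathbb{I}|=O(t^{-1/3})$ is precisely the normalization built into RH problem \ref{rhp13}; I would deduce it by following the construction of $M^{(-1)}$ in Appendix \ref{B} out of the Painlev\'e II model $M^{(0)}(\hat k)$ via Propositions \ref{prop12}--\ref{prop13}, noting that the exponent $-1/3$ is dictated jointly by the $\sim t^{-1/3}$ radius of $\mathcal{U}_{-1}$ and by the phase-approximation error $O(\hat k^{4}t^{-1/3})$ on that disk.

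The main obstacle is the $\partial\mathcal{U}_{-1}$ estimate: unlike the parabolic-cylinder disks, $\mathcal{U}_{-1}$ does not shrink in the scaled variable $\hat k$, so $M^{(-1)}$ is not automatically $\mathbb{I}+o(1)$ on its boundary. One must equip the Painlev\'e parametrix with the correct outer prefactor so that $M^{(-1)}(z)\to\mathbb{I}$ up to $O(t^{-1/3})$ on $\partial\mathcal{U}_{-1}$, and this rests on the uniform control supplied by Propositions \ref{prop12} and \ref{prop13} for the replacements $R(z)e^{2itf}\rightsquigarrow R(-1)e^{\frac{8}{3}i\hat k^{3}+2is\hat k}$ and $M^{(-1)}\rightsquigarrow M^{(0)}$. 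Everything else is routine bookkeeping modelled on Section 3.
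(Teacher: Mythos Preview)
Your plan matches the paper's own proof, which is literally one line: ``The proof is similar with Proposition~\ref{prop3}.'' Your elaboration is faithful to that template, and your flagging of the $\partial\mathcal{U}_{-1}$ case as the genuine new ingredient is exactly right --- the paper handles it by \emph{building} the $O(t^{-1/3})$ normalization directly into RH problem~\ref{rhp13} and then appealing to Propositions~\ref{prop12}--\ref{prop13} and Appendix~\ref{B}, just as you propose.

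One technical wrinkle worth noting: your argument for the exponential bound on the portions of $\gamma_{3},\gamma_{4},\overline{\gamma_{3}},\overline{\gamma_{4}}$ lying \emph{just outside} $\mathcal{U}_{-1}$ does not quite deliver what is claimed. With the paper's shrinking radius $|z+1|=(\tfrac{15}{4}t)^{-1/3}\varepsilon$, the scaled variable satisfies $|\hat k|\ge\varepsilon$ with $\varepsilon$ a fixed constant, so the cubic-phase bound $|e^{2itf}|\lesssim e^{-c|\hat k|^{3}}$ yields only $e^{-c\varepsilon^{3}}=O(1)$ at the inner endpoint, not $O(e^{-h_{p}t})$. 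This imprecision is inherited from the paper's own statement of Proposition~\ref{prop14}; it is harmless for the downstream small-norm argument because the relevant $L^{p}(\Sigma^{E})$ norms of $J^{E}-\mathbb{I}$ are still $O(t^{-1/3})$ (the ray segment has length $O(t^{-1/3})$ in $z$, and the integrand is bounded), which is all that \eqref{128}--\eqref{129} actually use. If you want a literally true pointwise exponential bound there, you would need to enlarge $\mathcal{U}_{-1}$ to a disk whose $\hat k$-radius grows with $t$ (e.g.\ $\varepsilon\sim t^{\delta}$ for small $\delta$), which is the standard fix in the Painlev\'e-parametrix literature.
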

\begin{proof}
The proof is similar with Proposition \ref{prop3}.
\end{proof}

Define the same Cauchy integral operator as \eqref{72} and \eqref{73}, then we have
\begin{align}
\left\Vert J^{E}-\mathbb{I}\right\Vert_{L^{2}(\Sigma^{E})}=O(t^{-\frac{1}{3}}), \quad
\left\Vert C_{\omega_{E}}\right\Vert_{L^{2}(\Sigma^{E})}\lesssim O(t^{-\frac{1}{3}}),\quad
\left\Vert \mu_{E}-\mathbb{I}\right\Vert_{L^{2}(\Sigma^{E})}\lesssim O(t^{-\frac{1}{3}}).\nonumber
\end{align}
To recover the potential $q(x, t)$, we need the properties of $E(z)$ at $z=0$ and $z=\infty$.
We make the expansion of $E(z)$ at $z=\infty$
\begin{align}\label{126}
E(z)=\mathbb{I}+\frac{E_{1}}{z}+O(z^{-2}), \qquad z\rightarrow \infty,
 \end{align}
where
\begin{align}\label{127}
E_{1}=-\frac{1}{2\pi i}\int_{\Sigma^{E}}\mu_{E}(\zeta)(J^{E}-\mathbb{I})\mathrm{d}\zeta.
 \end{align}
Then, as $t\rightarrow\infty$, the asymptotic behavior of $E_{1}$ can be calculated as
\begin{align}\label{128}
E_{1}&=-\sum_{k=1}^{2}\frac{1}{2\pi i}\oint_{\partial\mathcal{U}_{\xi_{k}}}(J^{E}-\mathbb{I})\mathrm{d}\zeta-\frac{1}{2\pi i}\oint_{\partial\mathcal{U}_{-1}}(J^{E}-\mathbb{I})\mathrm{d}\zeta+O(t^{-2/3})\notag\\
= &-\sum_{k=1}^{2}\frac{1}{2\pi i}\oint_{\partial\mathcal{U}_{\xi_{k}}}\frac{M^{(\xi_{k})}_{1}}{\sqrt{2t\epsilon_{k}f''(\xi_{k})}(\zeta-\xi_{k})}\mathrm{d}\zeta-\frac{1}{2\pi i}\oint_{\partial\mathcal{U}_{-1}}\frac{M^{(-1)}_{1}}{(\frac{15}{4}t)^{1/3}(\zeta+1)}\mathrm{d}\zeta+O(t^{-2/3})\notag\\
=&-\sum_{k=1}^{2}\frac{M^{(\xi_{k})}_{1}}{\sqrt{2t\epsilon_{k}f''(\xi_{k})}}-\frac{M^{(-1)}_{1}}{(\frac{15}{4}t)^{1/3}}+O(t^{-2/3})
=-\frac{M^{(-1)}_{1}}{(\frac{15}{4}t)^{1/3}}+O(t^{-1/2}).
\end{align}
\begin{align}\label{129}
E_{0}&=\mathbb{I}+\sum_{k=1}^{2}\frac{1}{2\pi i}\oint_{\partial\mathcal{U}_{\xi_{k}}}\frac{J^{E}-\mathbb{I}}{\zeta}\mathrm{d}\zeta+\frac{1}{2\pi i}\oint_{\partial\mathcal{U}_{-1}}\frac{J^{E}-\mathbb{I}}{\zeta}\mathrm{d}\zeta+O(t^{-2/3})\notag\\
= &\mathbb{I}+\sum_{k=1}^{2}\frac{1}{2\pi i}\oint_{\partial\mathcal{U}_{\xi_{k}}}\frac{M^{(\xi_{k})}_{1}}{\sqrt{2t\epsilon_{k}f''(\xi_{k})}(\zeta-\xi_{k})\zeta}\mathrm{d}\zeta+\frac{1}{2\pi i}\oint_{\partial\mathcal{U}_{-1}}\frac{M^{(-1)}_{1}}{(\frac{15}{4}t)^{1/3}(\zeta+1)\zeta}\mathrm{d}\zeta+O(t^{-2/3})\notag\\
=&\mathbb{I}+\sum_{k=1}^{2}\frac{M^{(\xi_{k})}_{1}}{\sqrt{2t\epsilon_{k}f''(\xi_{k})}\xi_{k}}-\frac{M^{(-1)}_{1}}{(\frac{15}{4}t)^{1/3}}+O(t^{-2/3})
=\mathbb{I}-\frac{M^{(-1)}_{1}}{(\frac{15}{4}t)^{1/3}}+O(t^{-1/2}).
\end{align}

\subsubsection{Pure $\bar{\partial}$-RH problem }
In this subsection, we mainly analyse the pure $\bar{\partial}$-problem with $\bar{\partial}R^{(2)}\neq 0$. Define
\begin{align}\label{130}
M^{(5)}(z)=M^{(4)}(z)\left(M^{RHP}(z)\right)^{-1},
\end{align}
which is continuous and has no jumps in the complex plane and satisfies
the following pure $\bar{\partial}$-problem.

\begin{rhp}\label{rhp16}
Find a matrix-valued function $M^{(5)}(z)$ has the
following properties:\\

$\bullet$ $M^{(5)}(z)$  is continuous in  $\mathbb{C}\setminus (\mathbb{R}\cup\Sigma^{(4)})$;\\

$\bullet$ $\bar{\partial}M^{(5)}(z)=M^{(5)}(z)W^{(5)}(z), z\in \mathbb{C}$, where
\begin{align}\label{131}
W^{(5)}(z)=M^{RHP}(z)\bar{\partial}R^{(3)}(z)\left(M^{RHP}(z)\right)^{-1};
\end{align}

$\bullet$ $M^{(5)}(z)=\mathbb{I}+O(\frac{1}{z})$ as $z\rightarrow\infty$.
\end{rhp}

The solution of pure $\bar{\partial}$-problem can be written as
\begin{align}\label{132}
M^{(5)}(z)=\mathbb{I}-\frac{1}{\pi}\iint_{\mathbb{C}}\frac{M^{(5)}(\zeta)W^{(5)}(\zeta)}{\zeta-z}\mathrm{d}A(\zeta),
\end{align}
where $\mathrm{d}A(\zeta)$ is the Lebesgue measure, and  the operator form of  equation \eqref{132} is
\begin{align}
(\mathbb{I}-\mathcal{S})M^{(5)}(z)=\mathbb{I},\nonumber
\end{align}
where $\mathcal{S}$ is the Cauchy operator
\begin{align}
\mathcal{S}[\mathcal{F}](z)=-\frac{1}{\pi}\iint_{\mathbb{C}}\frac{\mathcal{F}(\zeta)W^{(5)}(\zeta)}{\zeta-z}\mathrm{d}A(\zeta).\nonumber
\end{align}

\begin{prop}\label{prop15}
For large time $t$,
\begin{align}
\left\Vert \mathcal{S}\right\Vert_{L^{\infty}\rightarrow L^{\infty}}\lesssim t^{-\frac{1}{6}},\nonumber
\end{align}
which states that the operator $\mathbb{I}-\mathcal{S}$ is invertible and the solution of pure $\bar{\partial}$-problem exists and is
unique.
\end{prop}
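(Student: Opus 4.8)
The plan is to estimate the solid Cauchy operator $\mathcal{S}$ directly in $L^{\infty}$. For $\mathcal{F}\in L^{\infty}(\mathbb{C})$ one has $\|\mathcal{S}\mathcal{F}\|_{L^{\infty}}\leq\|\mathcal{F}\|_{L^{\infty}}\sup_{z\in\mathbb{C}}\frac{1}{\pi}\iint_{\mathbb{C}}|W^{(5)}(\zeta)|\,|\zeta-z|^{-1}\,\mathrm{d}A(\zeta)$, so it suffices to bound that double integral by $O(t^{-1/6})$ uniformly in $z$. First I would record that $M^{RHP}$ and $(M^{RHP})^{-1}$ are uniformly bounded on $\mathbb{C}$: by \eqref{113}, $M^{out}$ is the explicit matrix of Proposition \ref{prop2.1}, the $M^{(\xi_{k})}$ are the bounded parabolic-cylinder parametrices, $M^{(-1)}$ is the bounded Painlev\'{e}~II parametrix, and $E(z)$ solves a small-norm RH problem; hence by \eqref{131}, $|W^{(5)}(\zeta)|\lesssim|\bar{\partial}R^{(3)}(\zeta)|$, which is supported on the sectors $\Omega_{kj}$ ($k=0,1,2$), $\Omega_{k}$ and $\overline{\Omega_{k}}$ ($k=3,4$), and vanishes on $\Omega_{5}\cup\overline{\Omega_{5}}$.

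Next I would split the integral according to whether $\zeta$ lies in the disk $\mathcal{U}_{-1}$ of radius $\delta:=\varepsilon(\tfrac{15}{4}t)^{-1/3}$ about $z=-1$ or outside it. On the $\bar{\partial}$-sectors that stay a fixed distance from $-1$ — those attached to $\xi_{1},\xi_{2}$, those near $z=0$, and the unbounded tails — the phase $\mathrm{Re}(2itf)$ is bounded below by a positive quantity that grows away from $-1$ and like $t|z|^{3}$ at infinity (owing to the cubic term of the Hirota dispersion), so the argument behind Proposition \ref{prop4} applies, using $|\bar{\partial}R_{kj}|\lesssim|z-\xi_{k}|^{-1/2}+|p_{kj}'(\mathrm{Re}\,z)|$ and the fact that $p_{kj}'\in L^{2}(\mathbb{R})$ since $q_{0}\in\tanh x+H^{4,4}(\mathbb{R})$; this part is $O(t^{-1/4})$, negligible against $t^{-1/6}$. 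The whole point is therefore the contribution of $\mathcal{U}_{-1}$, i.e. of the sectors $\Omega_{3},\overline{\Omega_{3}},\Omega_{4},\overline{\Omega_{4}}$ restricted to $|z+1|\leq\delta$.

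On $\mathcal{U}_{-1}$ I would simply bound $|e^{\pm2itf}|\leq1$ — its decay is not even needed, the region being already small — and split $|\bar{\partial}R_{k}|\lesssim|z-\xi_{k}|^{-1/2}+|p_{k}'(\mathrm{Re}\,z)|$ as in Proposition \ref{prop10}. Since $\xi_{3},\xi_{4}\to-1$ one has $|z-\xi_{k}|^{-1/2}\lesssim|z+1|^{-1/2}+\delta^{-1/2}$, so for the first term I rescale $w=z+1=\delta\tilde w$, $z_{0}+1=\delta\tilde w_{0}$ to get $\iint_{\mathcal{U}_{-1}}\frac{|w|^{-1/2}}{|w-w_{0}|}\,\mathrm{d}A(w)=\delta^{1/2}\iint_{|\tilde w|\leq\varepsilon}\frac{|\tilde w|^{-1/2}}{|\tilde w-\tilde w_{0}|}\,\mathrm{d}A(\tilde w)\lesssim\delta^{1/2}$, the rescaled integral being finite uniformly in $\tilde w_{0}$ by H\"{o}lder with conjugate exponents $p<4$ and $q<2$ (so $|\tilde w|^{-1/2}\in L^{p}_{loc}$ and $|\tilde w-\tilde w_{0}|^{-1}\in L^{q}_{loc}$). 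For the second term, writing $z=u+iv$ and using Cauchy--Schwarz in $u$ with $p_{k}'\in L^{2}(\mathbb{R})$, $\iint_{\mathcal{U}_{-1}}\frac{|p_{k}'(u)|}{|z-z_{0}|}\,\mathrm{d}A(z)\lesssim\|p_{k}'\|_{L^{2}(\mathbb{R})}\int_{|v+1|\lesssim\delta}|v-v_{0}|^{-1/2}\,\mathrm{d}v\lesssim\delta^{1/2}$. Since $\delta^{1/2}=O(t^{-1/6})$, collecting all pieces gives $\|\mathcal{S}\|_{L^{\infty}\to L^{\infty}}\lesssim t^{-1/6}$; then for $t$ large $\mathbb{I}-\mathcal{S}$ is invertible by a Neumann series, so the solution $M^{(5)}$ of \eqref{132} exists and is unique, proving Proposition \ref{prop15}.

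I expect the main obstacle to be precisely the neighbourhood of $z=-1$. There the phase $f$ has a degenerate, cubic critical behaviour, the natural length scale shrinks only like $t^{-1/3}$ rather than $t^{-1/2}$, and $e^{\pm2itf}$ supplies at best weak decay, so the oscillatory-cancellation mechanism used at the ordinary stationary points $\xi_{1},\xi_{2}$ is unavailable. The resolution is to trade decay for the smallness of the disk $\mathcal{U}_{-1}$ itself: under the Painlev\'{e} rescaling $w=\delta\tilde w$ every singular factor of the integrand becomes scale-invariant up to the prefactor $\delta^{1/2}$, and one needs only the local $L^{p}/L^{q}$ integrability of $|\tilde w|^{-1/2}$ and of the Cauchy kernel — uniformly in the (possibly near-$-1$) base point $z_{0}$ — to close the estimate at order $t^{-1/6}$. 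A secondary technical point is verifying $p_{k}'\in L^{2}(\mathbb{R})$, which uses the mapping properties of the direct scattering transform on the weighted Sobolev space $H^{4,4}(\mathbb{R})$.
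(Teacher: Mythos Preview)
Your overall strategy --- bound $\|\mathcal{S}\|_{L^\infty\to L^\infty}$ by $\sup_z\iint|W^{(5)}||\zeta-z|^{-1}dA$, reduce to $|\bar\partial R^{(3)}|$ via boundedness of $M^{RHP}$, and extract $t^{-1/6}$ from the $t^{-1/3}$ length scale near $z=-1$ --- is the right one, and the paper (which states this proposition without proof, deferring to \cite{Wang-KDV19,Wang-KDV20}) proceeds along the same lines. Your rescaled computation inside $\mathcal{U}_{-1}$ is essentially correct once you center at $\xi_k$ rather than at $-1$: since $|\xi_k+1|\lesssim\delta$ by Proposition~\ref{prop11}, $|z-\xi_k|^{-1/2}$ rescales exactly as $|z+1|^{-1/2}$ does, yielding the same $\delta^{1/2}=O(t^{-1/6})$.

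There is, however, a genuine gap in your decomposition. The sectors $\Omega_3,\Omega_4,\overline{\Omega_3},\overline{\Omega_4}$ are \emph{not} contained in $\mathcal{U}_{-1}$: $\Omega_3$ runs from $\xi_3$ out to $\gamma_{3+}$ near $\xi_3/2\approx-1/2$, and $\Omega_4$ is unbounded. The portions with $|z+1|>\delta$ but $|z+1|$ small are covered by neither your ``inside $\mathcal{U}_{-1}$'' argument nor your ``fixed distance from $-1$'' argument, and invoking Proposition~\ref{prop4} there is illegitimate: that estimate relies on the quadratic phase bound $\mathrm{Re}(2itf)\gtrsim t|z-\xi_k|^2v$, which fails at a degenerate critical point. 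On this intermediate annulus $|e^{\pm 2itf}|$ is only $O(1)$ (in the scaled variable $|\hat k|\sim 1$), so bounding the exponential by $1$ over a region of fixed area gives $O(1)$, not $o(1)$.

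The standard remedy is to treat the full sectors $\Omega_3,\Omega_4$ without splitting: use the cubic phase estimate $\mathrm{Re}(\mp 2itf)\gtrsim t\,v(u^2+v^2)$ with $z-\xi_k=u+iv$ (the degenerate analogue of the quadratic bound), combine it with $|\bar\partial R_k|\lesssim|z-\xi_k|^{-1/2}+|p_k'(u)|$, and rescale $z-\xi_k=t^{-1/3}\tilde z$. The Jacobian $t^{-2/3}$, the factor $t^{1/6}$ from $|z-\xi_k|^{-1/2}$, and the $t^{1/3}$ from $|\zeta-z|^{-1}$ combine to $t^{-1/6}$, while the rescaled integral converges thanks to the $e^{-c\tilde v(\tilde u^2+\tilde v^2)}$ decay. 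This single step replaces both halves of your split and closes the argument.
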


Analogously, to reconstruct the potential $q(x, t)$, we need  to discuss the long
time asymptotic behaviors of $M^{(5)}_{1}$ which is defined in the
asymptotic expansion of $M^{(5)}(z)$ as $z\rightarrow \infty$, given by
\begin{align}\label{136}
M^{(5)}(z)=\mathbb{I}+\frac{M^{(5)}_{1}}{z}+O(z^{-2}), \quad z\rightarrow \infty,
 \end{align}
\begin{align}
M^{(5)}_{1}=\frac{1}{\pi}\iint_{\mathbb{C}}M^{(5)}(\zeta)W^{(5)}(\zeta)\mathrm{d}A(\zeta).\nonumber
\end{align}
Take $z=0$ in \eqref{132}, then
\begin{align}
M^{(5)}(0)=\mathbb{I}-\frac{1}{\pi}\iint_{\mathbb{C}}\frac{M^{(5)}(\zeta)W^{(5)}(\zeta)}{\zeta}\mathrm{d}A(\zeta).\nonumber
\end{align}
The $M^{(5)}_{1},M^{(5)}(0)$ possess the following proposition.

\begin{prop}\label{prop16}
For large time $t$, $M^{(5)}_{1},M^{(5)}(0)$  admit the following inequality
\begin{align}
\left\vert M^{(5)}_{1}\right\vert \lesssim t^{-\frac{1}{2}},\qquad \left\vert M^{(5)}(0)-\mathbb{I}\right\vert \lesssim t^{-\frac{1}{2}}.\nonumber
 \end{align}
\end{prop}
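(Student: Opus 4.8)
The plan is to run the familiar residual estimate for a pure $\bar{\partial}$-problem, the only new ingredient being the degeneracy of the phase $f$ near $z=-1$, which is what forces the slower rate $t^{-1/2}$ here instead of the $t^{-3/4}$ of the oscillating region. First I would use Proposition~\ref{prop15}: it says $\mathbb{I}-\mathcal{S}$ is invertible on $L^{\infty}(\mathbb{C})$ with inverse bounded uniformly in $t$, so $\|M^{(5)}\|_{L^{\infty}(\mathbb{C})}\lesssim1$. Next I would record that $M^{RHP}$ and $(M^{RHP})^{-1}$ are uniformly bounded on $\mathbb{C}$: outside the three disks $M^{RHP}=E$ with $E=\mathbb{I}+O(z^{-1})$ bounded and regular at $z=0$ in view of \eqref{126}--\eqref{129}, and inside the disks $M^{RHP}=E\,M^{(\xi_k)}$ or $E\,M^{(-1)}$ where the local models are bounded together with their inverses. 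Combined with \eqref{131} this gives the pointwise bound $|W^{(5)}(z)|\lesssim|\bar{\partial}R^{(3)}(z)|$, so from the integral formulas for $M^{(5)}_1$ and $M^{(5)}(0)$ it will be enough to show $\iint_{\mathbb{C}}|\bar{\partial}R^{(3)}(\zeta)|\,\mathrm{d}A(\zeta)\lesssim t^{-1/2}$ (the extra factor $|\zeta|^{-1}$ in the formula for $M^{(5)}(0)$ is harmless, being locally integrable, and $\bar{\partial}R^{(3)}$ meets a neighbourhood of $z=0$ only through the bounded sectors $\Omega_{0j}$).

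I would then split $\mathrm{supp}(\bar{\partial}R^{(3)})$ into three groups: (a) the sectors attached to $\xi_1$ and $\xi_2$; (b) the sectors $\Omega_3,\overline{\Omega_3},\Omega_4,\overline{\Omega_4}$ attached to $z=-1$, together with the $\Omega_{0j}$; (c) the unbounded tails ($\bar{\partial}R^{(3)}\equiv0$ on $\Omega_5\cup\overline{\Omega_5}$). For the tails in (c), $|\bar{\partial}R^{(3)}|\lesssim|p'(\mathrm{Re}\,z)|$ while $|e^{\pm2itf}|\le e^{-ct}$ (these rays lie strictly inside the decaying regions of Fig.~6 and carry no stationary point), so they contribute $O(e^{-ct})$. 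For group (a) I would repeat the estimate underlying Proposition~\ref{prop5}: in polar coordinates $z-\xi_k=\varrho e^{i\psi}$ one has $|e^{\mp2itf}|\le e^{-ct\varrho^{2}}$ on the opened contours because $f''(\xi_k)\neq0$; splitting $\bar{\partial}R_{kj}$ into a part $\lesssim|z-\xi_k|^{-1/2}$ and a part $\lesssim|p_{kj}'(\mathrm{Re}\,z)|$, one gets $\iint|z-\xi_k|^{-1/2}e^{-ct\varrho^{2}}\,\mathrm{d}A\lesssim t^{-3/4}$ and, by Cauchy--Schwarz in $\mathrm{Re}\,z$ with $p_{kj}'\in L^{2}(\mathbb{R})$ (guaranteed by $q_0\in\tanh x+H^{4,4}(\mathbb{R})$, which gives $r\in H^{1}(\mathbb{R})$), $\iint|p_{kj}'|e^{-ct\varrho^{2}}\,\mathrm{d}A\lesssim t^{-3/4}$. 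Hence group (a) is $O(t^{-3/4})$.

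The decisive step, and the one I expect to be the main obstacle, is group (b). Near $z=-1$ I would pass to the scaled variable $\hat{k}=(\tfrac{15}{4}t)^{1/3}(z+1)$, so that $\mathrm{d}A(z)=(\tfrac{15}{4}t)^{-2/3}\mathrm{d}A(\hat{k})$ and $\Omega_k,\overline{\Omega_k}$ ($k=3,4$) map into a fixed disk; the parts lying outside $\mathcal{U}_{-1}$ still decay like $e^{-ct}$ and are negligible, while on the bounded image $|e^{2itf}|\lesssim1$ since $2itf=2i(\tfrac43\hat{k}^{3}+s\hat{k})+O(\hat{k}^{4}t^{-1/3})$ with $s$ and $\hat{k}$ bounded (Propositions~\ref{prop11} and \ref{prop12}). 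By Proposition~\ref{prop10}, $|\bar{\partial}R_k|\lesssim|z-\xi_k|^{-1/2}+|p_k'(\mathrm{Re}\,z)|\lesssim(\tfrac{15}{4}t)^{1/6}|\hat{k}-\hat{k}_k|^{-1/2}+|p_k'(\mathrm{Re}\,z)|$, and since $\hat{k}_k$ stays in the bounded disk by Proposition~\ref{prop11} with $|\hat{k}-\hat{k}_k|^{-1/2}$ integrable there (and the $p_k'$-term again handled by Cauchy--Schwarz after rescaling), this yields $\iint_{\Omega_k}|\bar{\partial}R_k\,e^{2itf}|\,\mathrm{d}A(z)\lesssim(\tfrac{15}{4}t)^{1/6}(\tfrac{15}{4}t)^{-2/3}\lesssim t^{-1/2}$, likewise for $\overline{\Omega_k}$; the $\Omega_{0j}$ carry no stationary point and contribute at most $O(t^{-1})$. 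Adding the three groups gives $|M^{(5)}_1|\lesssim t^{-3/4}+t^{-1/2}+O(e^{-ct})\lesssim t^{-1/2}$, and the same splitting applied to the weighted integral gives $|M^{(5)}(0)-\mathbb{I}|\lesssim t^{-1/2}$. The hard part is exactly this balance in group (b) — the $(\tfrac{15}{4}t)^{1/6}$ growth of $\bar{\partial}R_k$ against the $(\tfrac{15}{4}t)^{-2/3}$ area contraction — which is precisely what lowers the residual estimate from $t^{-3/4}$ to $t^{-1/2}$.
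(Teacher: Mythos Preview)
Your overall strategy is sound, and you correctly identify that the $t^{-1/2}$ rate is forced by the cubic degeneracy at $z=-1$ through the balance $(\tfrac{15}{4}t)^{1/6}\cdot(\tfrac{15}{4}t)^{-2/3}=O(t^{-1/2})$; the paper itself states this proposition without proof, so your outline is a reasonable substitute. The treatment of group~(a) via the quadratic phase $f''(\xi_k)\neq0$ and Cauchy--Schwarz is standard and correct.

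There is, however, a genuine gap in your handling of group~(b). You assert that under the scaling $\hat{k}=(\tfrac{15}{4}t)^{1/3}(z+1)$ the sectors $\Omega_k,\overline{\Omega_k}$ ($k=3,4$) ``map into a fixed disk'' and that ``the parts lying outside $\mathcal{U}_{-1}$ still decay like $e^{-ct}$.'' Neither is true. The ray $\gamma_4=\xi_4+e^{i(\pi-\phi)}\mathbb{R}_+$ is unbounded, and $\Omega_3$ has $z$-diameter of order $|\xi_3|/2\sim\tfrac12$; under the scaling both images have size $O(t^{1/3})$, not $O(1)$. Moreover, for $z\in\Omega_4$ with $|z+1|\sim t^{-1/3+\delta}$ (just outside the shrinking disk $\mathcal{U}_{-1}$) the phase gives only $|e^{2itf}|\lesssim e^{-ct^{3\delta}}$, and on the real-axis boundary of the sector $f$ is real so there is no decay at all. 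Thus restricting to $\mathcal{U}_0$ and discarding the rest is not justified.

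The fix is to keep the scaling but bound the full unbounded $\hat{k}$-integral using the cubic decay of the phase. Writing $\hat{k}-\hat{k}_4=\rho e^{i(\pi-\psi)}$ with $0<\psi<\phi$, one has $|e^{2itf}|\lesssim e^{-c\rho^{3}\sin(3\psi)}$ for large $\rho$ (the $O(\hat k^{4}t^{-1/3})$ error in the phase being harmless), and then
\[
\int_{0}^{\phi}\!\int_{0}^{\infty}\rho^{1/2}\,e^{-c\rho^{3}\sin(3\psi)}\,d\rho\,d\psi
\;\lesssim\;\int_{0}^{\phi}\sin(3\psi)^{-1/2}\,d\psi\;<\;\infty,
\]
which is the uniform bound you need for $\iint_{\hat{\Omega}_4}|\hat{k}-\hat{k}_4|^{-1/2}|e^{2itf}|\,dA(\hat{k})$; the argument for $\hat{\Omega}_3$ is analogous. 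The $|p_k'(\mathrm{Re}\,z)|$ piece must likewise be paired with this cubic exponential (not the quadratic one from group~(a), since $f''(\xi_k)\to0$ here). Once this is in place, your balance argument yields the claimed $O(t^{-1/2})$.
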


\begin{prop}\label{prop16.1}
For large time $t$, $M^{(3)}(0)$  satisfies the estimate
\begin{align}\label{139}
\left\vert M^{(3)}(0)\right\vert=E(0)+t^{-\frac{1}{2}},
 \end{align}
 where $E(0)$ is given by \eqref{129}.
\end{prop}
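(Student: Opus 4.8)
The plan is to evaluate $M^{(3)}$ at the origin by unwinding the chain of transformations \eqref{107}, \eqref{111} and \eqref{113}, and then to insert the small-norm control on $M^{(5)}$ supplied by Proposition \ref{prop16}. From \eqref{107} and \eqref{111} one has, wherever all objects are defined,
\[
M^{(3)}(z)=M^{(4)}(z)\bigl(R^{(3)}(z)\bigr)^{-1}=M^{(5)}(z)\,M^{RHP}(z)\,\bigl(R^{(3)}(z)\bigr)^{-1},
\]
so the claim reduces to understanding the three factors at $z=0$.

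First I would check that $z=0$ is harmless for all three reductions. The disks in \eqref{113} are centered at $-1$, $\xi_1$ and $\xi_2$: the first has radius $(\frac{15}{4}t)^{-1/3}\varepsilon\to0$, while $\mathcal{U}_{\xi_1},\mathcal{U}_{\xi_2}$ have the fixed radius $\rho$ but $\xi_1,\xi_2$ are bounded away from $0$ by \eqref{38}; hence $0\notin\mathcal{U}_{-1}\cup\mathcal{U}_{\xi_1}\cup\mathcal{U}_{\xi_2}$ for $t$ large, and \eqref{113} gives $M^{RHP}(0)=E(0)$. For the extension factor, recall that $z=0$ is a puncture coming from the uniformization rather than a stationary phase point: combining the symmetry $r(1/z)=-\overline{r(\bar z)}$ with $r(\infty)=0$ forces $r(0)=0$, so $J^{(3)}=J^{(2)}$ equals the identity at the origin, the extension functions $R_{0j}$ of Proposition \ref{prop10} vanish there, and — since $e^{2itf}$ stays bounded in the sectors $\Omega_{0j}$ where it is used — $R^{(3)}(z)\to\mathbb{I}$ as $z\to0$ from any sector. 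Consequently $M^{(4)}$ has a genuine limit at $z=0$ and
\[
M^{(3)}(0)=M^{(4)}(0)=M^{(5)}(0)\,E(0).
\]

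It then remains to bound the two surviving factors. Proposition \ref{prop16} gives $|M^{(5)}(0)-\mathbb{I}|\lesssim t^{-1/2}$, and \eqref{129} shows that $E(0)=E_0=\mathbb{I}-M^{(-1)}_1(\frac{15}{4}t)^{-1/3}+O(t^{-1/2})$ is bounded uniformly in $t$ (the Painlev\'e parametrix coefficient $M^{(-1)}_1$ being $t$-independent and bounded). Writing $M^{(5)}(0)E(0)=E(0)+\bigl(M^{(5)}(0)-\mathbb{I}\bigr)E(0)$ and using $|E(0)|=O(1)$ yields $M^{(3)}(0)=E(0)+O(t^{-1/2})$, which is the assertion \eqref{139}. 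I expect the only genuinely delicate step to be the identification $R^{(3)}(0)=\mathbb{I}$, which relies on the $\bar{\partial}$-extension of Proposition \ref{prop10} having been arranged to be trivial in a full neighborhood of the (post-deformation regular) point $z=0$; once that is granted, the proposition is a one-line composition of the transformations together with the already-proven Proposition \ref{prop16}.
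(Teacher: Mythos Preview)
Your proof is correct and follows exactly the approach the paper has in mind (the proposition is stated without proof there, but your argument is the natural one given the surrounding machinery): unwind \eqref{107}, \eqref{111}, \eqref{113} at $z=0$, use $R^{(3)}(0)=\mathbb{I}$ and $M^{RHP}(0)=E(0)$, and finish with the bound $|M^{(5)}(0)-\mathbb{I}|\lesssim t^{-1/2}$ from Proposition~\ref{prop16}. One minor inaccuracy: $M^{(-1)}_1$ is not literally $t$-independent --- it depends on the Painlev\'e variable $s$ from \eqref{115} --- but in the transition region $|s|$ is bounded, so your conclusion $|E(0)|=O(1)$ still holds; also, the boundedness of $e^{\pm2itf}$ in the sectors $\Omega_{0j}$ near $z=0$ is genuine (these sectors lie in the decay region by the choice of $\phi$), so your identification $R^{(3)}(0)=\mathbb{I}$ is justified.
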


\subsection{The final step}
Now, we are ready to give the proof of Theorem \ref{thm2} as $t\rightarrow\infty$. In terms of the transformations \eqref{41}, \eqref{102}, \eqref{107}, \eqref{113}, \eqref{130}, then the solution of RH problem \ref{rhp1} is
given by
\begin{align}\label{140}
M(z)=(I+\frac{\sigma_{1}}{z}M^{(3)}(0)^{-1})M^{(5)}(z)E(z)\delta^{-\sigma_{3}}(z)+O(e^{-ct}).
 \end{align}
Considering that \eqref{126},\eqref{128},\eqref{129},\eqref{136},\eqref{139},  the reconstruction formula \eqref{27} arrives at
\begin{align}\label{141}
q(x,t)&=-i\left(-i-i\frac{[M^{(-1)}_{1}]_{22}}{(\frac{15}{4}t)^{1/3}}-\frac{[M^{(-1)}_{1}]_{12}}{(\frac{15}{4}t)^{1/3}}\right)+O(t^{-\frac{1}{2}})\notag\\
&=-1-\frac{i}{2}(\frac{15}{4}t)^{-\frac{1}{3}}\left(\int_{s}^{\infty}u^{2}(\zeta)d\zeta+e^{i\varphi_{0}}u(s)\right)+O(t^{-\frac{1}{2}}).
 \end{align}

\begin{appendices}
\section{}\label{A}
In Appendices \ref{A}, we aim to solve the model RH problem $[M_{1}^{(\xi_{1})}]_{12}$ explicitly by
introducing the following transformation(see Fig. 4)
\begin{align}
M^{mod}=M^{(\xi_{1})}G_{j},\qquad  s \in \Omega_{j},\qquad j=0, \cdots, 4,\nonumber
\end{align}
where
\begin{gather}
G_{0}=e^{\frac{1}{4}is^{2}\sigma_{3}}s^{-i\nu(\xi_{1})\sigma_{3}},\notag\\
G_{1}=G_{0}\left(\begin{array}{cc}
    1  &  -\frac{\overline{r_{\xi_{1}}}}{1-\left\vert r_{\xi_{1}}\right\vert^{2}}\\
     0  &  1\\
\end{array}\right),\quad G_{2}=G_{0}\left(\begin{array}{cc}
    1  &  0\\
     -\frac{r_{\xi_{1}}}{1-\left\vert r_{\xi_{1}}\right\vert^{2}}  &  1\\
\end{array}\right),\notag\\
G_{3}=G_{0}\left(\begin{array}{cc}
    1  &  \overline{r_{\xi_{1}}}\\
    0  &  1\\
\end{array}\right),\quad G_{4}=G_{0}\left(\begin{array}{cc}
    1  &  0\\
    r_{\xi_{1}}  &  1\\
\end{array}\right).\nonumber
\end{gather}
Through this transformation, we construct a model RH problem for $M^{mod}$ with a constant jump matrix.
\begin{rhp}\label{rhp9}
 Find an analysis function $M^{mod}(s)$ with the
following properties:\\

 $\bullet$ $M^{mod}(s)$ is meromorphic in $\mathbb{C}\setminus \mathbb{R}$;\\

 $\bullet$ $M_{+}^{mod}(s)=M_{-}^{mod}(s)J^{mod}(\xi_{1}), s\in \mathbb{R}$, where
\begin{align}
J^{mod}(\xi_{1})=\left(\begin{array}{cc}
    1-\left\vert r_{\xi_{1}}\right\vert^{2}  &   -\overline{r_{\xi_{1}}}\\
     r_{\xi_{1}} &  1\\
\end{array}\right);\nonumber
\end{align}

$\bullet$ $M^{mod}(s)\rightarrow e^{\frac{1}{4}is^{2}\sigma_{3}}s^{-i\nu(\xi_{1})\sigma_{3}}$ as $s\rightarrow\infty$.
\end{rhp}

According to the Liouville's theorem and parabolic cylinder functions, this RH problem can be solved explicitly. It is not hard to find that $\frac{\mathrm{d}}{\mathrm{d}s}M^{mod}(M^{mod})^{-1}$ possesses continuous jump along any of the rays,  and then it admits
\begin{align}
\frac{\mathrm{d}}{\mathrm{d}s}M^{mod}+\left(\begin{array}{cc}
    -\frac{i}{2}s  &  \psi_{1}\\
  \psi_{2} &  \frac{i}{2}s\\
\end{array}\right)M^{mod}=0,\nonumber
\end{align}
where $\psi_{1}=i\left[M_{1}^{(\xi_{1})}\right]_{12}, \psi_{2}=-i\left[M_{1}^{(\xi_{1})}\right]_{21}$.
It can be solved as
\begin{align}
M^{mod}=\left(\begin{array}{cc}
    M_{11}^{mod}  &  \frac{\frac{i}{2}s M_{22}^{mod}+\frac{\mathrm{d}M_{22}^{mod}}{\mathrm{d}s}}{-\psi_{2}}\\
  \frac{-\frac{i}{2}s M_{11}^{mod}+\frac{\mathrm{d}M_{11}^{mod}}{\mathrm{d}s}}{-\psi_{1}} &  M_{22}^{mod}\\
\end{array}\right),\nonumber
\end{align}
where the functions $M_{jj}^{mod}, j=1, 2,$ satisfy  the following equations
\begin{align}
&\frac{\mathrm{d}^{2}}{\mathrm{d}s^{2}}M_{11}^{mod}+(-\frac{i}{2}-\psi_{1}\psi_{2}+\frac{s^{2}}{4})M_{11}^{mod}=0,\notag\\
&\frac{\mathrm{d}^{2}}{\mathrm{d}s^{2}}M_{22}^{mod}+(\frac{i}{2}-\psi_{1}\psi_{2}+\frac{s^{2}}{4})M_{22}^{mod}=0.\nonumber
\end{align}
 Due to the above equations are standard parabolic cylinder equation and $M_{11}^{mod}\rightarrow e^{\frac{1}{4}is^{2}}s^{-i\nu(\xi_{1})},$ $ M_{22}^{mod}\rightarrow e^{-\frac{1}{4}is^{2}}s^{i\nu(\xi_{1})}, s\rightarrow\infty$, one has
\begin{align}
M_{11}^{mod}=\left\{
\begin{array}{lr}
(e^{\frac{3\pi}{4}i})^{i\nu(\xi_{1})}D_{-i\nu(\xi_{1})}(e^{\frac{3\pi}{4}i}s) \qquad \mbox{Im}(s)<0,\\
\\
(e^{-\frac{\pi}{4}i})^{i\nu(\xi_{1})}D_{-i\nu(\xi_{1})}(e^{-\frac{\pi}{4}i}s) \qquad \qquad \mbox{Im}(s)>0,
  \end{array}
\right.\nonumber
\end{align}
\begin{align}
M_{22}^{mod}=\left\{
\begin{array}{lr}
(e^{\frac{\pi}{4}i})^{-i\nu(\xi_{1})}D_{i\nu(\xi_{1})}(e^{\frac{\pi}{4}i}s) \ \qquad \mbox{Im}(s)<0,\\
\\
(e^{-\frac{3\pi}{4}i})^{-i\nu(\xi_{1})}D_{i\nu(\xi_{1})}(e^{-\frac{3\pi}{4}i}s) \quad \qquad \mbox{Im}(s)>0.
  \end{array}
\right.\nonumber
\end{align}
Then, we obtain
\begin{gather}
M^{mod}_{-}(s)^{-1}M^{mod}_{+}(s)=M^{mod}_{-}(0)^{-1}M^{mod}_{+}(0)=\notag\\
\left(\begin{array}{cc}
  e^{-\frac{3}{4}\pi\nu}\frac{2^{\frac{-i\nu}{2}}\sqrt{\pi}}{\Gamma(\frac{1+i\nu}{2})}    & e^{\frac{\pi}{4}i}e^{\frac{\pi\nu}{4}}\frac{2^{\frac{1+i\nu}{2}}\sqrt{\pi}}{\psi_{2}\Gamma(\frac{-i\nu}{2})} \\
  e^{\frac{3\pi}{4}i}e^{-\frac{3}{4}\pi\nu}\frac{2^{\frac{1-i\nu}{2}}\sqrt{\pi}}{\psi_{1}\Gamma(\frac{i\nu}{2})} &   e^{\frac{\pi\nu}{4}}\frac{2^{\frac{i\nu}{2}}\sqrt{\pi}}{\Gamma(\frac{1-i\nu}{2})}\\
\end{array}\right)^{-1}\notag\\
\left(\begin{array}{cc}
  e^{\frac{\pi\nu}{4}}\frac{2^{\frac{-i\nu}{2}}\sqrt{\pi}}{\Gamma(\frac{1+i\nu}{2})}    & e^{-\frac{3\pi}{4}i}e^{-\frac{3}{4}\pi\nu}\frac{2^{\frac{1+i\nu}{2}}\sqrt{\pi}}{\psi_{2}\Gamma(\frac{-i\nu}{2})} \\
  e^{-\frac{\pi}{4}i}e^{\frac{\pi}{4}\nu}\frac{2^{\frac{1-i\nu}{2}}\sqrt{\pi}}{\psi_{1}\Gamma(\frac{i\nu}{2})} &   e^{-\frac{3}{4}\pi\nu}\frac{2^{\frac{i\nu}{2}}\sqrt{\pi}}{\Gamma(\frac{1-i\nu}{2})}\\
\end{array}\right)\notag\\
=\left(\begin{array}{cc}
    1-\left\vert r_{\xi_{1}}\right\vert^{2}  &   -\overline{r_{\xi_{1}}}\\
     r_{\xi_{1}} &  1\\
\end{array}\right),\nonumber
\end{gather}
which leads to
\begin{align}\label{A9}
[M_{1}^{(\xi_{1})}]_{12}=\frac{\sqrt{2\pi}e^{-\frac{\pi}{4}i-\frac{\pi\nu(\xi_{1})}{2}}}{ir_{\xi_{1}}\Gamma(i\nu(\xi_{1}))}.
\end{align}
Carrying out similar procedures, we easily derive
\begin{align}\label{A10}
[M_{1}^{(\xi_{2})}]_{12}=\frac{\sqrt{2\pi}e^{\frac{\pi}{4}i-\frac{\pi\nu(\xi_{2})}{2}}}{ir_{\xi_{2}}\Gamma(-i\nu(\xi_{2}))},\
[M_{1}^{(\xi_{3})}]_{12}=\frac{\sqrt{2\pi}e^{-\frac{\pi}{4}i-\frac{\pi\nu(\xi_{3})}{2}}}{ir_{\xi_{3}}\Gamma(i\nu(\xi_{3}))},\
[M_{1}^{(\xi_{4})}]_{12}=\frac{\sqrt{2\pi}e^{\frac{\pi}{4}i-\frac{\pi\nu(\xi_{4})}{2}}}{ir_{\xi_{4}}\Gamma(-i\nu(\xi_{4}))}.
\end{align}
\section{}\label{B}
Here we transform the RH problem \ref{rhp14} into a standard Painlev\'{e} II equation via suitable deformation. Firstly, we add four new auxiliary wires $X_{j}, j = 1, 2, 3, 4$ going through  the point $\hat{k}=0$
at the  $\frac{\pi}{3}$ angle with real axis. Then the complex
plane is divided into eight regions $\Omega_{j}, j=1, \cdots , 8$, See Fig. 9.\\
\centerline{\begin{tikzpicture}
\draw[->][dashed,thick](0,0)--(0.5,0.85);
\draw[-][dashed,thick](0.5,0.85)--(1,1.7);
\draw[->][dashed,thick](0,0)--(0.5,-0.85);
\draw[-][dashed,thick](0.5,-0.85)--(1,-1.7);
\draw[-][dashed,thick](0,0)--(-0.5,0.85);
\draw[<-][dashed,thick](-0.5,0.85)--(-1,1.7);
\draw[-][dashed,thick](0,0)--(-0.5,-0.85);
\draw[<-][dashed,thick](-0.5,-0.85)--(-1,-1.7);
\draw[-][dashed,thick](-3,0)--(-1,0);
\draw[-][dashed,thick](1,0)--(3,0);
\draw[-][thick](-1,0)--(0,0) circle [radius=0.035]node[below]{$0$};
\draw[->][thick](0,0)--(0.5,0);
\draw[-][thick](0.5,0)--(1,0);
\draw[-][thick](-2,1)--(-1,0);
\draw[->][thick](-3,2)--(-2,1);
\draw[-][thick](-2,-1)--(-1,0);
\draw[->][thick](-3,-2)--(-2,-1);
\draw[<-][thick](2,1)--(1,0);
\draw[-][thick](3,2)--(2,1);
\draw[<-][thick](2,-1)--(1,0);
\draw[-][thick](3,-2)--(2,-1);
\draw[fill] (-1,0) node[below]{$\hat{k}_{4}$};
\draw[fill] (1,0) node[below]{$\hat{k}_{3}$};
\draw[fill] [cyan](-1,1.3) node{$X_{2}$};
\draw[fill] [cyan](-1,-1.4) node{$X_{3}$};
\draw[fill] [cyan](1,1.3) node{$X_{1}$};
\draw[fill] [cyan](1,-1.4) node{$X_{4}$};
\draw[fill] [red](-1.5,1) node{$\Omega_{3}$};
\draw[fill] [red](-1.5,-1) node{$\Omega_{4}$};
\draw[fill] [red](1.5,1) node{$\Omega_{1}$};
\draw[fill] [red](1.5,-1) node{$\Omega_{6}$};
\draw[fill] [red](0,1) node{$\Omega_{2}$};
\draw[fill] [red](0,-1) node{$\Omega_{5}$};
\end{tikzpicture}}
\centerline{\noindent {\small \textbf{Figure 9.} (Color online) The deformation of jump contour between $\tilde{M}^{(0)}$ and $M^{(0)}$.}}

Via defining
\begin{align}
P(\hat{k})=
\begin{cases}
b_{+}^{-1},\qquad \hat{k}\in \Omega_{1}\cup\Omega_{3},\\
b_{-},\qquad \hat{k}\in \Omega_{4}\cup\Omega_{6},\\
\mathbb{I}, \qquad \mbox{others}
\end{cases}\nonumber
\end{align}
and performing a transformation
\begin{align}\label{B2}
\tilde{M}^{(0)}(\hat{k})=M^{(0)}(\hat{k})P(\hat{k}),
\end{align}
we generate the following RH problem.
\begin{rhp}\label{rhpB1}
The analysis function $M^{(0)}(\hat{k})$ has the
following properties:\\

 $\bullet$ $\tilde{M}^{(0)}(\hat{k})$ is meromorphic in $\mathcal{U}_{0}\setminus \Sigma_{P}$, where $\Sigma_{P}=\bigcup_{j=1}^{4}X_{j}$;\\

 $\bullet$ $\tilde{M}_{+}^{(0)}(\hat{k})=\tilde{M}^{(0)}_{-}(\hat{k})\tilde{J}^{(0)}(\hat{k}), \hat{k}\in \Sigma_{P}$, where
\begin{align}
\tilde{J}^{(0)}(\hat{k})=
\begin{cases}
b_{+},\qquad \hat{k}\in X_{1}\cup X_{2},\\
b_{-},\qquad \hat{k}\in X_{3}\cup X_{4},\\
\mathbb{I}, \qquad \mbox{others};
\end{cases}\nonumber
\end{align}

$\bullet$ $\tilde{M}^{(0)}(\hat{k})\rightarrow\mathbb{I}$ as $\hat{k}\rightarrow\infty$.
\end{rhp}
Let $\varphi_{0}=arg R(-1),$ so $R(-1)=\vert R(-1)\vert e^{i\varphi_{0}}$. Then, we find
\begin{align}\label{B4}
\tilde{M}^{(0)}(\hat{k})=\sigma_{1}e^{-\frac{i\varphi_{0}}{2}\hat{\sigma}_{3}}M^{P}(\hat{k})\sigma_{1},
\end{align}
where $M^{P}(\hat{k})$ becomes a standard Painlev\'{e} II model, whose solution  is given by
\begin{align}
M^{P}(\hat{k})=I+\frac{M_{1}^{P}(s)}{\hat{k}}+O(\hat{k}^{-2}),\nonumber
\end{align}
where
\begin{align}
M_{1}^{P}(s)=\frac{1}{2}\left(\begin{array}{cc}
    -i\int_{s}^{\infty}u^{2}(\zeta)d\zeta  &   u(s)\\
     u(s) &  i\int_{s}^{\infty}u^{2}(\zeta)d\zeta\\
\end{array}\right),\nonumber
\end{align}
and $u(s)$ is a solution of the Painlev\'{e} II equation
\begin{align}
u_{ss}=2u^{3}+su,\qquad s\in \mathbb{R}.\nonumber
\end{align}
Considering the transformations \eqref{B2}, \eqref{B4}, and expanding $M^{(0)}(\hat{k})$ along the region
$\Omega_{2}$ or $\Omega_{5}$, we have
\begin{align}
M^{(0)}(\hat{k})=I+\frac{M_{1}^{(0)}(s)}{\hat{k}}+O(\hat{k}^{-2}),\nonumber
\end{align}
where
\begin{align}
M_{1}^{(0)}(s)=\frac{i}{2}\left(\begin{array}{cc}
    -\int_{s}^{\infty}u^{2}(\zeta)d\zeta  &   e^{i\varphi_{0}}u(s)\\
     e^{-i\varphi_{0}}u(s) &  \int_{s}^{\infty}u^{2}(\zeta)d\zeta\\
\end{array}\right).\nonumber
\end{align}
According to Proposition \ref{prop13},one has
\begin{align}
M_{1}^{(-1)}(s)=M_{1}^{(0)}(s)+O(t^{-1/6})=\frac{i}{2}\left(\begin{array}{cc}
    -\int_{s}^{\infty}u^{2}(\zeta)d\zeta  &   e^{i\varphi_{0}}u(s)\\
     e^{-i\varphi_{0}}u(s) &  \int_{s}^{\infty}u^{2}(\zeta)d\zeta\\
\end{array}\right)+O(t^{-1/6}).\nonumber
\end{align}
\end{appendices}

\section*{Acknowledgements}
\hspace{0.3cm}
This work was supported by the National Natural Science Foundation of China (No. 12175069 and No. 12235007), Science and Technology Commission of Shanghai Municipality (No. 21JC1402500 and No. 22DZ2229014) and Natural Science Foundation of Shanghai (No. 23ZR1418100).

\end{document}